\documentclass[a4paper,10pt]{article}
\usepackage{fullpage}

\usepackage{enumitem}
\usepackage{natbib}

\usepackage[toc,page,header]{appendix}
\usepackage{minitoc}

\usepackage[most]{tcolorbox}

\usepackage[margin=1in]{geometry}

\usepackage{tcolorbox}

\usepackage{bbm}

\usepackage{footnote}

\usepackage{enumitem}
\setlist[itemize]{leftmargin=*}

\usepackage[pdfencoding=auto, psdextra, pagebackref=true]{hyperref}
\hypersetup{
    colorlinks,
    linkcolor={blue!50!black},
    citecolor={blue!50!black},
    urlcolor={blue!80!black}
}

\renewcommand*{\backref}[1]{}

\renewcommand*{\backrefalt}[4]{%
  \ifcase #1%
  \or
    \space(Cited on p.~#2)%
  \else
    \space(Cited on pp.~#2)%
  \fi
}

\usepackage{url}
\usepackage{graphicx}
\usepackage{subcaption}

\usepackage{float}
\usepackage{algorithm}
\usepackage{algorithmic}
\usepackage{pdflscape}
\usepackage{multirow}
\usepackage{adjustbox}
\usepackage{dsfont}
\usepackage{tikz}
\usepackage{multicol}
\usepackage{comment}
\usepackage{array}

\usepackage{booktabs}
\usepackage{makecell}

\usepackage{amsthm,amsmath,amssymb,mathtools,bm}

\usepackage[capitalize,noabbrev]{cleveref}

\usepackage{xcolor}

\renewcommand{\eqref}[1]{(\ref{#1})}

\theoremstyle{plain}
\newtheorem{theorem}{Theorem}[section]
\newtheorem{lemma}{Lemma}[section]
\newtheorem{proposition}{Proposition}[section]
\newtheorem{corollary}{Corollary}[section]

\theoremstyle{definition}
\newtheorem{definition}{Definition}[section]
\newtheorem{assumption}{Assumption}[section]

\theoremstyle{remark}
\newtheorem{remark}{Remark}[section]

\usepackage{enumitem}

\DeclareMathOperator*{\argmin}{arg\,min}

\newcommand{\NE}{\operatorname{NE}}

\newcommand{\Reg}{\operatorname{Reg}}
\newcommand{\Gap}{\operatorname{Gap}}

\newcommand{\diam}{\operatorname{diam}}

\begin{document}

\title{Learning in Continuous Games\\ from  Pairwise Preference Feedback}

\author{Anas Barakat\footnote{Singapore University of Technology and Design.}}

\date{}

\maketitle

\begin{abstract}
We study learning in continuous games when players receive only pairwise
preference feedback, revealing which of two actions is preferred but neither
payoff values nor preference magnitudes. We first show that standard external
regret and coarse correlated equilibria (CCE) are not identifiable from this
ordinal information: the same sequence of play can incur zero and linear
regret in two ordinally equivalent games, while the distributions that remain
CCE across all cardinal representations consistent with the same preferences 
are exactly those supported on pure Nash equilibria. 
Motivated by this gap, we develop a first-order ordinal theory based on
normalized unilateral preference directions, introducing an ordinal
directional regret benchmark and corresponding equilibrium notions.
We show that block-normalized pseudogradient dynamics achieve sublinear ordinal regret and, under additional structure, Nash-convergence guarantees. We then use a single-comparison estimator to implement these dynamics from finite pairwise comparisons. With one comparison per player and round, the resulting algorithm achieves sublinear finite-resolution ordinal regret against arbitrary opponent behavior and, in ordinal potential games, almost-sure last-iterate convergence to the Nash set. Our results provide regret, dynamics, and equilibrium guarantees directly from preference feedback without reconstructing cardinal utilities.
\end{abstract}

\section{Introduction}

Preference feedback is especially useful for learning when numerical rewards are difficult to specify or observe, but alternatives can still be compared reliably. For instance, humans can compare large language model responses without assigning calibrated scores in reinforcement learning from human feedback \citep{christiano-et-al17deep-rl-human-pref,ouyang-et-al22llms-hf}; users reveal relative preferences over items in recommendation systems \citep{bobadilla-et-al13recommender-sys-survey,jannach-et-al16recommender-systems}; and preference-based robot learning uses comparisons between candidate behaviors or trajectories when numerical objectives are difficult to specify \citep{hejna-sadigh22fewshot}.
In continuous optimization, such pairwise comparisons can suffice to optimize an objective without observing its function values \citep{saha-et-al21dueling}.
 
In games, this distinction between ordinal preferences and numerical utilities becomes especially consequential.
A strategic game can be specified through the
players' preference relations over joint action profiles, with payoff
functions providing numerical representations of these preferences
\citep{osborne-rubinstein94}. 
In particular, best responses and pure Nash equilibria depend only on which actions each player prefers, holding the other players’ actions fixed.
However, much of existing theory on learning in games is formulated in cardinal terms: external regret aggregates numerical payoff differences and its vanishing is related to coarse correlated equilibria \citep{fudenberg-levine98,cesabianchi-lugosi06book,
nisan-et-al07agt-book}, while first-order and payoff-based learning rules use utility gradients or numerical payoff observations
(see, e.g., \citet{bravo-leslie-mertikopoulos18bandit-concave-games,
mertikopoulos-zhou19}). 
This creates a structural mismatch: strategic incentives are determined by ordinal preferences, while standard learning objectives and dynamics may depend on the particular numerical utilities used to represent them. This raises a basic question: what remains of learning-in-games theory when players observe only their preferences, rather than a particular numerical representation of them?

To isolate what can be learned from preferences alone, we study continuous
games under \emph{ordinal pairwise comparison feedback}. At a fixed opponents'
profile, each player may query which of two unilateral actions is preferred,
but receives no information about the magnitude of the corresponding utility
difference. This raises a question beyond payoff estimation: are the learning
objects themselves identified by the feedback?
Indeed, replacing player $i$'s cost $J_i$ by $\phi_i\circ J_i$ for any strictly
increasing $\phi_i$ preserves every unilateral comparison. A learner observing
only pairwise preferences therefore cannot distinguish between these numerical
representations. Yet central objects of standard learning theory need not be
preserved: such transformations can alter the pseudogradient field, the
external regret assigned to the same sequence of play, and the set of coarse
correlated equilibria. Thus two games that are indistinguishable from
preference feedback can induce different standard learning objects. This
leads to our central question:
\begin{quote}
\centering
\emph{Which learning objectives, dynamics, and equilibrium guarantees are
intrinsic to players' unilateral preferences, and can they be realized
efficiently using only pairwise comparisons?}
\end{quote}

Our starting point is that smooth unilateral preferences retain a natural
first-order geometry. Infinitesimal comparisons reveal the signs of
directional derivatives and therefore identify, whenever
$\nabla_iJ_i(x)$ is nonzero, its direction but not
its magnitude \citep{saha-et-al21dueling}. This suggests taking normalized unilateral preference
directions as the primitive first-order objects. They retain the directional
information preserved under smooth monotone reparameterizations while
discarding precisely the player-specific gradient magnitudes that comparison
feedback does not identify. This leads to representation-invariant regret
benchmarks, equilibrium notions, and learning dynamics, and provides learning 
targets that we subsequently estimate directly from finite comparisons.

A key distinction from stochastic ranking models is that our comparison oracle
is purely ordinal: conditional on which alternative is preferred, the feedback
contains no information about the magnitude of the utility difference.
Gap-dependent models such as Plackett--Luce may instead reveal cardinal
information through repeated observations
\citep{liu-et-al26iclr-ranking-feedback}.
Likewise, unlike finite context-ordinal games defined directly from rankings
\citep{gemp26nash-without-numbers}, our smooth latent-utility model exposes a
local first-order geometry. We use this geometry to define learning objectives,
dynamics, and equilibrium guarantees directly from unilateral preferences. 
See Section~\ref{subsec:related-work} for a discussion.

\subsection{Contributions}

We develop a first-order framework for learning in continuous games from
\emph{ordinal pairwise comparison feedback}: comparisons reveal only the
ordering of two unilateral actions and carry no information about the magnitude
of their cost difference. Our performance benchmarks, learning dynamics, and
equilibrium guarantees are designed directly from the first-order information that this preference feedback identifies.\\

\noindent\textbf{1. Identifiability limits of classical learning objectives.}
We show that ordinally equivalent games---i.e., games inducing exactly the
same unilateral comparison oracle---can assign the same sequence of play zero
external regret in one cardinal representation of the game and linear regret in another
(Proposition~\ref{prop:cardinal-regret-noninvariant}), and can have different
coarse correlated equilibrium (CCE) sets
(Proposition~\ref{prop:cce-not-ordinally-invar}). 
In contrast, ordinally equivalent games have identical best-response correspondences 
and hence the same pure Nash equilibria. More sharply, for every continuous game
$\mathcal G$ with compact action spaces,
$\bigcap_{\widetilde{\mathcal G}\sim_{\rm ord}\mathcal G}
\operatorname{CCE}(\widetilde{\mathcal G})
=\Delta(\operatorname{NE}(\mathcal G)),$ 
where $\sim_{\rm ord}$ denotes ordinal 
game equivalence and $\Delta(\operatorname{NE}(\mathcal G))$ denotes the distributions
supported on the set of Nash equilibria. Thus, the CCE distributions that remain valid under
every cardinal realization consistent with the same unilateral preferences are
exactly those supported on pure Nash equilibria
(Theorem~\ref{thm:cce-vs-ne}). Pairwise preference feedback therefore creates
not only an estimation problem, but an identifiability problem: classical
regret and CCE are not, in general, determined by the information available to
the players.\\

\noindent\textbf{2. A first-order ordinal regret-to-equilibrium theory.}
Motivated by this identifiability gap, we formulate learning directly on the
first-order preference geometry that remains invariant under regular smooth
monotone reparameterizations of the cost functions. We introduce
\emph{ordinal directional regret}, a first-order benchmark based on normalized
unilateral preference directions rather than cardinal payoff differences or
gradient magnitudes. This benchmark induces an \emph{ordinal coarse
variational equilibrium} residual and an exact no-regret-to-equilibrium
correspondence: sublinear ordinal regret for every player implies vanishing
ordinal equilibrium violation of the empirical distribution of play
(Section~\ref{subsec:ordinal-coarse-equilibrium}). We further introduce an
ordinal Nash residual whose zero set coincides with the pure Nash equilibria
under blockwise pseudoconvexity. Projected block-normalized pseudogradient
dynamics achieve $\mathcal O(\sqrt{T})$ ordinal directional regret for every
player against arbitrary opponent behavior
(Theorem~\ref{thm:discrete-regret}). 
Consequently, the empirical distribution of play achieves
an $\mathcal{O}(T^{-1/2})$ ordinal coarse variational equilibrium residual
(Corollary~\ref{cor:discrete-equilibrium}). 
Block-normalized pseudogradient dynamics retain the local preference
directions preserved under regular monotone reparameterizations while
discarding player-specific gradient magnitudes, thereby matching the geometry
of the learning dynamics to the information revealed by pairwise preferences.\\

\noindent\textbf{3. Direct learning from a single comparison and last-iterate
Nash convergence.}
For every comparison radius $\delta>0$, we identify an exact
finite-resolution vector field $D^\delta$ determined solely by the pairwise
comparison oracle. This vector field need not coincide with the infinitesimal
normalized preference field $D$, 
and its convergence to $D$ need not be
uniform near stationary points. We construct an unbiased single-comparison estimator of
$D^\delta$ and a comparison-normalized learning rule that does not reconstruct
a cardinal utility representation. With one comparison per player and round,
the resulting algorithm achieves $\mathcal O\!\left(\frac{R_i\sqrt{d_iT}}{\nu_i}\right)$ 
expected finite-resolution ordinal regret against arbitrary opponent behavior,
where $R_i$, $d_i$, and $\nu_i$ denote the action-set diameter, dimension, and
comparison reliability, respectively (Theorem~\ref{thm:comparison-regret}). 
In ordinal potential games, where players' comparisons coincide exactly with
comparisons of a latent potential, the \emph{last iterate} converges almost
surely to the Nash set; if the Nash set is finite, it converges almost surely
to a single Nash equilibrium (Theorem~\ref{thm:as-comparison-ne}). 

\subsection{Related Work}
\label{subsec:related-work}

We discuss the most closely related work here and defer a broader review to
Appendix~\ref{apdx:related-work}.\\

\noindent\textbf{Dueling bandits and comparison-based optimization.}
Preference-based feedback has been studied extensively in dueling bandits;
see \citet{bengs-et-al21survey-pref-bandits} for a survey.
In continuous optimization, pairwise comparisons can be used to optimize
objectives without observing function values
\citep{yue-joachims09dueling-bandits,jamieson-et-al12,
kumagai17continuous-dueling-bandit,saha-et-al21dueling}, with extensions to
more general preferences, adversarial settings, and richer ranking feedback
\citep{saha-et-al25,blum-et-al24dueling-adversary,
saha-et-al24multiway,tang-et-al24zo-ranking,ye-et-al25zo}.
Most of this literature concerns a single objective, often under convexity or
\textit{smooth} stochastic preference models. Our setting is instead a nonconvex
multi-player game with a purely ordinal sign oracle, which reveals preference
but no cost-gap magnitude. Recent work on online convex optimization with
dueling feedback recovers standard regret guarantees under smooth
gap-dependent comparison models \citep{lu-et-al26oco-dueling-feedback};
our oracle deliberately excludes such cardinal gap information.\\

\noindent\textbf{Learning from preference feedback in games.}
Several recent works study which aspects of games are determined by ordinal
rather than cardinal information. In finite games,
\citet{biggar-shames25pref-graphs} encode unilateral rankings through
preference graphs, while \citet{conitzer24complexity-ordinal-games} studies 
equilibria robust across compatible cardinal representations.
\citet{gemp26nash-without-numbers} instead study finite context-ordinal games
specified directly by rankings, using social-choice aggregation across
opponents' contexts to define mixed equilibrium and learning notions.
Our smooth setting assumes latent utility functions observed only ordinally,
which leaves accessible a local first-order preference geometry.
\citet{abbadi-laraki-mertikopoulos26icml} show that the combinatorial
preference structure of a finite game constrains, but does not generally
determine, the long-run behavior of no-regret dynamics.

Closer to our continuous game setting,
\citet{dindovs-mezzetti06geb-better-reply} study stochastic better-reply
learning in aggregative games, while \citet{cheung14} studies
pairwise-comparison evolutionary dynamics in continuous-strategy population
games.
\citet{bervoets-bravo-faure20} establish almost-sure Nash convergence for
payoff-based learning in several classes of continuous games, but their
updates use numerical payoff variations.
\citet{krupa-bemporad26} propose to learn generalized Nash equilibria (GNE) from pairwise preferences by fitting a
numerical surrogate game without providing convergence guarantees. 
In contrast, we work directly with the first-order field identified by
unilateral comparisons, without reconstructing cardinal utilities.
\citet{liu-et-al26iclr-ranking-feedback} ask when ranking feedback suffices to
recover standard external regret. Their Plackett--Luce model is gap-dependent,
so repeated rankings can contain cardinal information which is used to estimate numerical
utilities; our purely ordinal oracle does not. We therefore formulate regret, equilibrium, and learning
dynamics directly from the information preserved by unilateral preferences.\\

\noindent\textbf{Normalized first-order dynamics.}
In single-objective optimization, comparison and sign feedback can recover
directional information without observing function values
\citep{saha-et-al21dueling,cheng-et-al20sign-opt,cai-et-al22one-bit}.
While our comparison-based implementation in Section~\ref{sec:learning-from-comparison-feedback} 
builds on this, normalization here is not merely a device for gradient estimation: the
block-normalized field is the first-order object identified by unilateral
preferences, and induces the corresponding regret, equilibrium, and learning
dynamics in the game.
Related to our invariance viewpoint, \citet{gupta-et-al26coop-bargaining} use normalized utility gradient directions to obtain bargaining solutions invariant to monotone utility transformations; 
our setting is decentralized and noncooperative, uses
unilateral preference feedback, studies different block-normalized dynamics 
and provides non-asymptotic guarantees. 
\citet{poveda-krstic-basar22tac} study magnitude-dependent rescalings of
gradient play for Nash seeking; our block normalization is instead dictated by
ordinal invariance and admits a direct single-comparison implementation.

\begin{figure}[t]
    \includegraphics[width=0.94\linewidth]{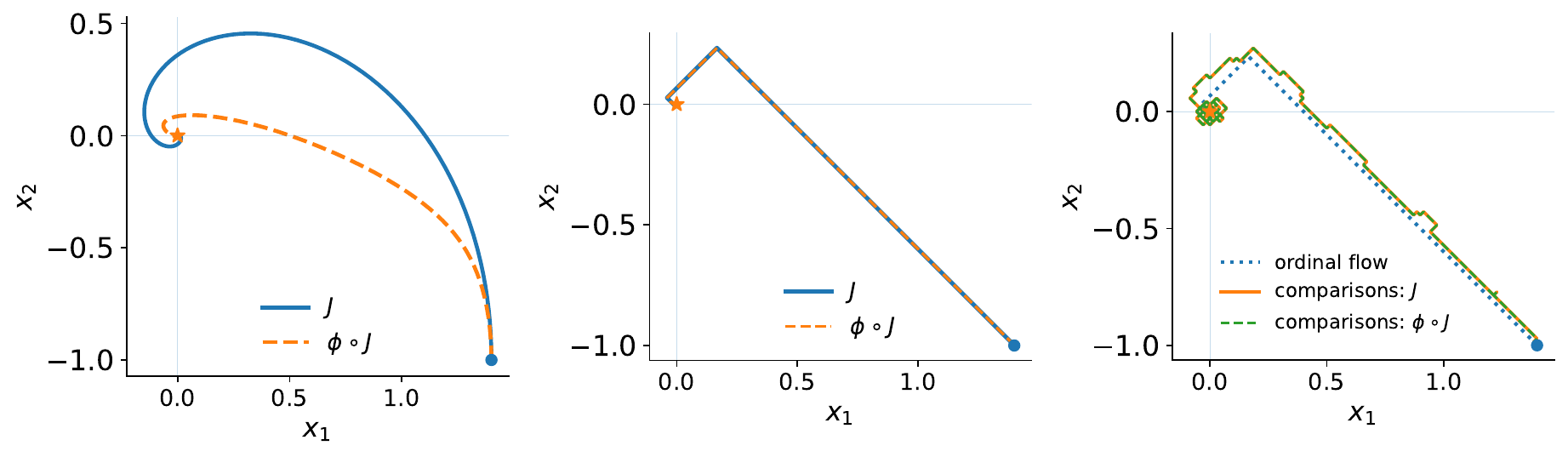}
    \caption{
    \textbf{Same preferences, different numerical representations} ($J$ and $\phi \circ J$). 
    The two games have identical unilateral preferences, best responses, and Nash
equilibrium, but different numerical utility representations.
\textbf{(Left)} Standard gradient play follows different trajectories.
\textbf{(Center)} Player-wise normalization makes the trajectories coincide.
\textbf{(Right)} Pairwise-comparison learning is likewise invariant under
common oracle randomness. The star denotes the common Nash. 
}
    \label{fig:ordinal-invariance}
\end{figure}

\section{Continuous Games with Pairwise Preference Feedback}
\label{sec:continuous-games-ordinal-feedback}

\noindent\textbf{Notation.} 
For an integer $N\geq1$, let $[N]:=\{1,\ldots,N\}$, and let
$[a]_+:=\max\{a,0\}$ denote the positive part of $a\in\mathbb R$.
For a nonempty closed convex set $\mathcal C\subseteq\mathbb R^d$,
$\Pi_{\mathcal C}$ denotes the Euclidean projection onto $\mathcal C$ and
$\operatorname{cl}(\mathcal C)$ its closure. 
The tangent and normal cones to the set $\mathcal C$ at a point $x \in\mathcal C$ are
$T_{\mathcal C}(x) :=
\operatorname{cl}
\left\{
\lambda(y-x):
y\in\mathcal C,\ \lambda\geq0
\right\}$ and 
$N_{\mathcal C}(x)
:=\left\{v:\langle v,y-x\rangle\leq0,\ \forall y\in\mathcal C\right\}.$
We write
$\operatorname{diam}(\mathcal C):=
\sup_{x,y\in\mathcal C}\|x-y\|_2\,,
\operatorname{dist}(x,\mathcal C)
:=\inf_{y\in\mathcal C}\|x-y\|_2$ 
where $\|\cdot\|_2$ is the Euclidean 2-norm. 
For a block vector
$v=(v_1,\ldots,v_N)$, with
$v_i\in\mathbb R^{d_i}$, we use
$\|v\|_{2,1}:=\sum_{i=1}^N\|v_i\|_2\,,
\|v\|_{2,\infty}:=\max_{i\in[N]}\|v_i\|_2.$
Finally,
$\mathbb S^{d-1}:=\{u\in\mathbb R^d:\|u\|_2=1\}$ and $\mathbb{B}_d$ denote the Euclidean unit
sphere and ball, $\Delta(\mathcal X)$ the set of probability distributions over
$\mathcal X$, and $\delta_x$ the Dirac measure at~$x$.\\

\noindent\textbf{Continuous games.}
We consider an $N$-player continuous game
$\mathcal G =\big((\mathcal X_i)_{i=1}^N,(J_i)_{i=1}^N\big),$ where the action sets $\mathcal{X}_i \subseteq\mathbb R^{d_i}$ are nonempty, closed, and convex sets and we define the joint action set $\mathcal X:=\prod_{i=1}^N\mathcal X_i\,.$ The cost functions $J_i: \mathcal{X} \to \mathbb{R}$ are supposed to be continuously differentiable on an open neighborhood of $\mathcal{X}$. Each player $i$ chooses an action
$x_i\in\mathcal X_i$ and incurs a cost
$J_i(x_i,x_{-i})$ where $x_{-i} \in \mathcal{X}_{-i} := \prod_{j \neq i}\mathcal{X}_j$ stands for the actions of all players but $i$ using the standard notation convention. When $\mathcal X_i$ is bounded, we write $R_i:=\operatorname{diam}(\mathcal X_i)$.
The best-response correspondence of player~$i$ is  
$\operatorname{BR}_i^{\mathcal G}(x_{-i}) :=
\argmin_{y_i\in\mathcal X_i}
J_i(y_i,x_{-i})$.
A profile $x^\star\in\mathcal X$ is a \emph{Nash equilibrium} if
$x_i^\star \in \operatorname{BR}_i^{\mathcal G}(x_{-i}^\star)\,, 
\forall i\in[N],$
and we denote the set of Nash equilibria by $\operatorname{NE}(\mathcal G)$.
A distribution $\mu\in\Delta(\mathcal X)$ is a
\emph{coarse correlated equilibrium} (CCE) if, for every $i\in[N]$ and every fixed deviation $y_i\in\mathcal X_i$,
$\mathbb E_{x\sim\mu}[J_i(x)] \leq
\mathbb E_{x\sim\mu}[J_i(y_i,x_{-i})].$
Equivalently,
$\sup_{y_i\in\mathcal X_i}
\mathbb E_{x\sim\mu}
\big[
J_i(x)-J_i(y_i,x_{-i})
\big]
\leq0\,, 
\forall i\in[N].$
We denote the corresponding set by $\operatorname{CCE}(\mathcal G)$. 
For any joint strategy $x \in \mathcal{X}$, we denote the standard pseudogradient field of the game by $F(x) = (\nabla_1 J_1(x), \cdots, \nabla_N J_N(x))\,,$ where for any $i \in [N], \nabla_i J_i(x)$ stands for the gradient of $J_i$ w.r.t. its $i$-th variable $x_i\,.$\\ 

\noindent\textbf{Pairwise comparison feedback.}
Player $i$ observes unilateral preferences via the comparison oracle:
\begin{equation}
\label{eq:comparison-oracle}
\mathsf C_i(a_i,b_i;x_{-i})
:=
\operatorname{sgn}
\left(
J_i(a_i,x_{-i})-J_i(b_i,x_{-i})
\right)
\in\{-1,0,+1\}\,,
\end{equation}
when querying the oracle with actions $a_i, b_i \in \mathcal{X}_i$ given fixed opponents' actions $x_{-i} \in \mathcal{X}_{-i}$.  
Thus $\mathsf C_i(a_i,b_i;x_{-i}) = -1$ means that $a_i$ is strictly preferred to $b_i$ under
the cost convention. 
We will also consider a noisy comparison oracle $Y_i = \xi_i\,\mathsf C_i(a_i,b_i;x_{-i}),$ 
where $\xi_i\in\{-1,+1\}$ satisfies
$\mathbb P(\xi_i=1)=\frac12+\nu_i,
\mathbb P(\xi_i=-1)=\frac12-\nu_i,
\nu_i\in(0,1/2].$
For a strict comparison, the reported sign is therefore correct with
probability $1/2+\nu_i$; exact comparisons correspond to $\nu_i=1/2$. 
The noise distribution depends only on the preference ordering and not on the numerical cost values. In the continuous optimization setting ($N = 1$), this model has been considered, e.g., in \cite{saha-et-al21dueling}.\\

\noindent\textbf{Unilateral ordinal equivalence.}
Two continuous games
$\mathcal G=((\mathcal X_i,J_i)_{i\in [N]})$ and
$\widetilde{\mathcal G}=((\mathcal X_i,\widetilde J_i)_{i\in [N]})$ with the same action
spaces are \emph{ordinally equivalent}, written
$\mathcal G\sim_{\mathrm{ord}}\widetilde{\mathcal G}$, if they induce the same comparisons: 
$\operatorname{sgn}
\left(
J_i(a_i,x_{-i})-J_i(b_i,x_{-i})
\right)
=
\operatorname{sgn}
\left(
\widetilde J_i(a_i,x_{-i})-\widetilde J_i(b_i,x_{-i})
\right)$ 
for any $i\in[N]$, $a_i,b_i\in\mathcal X_i$ and $x_{-i}\in\mathcal X_{-i}$. Equivalently, the two games generate exactly the
same unilateral comparison oracles. 
We denote the corresponding equivalence class by
$[\mathcal G]_{\mathrm{ord}}
:=
\{
\widetilde{\mathcal G}:
\widetilde{\mathcal G}\sim_{\mathrm{ord}}\mathcal G\}.$
We call $\widetilde{\mathcal G}$ a \emph{regular monotone
reparameterization} of $\mathcal G$ if
$\widetilde J_i=\phi_i \circ J_i$ (i.e., $\widetilde J_i(x)=\phi_i(J_i(x)), \forall x \in \mathcal{X}$), where $\phi_i: \mathbb{R} \to \mathbb{R}$ is continuously
differentiable and $\phi_i'(s)>0$ on a neighborhood of the range of~$J_i$.
Every such reparameterization is ordinally equivalent to $\mathcal G$.\\

\noindent\textbf{Ordinal potential games.}
A continuously differentiable function
$\Phi:\mathcal X\to\mathbb R$ is a \emph{smooth ordinal potential} for
$\mathcal G$ if 
$\operatorname{sgn}
\left(
J_i(y_i,x_{-i})-J_i(x_i,x_{-i})
\right)
=
\operatorname{sgn}
\left(
\Phi(y_i,x_{-i})-\Phi(x_i,x_{-i})
\right)$
for every player $i \in [N]$, every $x_{-i} \in \mathcal{X}_{-i}$, and every
$x_i,y_i\in\mathcal X_i$. 
A game admitting such a $\Phi$ is called a \emph{smooth ordinal potential
game}. Thus, for every fixed opponents' profile, the player's cost and the
potential induce exactly the same ordering over unilateral actions. 
The potential~$\Phi$ need not be observed. 

\section{Identifiability from Preference Feedback}
\label{sec:comparison-identification}

\subsection{Classical Regret and CCE Are Not Identifiable}
\label{subsec:external-regret-not-invariant}
For a sequence of joint actions $(x_t)_{t=1}^T$, the standard external regret of
player $i$ is defined as: 
\begin{equation}
\label{eq:external-regret}
\Reg_{i,T}^{J}
=
\sup_{y_i \in \mathcal{X}_i}\sum_{t=1}^T J_i(x_t)
- J_i(y_i,x_{-i,t}).
\end{equation}
This performance measure depends on cardinal utility
differences. Monotone reparameterizations can change not only its
scale, but whether a trajectory is no-regret at all, as we show next.

\begin{tcolorbox}[colframe=white!, top=2pt,left=2pt,right=2pt,bottom=2pt] 
\begin{proposition}[External regret is not ordinally identifiable]
\label{prop:cardinal-regret-noninvariant}
There exist ordinally equivalent smooth continuous games
$\mathcal G\sim_{\mathrm{ord}}\widetilde{\mathcal G}$ and a common sequence
of joint actions $(x_t)_{t\ge1}$ such that, for every even $T$,
$\Reg_{i,T}^{J}=0, \Reg_{i,T}^{\widetilde J}=\Omega(T).$
\end{proposition}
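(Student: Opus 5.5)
The plan is an explicit construction with one player, or equivalently two players where player 2 has a constant cost and only player~1 matters, so the regret of player $i=1$ is what we compute. Take $\mathcal X_1=[-1,1]$ and a player-2 action $x_2\in\{-1,1\}\subset\mathcal X_2=[-1,1]$ that alternates, $x_{2,t}=(-1)^t$. Set $J_1(x_1,x_2)=(x_1-x_2)^2$, and let player~1 alternate as well, $x_{1,t}=x_{2,t}$. Then player~1 pays $0$ every round, so $\sum_t J_1(x_t)=0$. Since $J_1\ge 0$, we get $\Reg^{J}_{1,T}\le 0$. Taking $y_1$ to be one of the played points ruins at most half the rounds, and that can only make the supremum smaller. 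The comparator that gives value exactly zero is therefore not obvious. To get equality I would instead use the sup convention directly: it is $\sup_y\sum_t[0-J_1(y,x_{2,t})]=-\inf_y\sum_t J_1(y,x_{2,t})$. For even $T$ this equals $-\tfrac T2\min_y[(y-1)^2+(y+1)^2]=-T<0$, which is negative rather than zero.

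To get exactly $0$ and linear regret, I would rather design $J_1$ so that the played sequence is optimal in hindsight under $J$ but not under $\phi\circ J$. The cleanest choice is a play that is at the unilateral optimum in neither round, so that it is tied with the best fixed action under $J$. Let $J_1(x_1,x_2)=x_1x_2$ on $\mathcal X_1=[-1,1]$, with player~1 playing $x_{1,t}=0$. Then $\sum_t J_1(x_t)=0$, and for every $y$ and even $T$ we have $\sum_t y\,x_{2,t}=0$, so $\Reg^J_{1,T}=0$ exactly. Now take $\phi(s)=e^{s}$, which is smooth with $\phi'>0$, so $\widetilde J_1=\phi\circ J_1$ is a regular monotone reparameterization and hence ordinally equivalent to $J_1$. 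Under $\widetilde J_1$ the played cost is $\sum_t e^0=T$. The comparator $y$ costs $\tfrac T2(e^{y}+e^{-y})=T\cosh y$. The regret is therefore $\sup_y [T-T\cosh y]=0$, again attained at $y=0$. This is the wrong direction, so I would choose a concave $\phi$ instead. With $\phi(s)=-e^{-s}$ we have $\phi'(s)=e^{-s}>0$. The played cost is $-T$ and the comparator costs $-T\cosh y$. The regret is then $\sup_y T(\cosh y-1)=T(\cosh 1-1)=\Omega(T)$. Player~2 is given a constant cost, so both games are ordinally equivalent for player~2 trivially.

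The key steps, in order, are as follows.
\begin{enumerate}
\item Fix the game $J_1(x)=x_1x_2$ on $[-1,1]^2$ with $J_2\equiv 0$, the play $x_t=(0,(-1)^t)$, and $\widetilde J_i=\phi_i\circ J_i$ with $\phi_1(s)=-e^{-s}$ and $\phi_2=\mathrm{id}$.
\item Note that regular monotone reparameterizations are ordinally equivalent, as stated in Section~\ref{sec:continuous-games-ordinal-feedback}.
\item Compute $\Reg^J_{1,T}=0$, using that the opponent's actions cancel over pairs of rounds when $T$ is even.
\item Compute $\Reg^{\widetilde J}_{1,T}=T(\cosh 1-1)$, using convexity of $y\mapsto\cosh y$.
\end{enumerate}

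The only delicate point is choosing the sign and curvature of $\phi$. The reparameterization must strictly reward hedging: Jensen applied to the alternating opponent makes fixed extreme deviations beat the constant play. A concave increasing $\phi$ achieves this, while a convex one does not. Smoothness on a neighborhood of $\mathcal X$ holds because all the functions involved are polynomial or exponential.
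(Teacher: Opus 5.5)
Your construction is correct and follows essentially the paper's route. An alternating opponent and a cost affine in the player's own action make every fixed comparator tie exactly with the played sequence under $J$, and an exponential-type reparameterization then breaks the tie via Jensen's inequality. The paper uses the convex $\phi=\exp$ with a high-spread play (per-pair costs $3,0$ against the comparator's $2,1$), whereas you mirror this with the concave $\phi(s)=-e^{-s}$ and a zero-spread play $x_{1,t}=0$. Note that your concave $\phi$ favors the high-spread deviations $y=\pm1$, so your closing remark that $\phi$ must ``reward hedging'' has the direction inverted, although the computation $\Reg^{\widetilde J}_{1,T}=T(\cosh 1-1)$ is right.
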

\end{tcolorbox}
Hence external regret is not determined by the ordinal equivalence class
identified by comparison-only feedback. The same is true of CCE.

\begin{tcolorbox}[colframe=white!, top=2pt,left=2pt,right=2pt,bottom=2pt] 
\begin{proposition}[CCE is not ordinally invariant]
\label{prop:cce-not-ordinally-invar}
There exist ordinally equivalent continuous games
$\mathcal G\sim_{\mathrm{ord}}\widetilde{\mathcal G}$ and a distribution
$\mu\in\Delta(\mathcal X)$ such that
$\mu\in\mathrm{CCE}(\mathcal G),
\mu\notin\mathrm{CCE}(\widetilde{\mathcal G}).$
\end{proposition}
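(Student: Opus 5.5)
The plan is an explicit construction. Take a two-player game with $\mathcal X_1=\mathcal X_2=[-1,1]$ and the smooth strictly monotone reparameterization $\phi(s)=e^{s}$ (or any strictly convex increasing $\phi$); Jensen's inequality then breaks the CCE inequality. Since $\phi'>0$ everywhere, $\widetilde{\mathcal G}$ defined by $\widetilde J_i=\phi\circ J_i$ is a regular monotone reparameterization, hence ordinally equivalent to $\mathcal G$. The remaining task is to choose $J_1$, $J_2$ and $\mu$ so that player~1's CCE inequality holds with equality under $J_1$ but fails after applying $\phi$. Player~2 can be made trivial, e.g.\ $J_2\equiv 0$ or $J_2(x)=x_2^2$; we take $J_2\equiv 0$, so $\widetilde J_2$ is constant and player~2's condition holds automatically.

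\textbf{Construction.} Take $J_1(x_1,x_2)=x_1x_2$, which is bilinear and smooth, and let $\mu=\tfrac12\delta_{(0,1)}+\tfrac12\delta_{(0,-1)}$. Under $J_1$, we have $\mathbb E_\mu[J_1(x)]=0$. For every deviation $y_1$,
\[
\mathbb E_\mu[J_1(y_1,x_2)]=\tfrac12 y_1-\tfrac12 y_1=0,
\]
so the CCE inequality holds with equality and $\mu\in\mathrm{CCE}(\mathcal G)$.

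\textbf{Failure after reparameterization.} Under $\widetilde J_1=e^{J_1}$, we get $\mathbb E_\mu[\widetilde J_1(x)]=e^0=1$. For a deviation $y_1$,
\[
\mathbb E_\mu[\widetilde J_1(y_1,x_2)]=\tfrac12(e^{y_1}+e^{-y_1})=\cosh y_1\ge 1 .
\]
This deviation gain has the wrong sign for breaking the CCE: with the convex $\phi=\exp$, $\mu$ stays a CCE. So the cure is to use a strictly \emph{concave} increasing $\phi$. By Jensen, $\mathbb E[\phi(J_1(y_1,x_2))]<\phi(\mathbb E[J_1(y_1,x_2)])=\phi(0)=\mathbb E_\mu[\phi(J_1(x))]$ whenever $J_1(y_1,\cdot)$ is nonconstant on the support. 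For concreteness take $\phi(s)=-e^{-s}$, which has $\phi'(s)=e^{-s}>0$ and is smooth. Then
\[
\mathbb E_\mu[\widetilde J_1(y_1,x_2)]=-\cosh y_1<-1=\mathbb E_\mu[\widetilde J_1(x)]
\]
for any $y_1\neq0$, e.g.\ $y_1=1$. Hence $\sup_{y_1}\mathbb E_\mu[\widetilde J_1(x)-\widetilde J_1(y_1,x_2)]\ge\cosh 1-1>0$, and $\mu\notin\mathrm{CCE}(\widetilde{\mathcal G})$.

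The only delicate step is choosing the curvature direction of $\phi$ correctly relative to the cost convention, as the $\exp$ attempt above shows. One also has to check that $\widetilde{\mathcal G}\sim_{\rm ord}\mathcal G$ for both players. This follows immediately because $\phi$ is strictly increasing and is applied to each $J_i$, including the constant $J_2$, so all signs in the comparison oracle \eqref{eq:comparison-oracle} are preserved.
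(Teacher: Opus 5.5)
Your construction is correct and follows essentially the same route as the paper's proof: a two-player game with a trivial second player, a distribution that makes player~1's CCE inequality hold with equality for every deviation, and a strictly increasing nonlinear $\phi$ that breaks it via Jensen. The only difference is where the cost dispersion sits. The paper places it on the played action ($J_1=2-x_2+x_1(1-2x_2)$, $\mu=\tfrac12\delta_{(1,0)}+\tfrac12\delta_{(1,1)}$), so the convex $\phi=\exp$ works; you place it on the deviations, so you correctly need the concave $\phi(s)=-e^{-s}$. In a final write-up, delete your opening claim that a strictly convex $\phi$ works for your example, since your own $\cosh y_1\ge 1$ computation refutes it.
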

\end{tcolorbox}

In contrast, ordinally equivalent games have identical best-response correspondences and therefore the same pure Nash equilibria. Thus pure Nash equilibrium is ordinal, whereas external regret and CCE generally are not. 
Because CCE averages utility differences across different opponents' profiles,
it is generally not preserved by nonlinear monotone transformations
\citep{hammond04expected-utility,hammond05utility-invariance,conitzer24complexity-ordinal-games,gemp26nash-without-numbers}.
We sharpen this observation by characterizing the CCE distributions that are
robust over the entire unilateral ordinal equivalence class. 

\begin{tcolorbox}[colframe=white!, top=2pt,left=2pt,right=2pt,bottom=2pt] 
\begin{theorem}[Ordinally robust CCE]
\label{thm:cce-vs-ne}
If $\mathcal{X}$ is compact, then $\bigcap_{\tilde{\mathcal{G}} \sim_{\mathrm{ord}} \mathcal{G}} \operatorname{CCE}(\tilde{\mathcal{G}}) = \Delta(\operatorname{NE}(\mathcal{G})).$
\end{theorem}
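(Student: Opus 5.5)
We prove the two inclusions separately.

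\textbf{Easy inclusion $\supseteq$.} Let $\mu\in\Delta(\operatorname{NE}(\mathcal G))$ and $\widetilde{\mathcal G}\sim_{\rm ord}\mathcal G$. Ordinal equivalence preserves best-response correspondences, so $\operatorname{NE}(\widetilde{\mathcal G})=\operatorname{NE}(\mathcal G)$. Hence for $\mu$-a.e.\ $x$ and every $y_i\in\mathcal X_i$ we have $\widetilde J_i(x)\le \widetilde J_i(y_i,x_{-i})$. Integrating gives $\mathbb E_\mu[\widetilde J_i(x)-\widetilde J_i(y_i,x_{-i})]\le 0$, so $\mu\in\operatorname{CCE}(\widetilde{\mathcal G})$.

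Two technical points need care. First, we need measurability of $\operatorname{NE}(\mathcal G)$. It is closed by continuity of the $J_i$ and compactness: it is the zero set of the continuous function $x\mapsto\max_i[J_i(x)-\min_{y_i}J_i(y_i,x_{-i})]$. Second, we should fix which games are in the class. We restrict to games whose costs are continuous (indeed $C^1$, per the setup), so that expectations exist.

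\textbf{Hard inclusion $\subseteq$, by contrapositive.} Suppose $\mu(\operatorname{NE}(\mathcal G))<1$. We will build a regular monotone reparameterization $\widetilde J_i=\phi_i\circ J_i$ for which the CCE inequality fails.

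Since $\mu$ charges the complement of $\operatorname{NE}$, there is a player $i$ with $\mu(A_i)>0$, where
\[
A_i:=\{x: J_i(x)>\min_{y_i} J_i(y_i,x_{-i})\}.
\]
By a countable covering argument (rational deviations from a dense countable subset of $\mathcal X_i$, together with continuity) we can find a fixed $y_i\in\mathcal X_i$ and $\eta>0$ such that $B:=\{x: J_i(x)-J_i(y_i,x_{-i})\ge\eta\}$ has $\mu(B)>0$.

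We then choose a thin threshold. The function $g(x):=J_i(y_i,x_{-i})$ takes values in a compact interval. Partition this interval into sub-intervals of length less than $\eta/2$; some cell $I=[c,c+\eta/2)$ satisfies $\mu(B\cap\{g\in I\})>0$. On that set, $J_i(x)\ge c+\eta$ while $J_i(y_i,x_{-i})<c+\eta/2$.

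Next, take $\phi$ smooth and strictly increasing, nearly flat (slope $\epsilon$) below level $c+\eta/2$, and increasing by a huge jump $M$ on $[c+\eta/2,c+\eta]$, then slope $\epsilon$ again. Write the deviation gain as $\Delta(x)=\phi(J_i(x))-\phi(J_i(y_i,x_{-i}))$. Pairs $(J_i(x),g(x))$ that straddle the jump upward contribute at least $M$. Pairs straddling it downward contribute at least $-M$, which could cancel. To avoid this, choose the cell $I$ to maximize the imbalance: consider the events $U_c=\{g<c+\eta/2,\ J_i\ge c+\eta\}$ and $D_c=\{J_i<c+\eta/2,\ g\ge c+\eta\}$. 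The key claim is that some threshold $c$ has $\mu(U_c)>\mu(D_c)$. Then $\mathbb E_\mu[\Delta]\ge M(\mu(U_c)-\mu(D_c))-O(\epsilon)-M\,\mu(\text{partial straddles})$.

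\textbf{Main obstacle.} The difficulty is controlling this cancellation and the partial straddles. To handle it, we integrate over the threshold $c$. Averaging $\mathbf 1\{J_i(x)>c\ge g(x)\}-\mathbf 1\{g(x)>c\ge J_i(x)\}$ over $c$ against Lebesgue measure gives exactly $J_i(x)-g(x)$. This shows that the identity map $\phi(s)=s$ corresponds to uniform weighting, so a general monotone $\phi$ corresponds to a weight $\phi'$. Hence
\[
\mathbb E_\mu[\Delta]=\int \phi'(c)\,h(c)\,dc,\qquad h(c):=\mu(J_i>c\ge g)-\mu(g>c\ge J_i).
\]
If $h\le 0$ almost everywhere, then $\mathbb E_\mu[J_i(x)-g(x)]\le0$ under every such weighting. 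But then $\mathbb E_\mu[(J_i-g)_+]$ would have to be balanced by negative parts for all thresholds. I expect this to contradict $\mu(B)>0$ by taking $\phi'$ concentrated where $h>0$, and it remains to show that such a set exists.

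To establish this, consider the conditional structure instead. If $h(c)\le 0$ for all $c$ and all deviations $y_i$, we would derive that $J_i(x)\le g(x)$ $\mu$-a.s., which contradicts $\mu(B)>0$. This step is not automatic, since $h\le0$ everywhere is compatible with $\mu(J_i>g)>0$. If it fails, the fallback is to use the full ordinal class rather than only common reparameterizations $\phi\circ J_i$. Specifically, we use $\widetilde J_i(x)=\psi(J_i(x),x_{-i})$ with $\psi$ increasing in its first argument but allowed to depend on $x_{-i}$, which is still ordinally equivalent. With this freedom we can scale by a large factor the costs at opponent profiles $x_{-i}$ in the projection of $B$ and damp them elsewhere. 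This makes the positive regret on $B$ dominate, giving $\mu\notin\operatorname{CCE}(\widetilde{\mathcal G})$. I expect this fallback to be the intended route, and the dependence on $x_{-i}$ removes the cancellation issue entirely.
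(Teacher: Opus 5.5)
Your easy inclusion is correct and matches the paper, but the hard inclusion is not established. As you suspected, your first route, a single reparameterization $\phi\circ J_i$, cannot work in general. Take $J_1(x_1,x_2)=(x_1-x_2)^2$ on $[0,1]^2$, $J_2\equiv 0$, and let $\mu$ be the law of $(3Z/4,Z)$ for $Z\sim\mathrm{Unif}[0,1]$. Then $|X_1-X_2|\sim\mathrm{Unif}[0,1/4]$ is first-order stochastically dominated by $|y-X_2|$ for every $y$. Hence $\mu\in\operatorname{CCE}(\phi\circ J)$ for every increasing $\phi$, although $\mu(\NE(\mathcal G))=0$.

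So everything rests on your fallback, and the mechanism you describe there fails. A weight depending only on $x_{-i}$ multiplies the entire conditional regret $q_{i,y_i}(z)=\int J_i(x_i,z)\,\kappa_i(dx_i\mid z)-J_i(y_i,z)$, where $\kappa_i(\cdot\mid z)$ is the conditional law of $X_i$ given $X_{-i}=z$. Positive pointwise regret on $B$ can therefore be cancelled \emph{within the same context} by other actions in the support of $\kappa_i(\cdot\mid z)$. For example, take $J_i(x)=x_i^2$ on $\mathcal X_i=[-1,1]$, $X_i$ uniform on $\{0,1\}$ independent of $X_{-i}$, $y_i=0.8$ and $\eta=1/4$. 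Then $\mu(B)=1/2$, yet $\mathbb E_\mu[w(X_{-i})(J_i(X)-J_i(y_i,X_{-i}))]=-0.14\,\mathbb E[w(X_{-i})]<0$ for every positive weight $w$. Thus context-dependent scaling does not remove the cancellation when $y_i$ is selected only through $\mu(B)>0$. Moreover, weights built from the indicator of the projection of $B$ produce discontinuous costs, which lie outside the $C^1$ ordinal class.

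The paper repairs this by selecting the deviation through the conditional regret.
\begin{itemize}
\item It uses the smooth, comparison-preserving weights $1+t\,\beta(x_{-i})$ with $\beta\ge 0$ and lets $t\to\infty$.
\item It localizes $\beta$ with bump functions and inner/outer regularity of the marginal of $X_{-i}$, giving $q_{i,y_i}\le 0$ almost everywhere for each fixed $y_i$.
\item A countable dense set of deviations then yields $\int J_i(x_i,z)\,\kappa_i(dx_i\mid z)\le\min_{y_i}J_i(y_i,z)$ a.e., which forces $\kappa_i(\cdot\mid z)$ to be concentrated on best responses.
\end{itemize}

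Alternatively, you could keep your $y_i$, $\eta$ and $B$ and use a genuinely nonlinear context-dependent transform, for instance $\widetilde J_i(x)=\epsilon J_i(x)+M\sigma\bigl(J_i(x)-J_i(y_i,x_{-i})-\eta/2\bigr)$. Here $\sigma$ is smooth and nondecreasing, with $\sigma=0$ on $(-\infty,-\eta/4]$ and $\sigma=1$ on $[\eta/4,\infty)$. This $\widetilde J_i$ is $C^1$ and strictly increasing in $J_i(x)$ at each fixed $x_{-i}$, so the modified game is ordinally equivalent to $\mathcal G$. Its deviation gain against $y_i$ is at least $M$ on $B$ and at least $-2\epsilon\max_{\mathcal X}|J_i|$ everywhere, so a large $M$ violates the CCE inequality. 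That route would even bypass the disintegration. As written, however, your proposal contains neither ingredient.
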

\end{tcolorbox}

Classical CCE permits positive conditional regret at some opponents'
profiles to be offset by negative regret elsewhere. Requiring the CCE
inequalities to hold for every cardinal realization consistent with the
same unilateral preferences eliminates this cross-context compensation and
forces best-response play almost surely. See Appendix~\ref{appx:proof-thm-cce-vs-ne} for a proof and a more detailed discussion. 
This motivates equilibrium notions defined directly from the ordinal information available to players. 

\subsection{Preference-Invariant First-Order Benchmarks}
\label{subsec:directional-reg-ordinal-coarse-var-eq}

Our starting point is to ask which first-order information is likewise intrinsic to the
comparison oracle. Under a smooth monotone
reparameterization $\widetilde J_i=\phi_i\circ J_i$,
$\nabla_i \widetilde{J}_i(x) = \phi_i'(J_i(x))\nabla_i J_i(x).$
Hence the magnitude $\|\nabla_iJ_i(x)\|$ depends on the cardinal representation,
whereas the ray $\{\lambda \nabla_i J_i(x):\lambda>0\}$ does not. This motivates us to consider the \emph{preference-direction field}, defined for every $i \in [N]$ and every joint strategy $x \in \mathcal{X}$ as follows: 
\begin{equation}
\label{eq:Di-pref-field}
D_i(x)
=
\begin{cases}
\dfrac{\nabla_i J_i(x)}{\|\nabla_i J_i(x)\|}, & \nabla_i J_i(x)\neq0,\\[1.5ex]
0, & \nabla_i J_i(x)=0,
\end{cases}
\qquad
D(x)= \begin{pmatrix} D_1(x) \\ \vdots \\ D_N(x)\end{pmatrix}. 
\end{equation}

\begin{tcolorbox}[colframe=white!, top=2pt,left=2pt,right=2pt,bottom=2pt] 
\begin{lemma}[Invariance of the preference-direction field]
\label{lem:D-invariance}
For every smooth player-wise monotone reparameterization
$\widetilde J_i=\phi_i\circ J_i$ with $\phi_i'>0$,
$\widetilde D_i(x)=D_i(x)$ for all $i \in [N],x \in \mathcal{X}$.
\end{lemma}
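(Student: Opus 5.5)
The plan is a direct computation with the chain rule, followed by a case split on whether $\nabla_i J_i(x)$ vanishes. Fix a player $i\in[N]$ and a profile $x\in\mathcal X$. The cost $J_i$ is continuously differentiable on an open neighborhood of $\mathcal X$. The map $\phi_i$ is continuously differentiable with $\phi_i'>0$ on a neighborhood of the range of $J_i$. Hence $\widetilde J_i=\phi_i\circ J_i$ is differentiable in the $i$-th block at $x$, and the chain rule gives
\begin{equation*}
\nabla_i \widetilde J_i(x)=\phi_i'(J_i(x))\,\nabla_i J_i(x),
\end{equation*}
as already noted above the definition \eqref{eq:Di-pref-field}. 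Write $c:=\phi_i'(J_i(x))$, which is strictly positive because $J_i(x)$ lies in the range of $J_i$.

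Two cases then remain. If $\nabla_i J_i(x)=0$, then $\nabla_i\widetilde J_i(x)=c\cdot 0=0$, so both $D_i(x)$ and $\widetilde D_i(x)$ equal $0$ by the second branch of \eqref{eq:Di-pref-field}. If $\nabla_i J_i(x)\neq0$, then $c>0$ gives $\nabla_i\widetilde J_i(x)\neq0$. In fact the two gradients vanish simultaneously, so the same branch of the definition applies to both. Positive homogeneity of the norm, $\|c v\|=c\|v\|$ for $c>0$, then yields
\begin{equation*}
\widetilde D_i(x)=\frac{c\,\nabla_iJ_i(x)}{c\,\|\nabla_iJ_i(x)\|}=D_i(x).
\end{equation*}
Since $i$ and $x$ were arbitrary, this proves the claim blockwise, and hence $\widetilde D=D$.

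The only point needing care is ensuring $c>0$ at every $x\in\mathcal X$. This is exactly where the hypothesis $\phi_i'>0$ on a neighborhood of the range of $J_i$ is used, and it rules out the degenerate case in which $\nabla_i\widetilde J_i(x)=0$ while $\nabla_iJ_i(x)\neq0$. Without that hypothesis the two fields could fall in different branches of \eqref{eq:Di-pref-field}. With it, there is no genuine obstacle and the argument is purely algebraic.
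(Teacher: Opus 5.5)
Your proof is correct and matches the paper's argument. The chain rule gives $\nabla_i\widetilde J_i(x)=\phi_i'(J_i(x))\nabla_i J_i(x)$ with a strictly positive factor, so the two block gradients vanish together, and otherwise the factor cancels under normalization — this is exactly the paper's two-line proof.
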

\end{tcolorbox}

Player-wise normalization is essential: a global normalization $F/\|F\|$
is not invariant because monotone reparameterizations rescale players'
gradient blocks differently.

When each $J_i(\cdot,x_{-i})$ is convex, standard external regret is bounded by its linearized counterpart 
$\sup_{y_i\in\mathcal X_i}\sum_{t=1}^T
\langle\nabla_iJ_i(x_t),x_{i,t}-y_i\rangle$.
Replacing $\nabla_iJ_i$ by the preference direction $D_i$ yields a
representation-invariant first-order benchmark, motivating the following
definition.
\begin{definition}[Ordinal directional regret]
\label{def:ordinal-regret}
The \emph{ordinal directional regret} of player $i \in [N]$ is
\begin{equation}
\label{eq:def-directional-regret}
\Reg^{D}_{i,T}
:=
\sup_{y_i\in\mathcal X_i}
\sum_{t=1}^T
\left\langle
D_i(x_t),x_{i,t}-y_i
\right\rangle .
\end{equation}
We say that player $i$ has no ordinal directional regret if
$\limsup_{T\to\infty}\frac{\Reg^D_{i,T}}{T}\le0.$
\end{definition}
Since $D$ is independent of the arbitrary magnitude of
the utility gradient, so is $\Reg^D_{i,T}$. 

\noindent\textbf{An ordinal Nash residual.}
The usual unilateral-deviation gap
$\sum_{i=1}^N \left[
J_i(x) - \inf_{y_i\in\mathcal X_i}J_i(y_i,x_{-i})
\right]$ 
has an ordinally invariant zero set but a representation-dependent numerical
value. For compact convex action sets, define instead
\begin{equation}
\label{eq:ordinal-nash-gap}
\Gap_D(x)
=
\sup_{y\in\mathcal X}
\langle D(x),x-y\rangle
=
\sum_{i=1}^N
\sup_{y_i\in\mathcal X_i}
\left\langle D_i(x),x_i-y_i\right\rangle .
\end{equation}
This is the standard Stampacchia variational inequality gap associated with the ordinal
pseudogradient~$D$. The gap is invariant under all player-wise smooth monotone utility reparameterizations with strictly positive derivative and satisfies $0\le\Gap_D(x) \le \sum_{i=1}^NR_i$ since $\|D_i(x)\|\leq1$.  
To relate its zero set to Nash equilibria, we impose the standard first-order sufficiency condition below.

\begin{assumption}[Blockwise pseudoconvexity]
\label{ass:block-pseudoconvex}
For every $i \in [N]$ and every $x_{-i} \in \mathcal{X}_{-i}$, the function
$z_i\longmapsto J_i(z_i,x_{-i})$ is pseudoconvex on $\mathcal X_i$, i.e., satisfies: 
For any $y_i \in \mathcal{X}_i, x = (x_i, x_{-i}) \in \mathcal{X}$,  
if $\left\langle \nabla_i J_i(x),y_i-x_i\right\rangle \geq 0$ 
then $J_i(y_i,x_{-i}) \geq J_i(x_i,x_{-i})$. 
\end{assumption} 

Pseudoconvexity is preserved by regular increasing reparameterizations and
ensures that unilateral first-order stationarity is globally optimal. 
Under this assumption, we obtain the following characterization of Nash equilibria. 

\begin{tcolorbox}[colframe=white!, top=2pt,left=2pt,right=2pt,bottom=2pt] 
\begin{proposition}[Ordinal Nash characterization]
\label{prop:ordinal-gap-ne}
Suppose the action sets are convex and
Assumption~\ref{ass:block-pseudoconvex} holds. Then
$\Gap_D(x)=0$ if and only if $x\in\NE(\mathcal G).$
\end{proposition}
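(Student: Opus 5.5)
The proof reduces both directions to a blockwise statement: since $\Gap_D(x)$ is a sum of nonnegative terms, it vanishes exactly when each player's term is zero, i.e.\ when $\sup_{y_i\in\mathcal X_i}\langle D_i(x),x_i-y_i\rangle=0$ for every $i$. Taking $y_i=x_i$ shows each term is nonnegative. The supremum is attained because the gap is defined for compact action sets. So I will show, for each $i$, that this blockwise term vanishes if and only if $x_i\in\operatorname{BR}_i^{\mathcal G}(x_{-i})$.

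\emph{($\Leftarrow$) Nash implies zero gap.} Let $x\in\NE(\mathcal G)$ and fix $i$. If $\nabla_iJ_i(x)=0$, then $D_i(x)=0$ and the term is trivially zero. Otherwise $x_i$ minimizes the $C^1$ function $J_i(\cdot,x_{-i})$ over the convex set $\mathcal X_i$. The standard first-order necessary condition (differentiate along the segment $x_i+s(y_i-x_i)$, $s\in[0,1]$, which stays in $\mathcal X_i$ by convexity) then gives $\langle\nabla_iJ_i(x),y_i-x_i\rangle\ge0$ for all $y_i\in\mathcal X_i$. Dividing by $\|\nabla_iJ_i(x)\|>0$ yields $\langle D_i(x),x_i-y_i\rangle\le0$, so the supremum is at most $0$ and hence equals $0$. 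This direction does not use pseudoconvexity.

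\emph{($\Rightarrow$) Zero gap implies Nash.} Suppose every blockwise term is zero. Then for each $i$ and every $y_i\in\mathcal X_i$ we have $\langle D_i(x),y_i-x_i\rangle\ge0$, and therefore $\langle\nabla_iJ_i(x),y_i-x_i\rangle\ge0$. This holds both when the gradient is nonzero (multiply by its positive norm) and when it is zero (the inner product is $0$). Assumption~\ref{ass:block-pseudoconvex}, applied at $x=(x_i,x_{-i})$, then gives $J_i(y_i,x_{-i})\ge J_i(x_i,x_{-i})$ for all $y_i$. Hence $x_i\in\operatorname{BR}_i^{\mathcal G}(x_{-i})$ for every $i$, and $x\in\NE(\mathcal G)$.

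The only delicate point is the zero-gradient case $D_i(x)=0$. Here the gap term vanishes regardless, and we need pseudoconvexity to guarantee that such a stationary point is truly a best response. This is exactly what the assumption provides, since its hypothesis $\langle 0,y_i-x_i\rangle\ge0$ holds trivially. Everything else reduces to the scale invariance of the sign of $\langle\nabla_iJ_i(x),\cdot\rangle$ under positive normalization.
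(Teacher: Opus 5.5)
Your proof is correct and follows essentially the same route as the paper: you reduce $\Gap_D(x)=0$ to the blockwise variational inequalities $\langle D_i(x),y_i-x_i\rangle\ge0$, rescale by $\|\nabla_iJ_i(x)\|$ to pass to $\nabla_iJ_i$, and then invoke pseudoconvexity for sufficiency and the first-order necessary condition for the converse. Your remark that the supremum is attained is harmless but not needed for the argument.
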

\end{tcolorbox}

Hence $\Gap_D$ provides a representation-invariant first-order certificate of
pure Nash equilibrium. 

\subsection{From Ordinal Regret to Ordinal Coarse Equilibrium}
\label{subsec:ordinal-coarse-equilibrium}

Ordinal directional regret induces the following distributional analogue of
the classical no-regret-to-CCE correspondence. 

\begin{definition}[Ordinal coarse variational equilibrium]
For $\mu\in\Delta(\mathcal X)$, define  
\begin{equation}
\label{eq:def-ordinal-coarse-var-gap}
\mathcal E_D(\mu)
:=
\sum_{i=1}^N
\left[
\sup_{y_i\in\mathcal X_i}
\mathbb E_{x\sim\mu}
\left\langle D_i(x),x_i-y_i\right\rangle
\right]_+ .
\end{equation}
We call $\mu$ an \emph{ordinal coarse variational equilibrium} if
$\mathcal E_D(\mu)=0$.
\end{definition}

For a point mass,
$\mathcal E_D(\delta_x)=\Gap_D(x)$, so under
Assumption~\ref{ass:block-pseudoconvex},
$\mathcal E_D(\delta_x)=0$ iff $x\in\NE(\mathcal G)$ (see Proposition~\ref{prop:ordinal-gap-ne}). 
For the empirical distribution of play $\widehat\mu_T=T^{-1}\sum_{t=1}^T\delta_{x_t}$, it follows directly from the definitions of directional regret $\Reg^D_{i,T}$ in \eqref{eq:def-directional-regret} and $\mathcal E_D(\mu)$ in \eqref{eq:def-ordinal-coarse-var-gap} that: 
\begin{equation}
\label{eq:ordinal-coarse-var-gap-from-dir-reg}
\mathcal E_D(\widehat\mu_T)
=
\frac{1}{T}
\sum_{i=1}^N
[\Reg^D_{i,T}]_+ .
\end{equation}
Hence sublinear ordinal regret for each player implies
$\mathcal E_D(\widehat\mu_T)\to0$, i.e., 
the empirical distribution has asymptotically vanishing ordinal coarse equilibrium violation.

\section{Learning on First-Order Preference Geometry}
\label{sec:ordinal-learning-dyn}

We now study learning dynamics induced by the preference-direction field $D$.

\subsection{Continuous-Time Block-Normalized Dynamics}
\label{subsec:continuous-time-dyn}

For compact convex action sets, consider the projected \textit{block-normalized pseudogradient dynamics}: 
\begin{equation}
\label{eq:BNPGD}
\dot x_i(t)
=
\Pi_{T_{\mathcal X_i}(x_i(t))}
\!\left(-D_i(x(t))\right),
\qquad i\in[N].
\tag{BNPGD}
\end{equation}
At interior points this reduces to $\dot x_i=-D_i(x)$.
By Lemma~\ref{lem:D-invariance}, the dynamics are invariant under regular
player-wise monotone reparameterizations. Because $D$ may be discontinuous at
stationary blocks, our statements apply to absolutely continuous solutions of
\eqref{eq:BNPGD} satisfying the dynamics almost everywhere.
The continuous-time regret and ordinal-potential analyses are in
Appendix~\ref{apdx:cont-time-BNPGD}.

\subsection{Discrete-Time No-Ordinal-Regret Learning}
\label{subsec:discrete-time-ordinal-grad}

Inspired by \eqref{eq:BNPGD} descent, we introduce the corresponding projected discrete-time dynamics: 
\begin{equation}
\label{eq:bnpgd-discrete}
x_{i,t+1}
=
\Pi_{\mathcal X_i}
\left(
x_{i,t}-\eta_tD_i(x_t)
\right),
\qquad i\in[N]\,,
\end{equation}
starting from an initial point $x_0 \in \mathcal{X}.$ 
We assume exact access to $D_i(x_t)$ for now. Section~\ref{sec:learning-from-comparison-feedback}
shows how these directions can be learned from finite pairwise comparisons.

\begin{tcolorbox}[colframe=white!, top=2pt,left=2pt,right=2pt,bottom=2pt] 
\begin{theorem}[Directional regret of discrete block-normalized learning]
\label{thm:discrete-regret}
Suppose $\eta_t=\eta>0$. Then, the algorithm~\eqref{eq:bnpgd-discrete} incurs 
$\Reg^D_{i,T}
\leq \frac{R_i^2}{2\eta} +\frac{\eta T}{2}.$
With 
$\eta=R_i/\sqrt T$, 
$\Reg^D_{i,T} \leq R_i\sqrt T.$
\end{theorem}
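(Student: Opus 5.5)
The argument is the standard online projected gradient descent analysis, applied to the linear losses $\ell_t(y_i):=\langle D_i(x_t),y_i\rangle$ seen by player $i$. Two facts make it go through without any convexity of $J_i$. First, the regret in Definition~\ref{def:ordinal-regret} is already linear in the comparator. Second, $\|D_i(x_t)\|_2\le 1$ holds by construction in \eqref{eq:Di-pref-field}. The opponents' play only affects which vector $g_t:=D_i(x_t)$ appears at round $t$, so the bound holds against arbitrary opponent behavior. It also holds for any adaptively chosen sequence $g_t$ with norm at most one.

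Fix $y_i\in\mathcal X_i$. Nonexpansiveness of $\Pi_{\mathcal X_i}$ together with $y_i=\Pi_{\mathcal X_i}(y_i)$ gives
\[
\|x_{i,t+1}-y_i\|_2^2\le \|x_{i,t}-\eta g_t-y_i\|_2^2
=\|x_{i,t}-y_i\|_2^2-2\eta\langle g_t,x_{i,t}-y_i\rangle+\eta^2\|g_t\|_2^2 .
\]
Rearranging and using $\|g_t\|_2\le1$ yields
\[
\langle g_t,x_{i,t}-y_i\rangle\le \frac{\|x_{i,t}-y_i\|_2^2-\|x_{i,t+1}-y_i\|_2^2}{2\eta}+\frac{\eta}{2}.
\]

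Summing over $t=1,\dots,T$, the distance terms telescope. Dropping the final nonpositive term $-\|x_{i,T+1}-y_i\|_2^2/(2\eta)$ and bounding $\|x_{i,1}-y_i\|_2\le R_i$ gives
\[
\sum_{t=1}^T\langle D_i(x_t),x_{i,t}-y_i\rangle\le \frac{R_i^2}{2\eta}+\frac{\eta T}{2}.
\]
The right-hand side does not depend on $y_i$, so taking the supremum over $y_i\in\mathcal X_i$ gives the first claim. Substituting $\eta=R_i/\sqrt T$ gives $R_i\sqrt T/2+R_i\sqrt T/2=R_i\sqrt T$.

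There is no real obstacle here. The only points needing care are bookkeeping. The first is that the time indexing ($x_0$ versus $x_1$ as the starting point) does not affect the bound, because any iterate lies in $\mathcal X_i$ and is therefore within $R_i$ of $y_i$. The second is that a zero block $D_i(x_t)=0$ causes no trouble, since the inequality then holds trivially. I would also note that the tuning $\eta=R_i/\sqrt T$ is player-specific. If all players share a single step size $\eta$, each player still gets the first bound, and this suffices for Corollary~\ref{cor:discrete-equilibrium} via \eqref{eq:ordinal-coarse-var-gap-from-dir-reg}.
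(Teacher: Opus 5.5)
Your proposal is correct and is essentially the paper's proof: both use nonexpansiveness of $\Pi_{\mathcal X_i}$ at a fixed comparator $y_i$, expand the square, apply $\|D_i(x_t)\|\le 1$, telescope, bound the initial distance by $R_i$, and take the supremum over $y_i$. Your extra remarks on the starting index, zero blocks, and arbitrary opponent behavior are consistent with the paper's argument.
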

\end{tcolorbox}

Combining Theorem~\ref{thm:discrete-regret} with
\eqref{eq:ordinal-coarse-var-gap-from-dir-reg} gives:

\begin{tcolorbox}[colframe=white!, top=2pt,left=2pt,right=2pt,bottom=2pt] 
\begin{corollary}[Convergence to coarse variational equilibria]
\label{cor:discrete-equilibrium}
Let $\widehat\mu_T=T^{-1}\sum_{t=1}^T\delta_{x_t}$. If every player runs
\eqref{eq:bnpgd-discrete} with
$\eta_i=R_i/\sqrt T$, then
$\mathcal E_D(\widehat\mu_T)
\leq
\frac{1}{\sqrt T}
\sum_{i=1}^NR_i.$
\end{corollary}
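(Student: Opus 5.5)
The corollary follows by chaining the exact identity \eqref{eq:ordinal-coarse-var-gap-from-dir-reg} with the per-player bound of Theorem~\ref{thm:discrete-regret}. I would first check that identity from the definitions. For each $i$, the inner expectation under $\widehat\mu_T$ is an average:
\[
\mathbb E_{x\sim\widehat\mu_T}\langle D_i(x),x_i-y_i\rangle=\tfrac1T\sum_{t=1}^T\langle D_i(x_t),x_{i,t}-y_i\rangle .
\]
Taking the supremum over $y_i$ gives $\Reg^D_{i,T}/T$. Since $[\cdot]_+$ is monotone and positively homogeneous, summing over players yields $\mathcal E_D(\widehat\mu_T)=T^{-1}\sum_i[\Reg^D_{i,T}]_+$.

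Next I would apply Theorem~\ref{thm:discrete-regret} to each player separately, with player $i$'s own step size $\eta_i=R_i/\sqrt T$. This needs one observation: the regret bound for player $i$ depends only on player $i$'s update and holds against arbitrary opponent sequences $(x_{-i,t})$. In its proof, the standard online projected gradient argument uses only the following three facts:
\begin{itemize}
\item the projection onto $\mathcal X_i$ is nonexpansive;
\item $\|D_i(x_t)\|\le 1$;
\item $\|x_{i,1}-y_i\|\le R_i$.
\end{itemize}
None of these involves the other players' step sizes, so heterogeneous steps $\eta_i$ are allowed. Each player therefore gets $\Reg^D_{i,T}\le R_i\sqrt T$.

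Finally, since $R_i\sqrt T\ge0$, the bound $\Reg^D_{i,T}\le R_i\sqrt T$ implies $[\Reg^D_{i,T}]_+\le R_i\sqrt T$. Summing over $i$ and dividing by $T$ gives
\[
\mathcal E_D(\widehat\mu_T)\le T^{-1/2}\sum_i R_i .
\]

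The only point needing care is this decoupling across players, namely that Theorem~\ref{thm:discrete-regret} is stated with a single $\eta$ while the corollary uses player-specific $\eta_i$. I would resolve it by noting that the theorem's bound for player $i$ is adversarial in $x_{-i,t}$, so it applies player by player. A minor indexing mismatch between $x_0$ and $x_1$ in the time indexing only affects which initial point enters the bound, and that term is still controlled by $R_i$.
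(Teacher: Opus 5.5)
Your proposal is correct and matches the paper's argument: the paper gets the corollary directly by combining the identity \eqref{eq:ordinal-coarse-var-gap-from-dir-reg} with the per-player bound $\Reg^D_{i,T}\le R_i\sqrt T$ from Theorem~\ref{thm:discrete-regret}. Your remarks on player-specific step sizes and on the $x_0$ versus $x_1$ indexing are valid, and they spell out details the paper leaves implicit.
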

\end{tcolorbox}
This shows that the ordinal coarse equilibrium residual of empirical play decays as $\mathcal{O}(T^{-1/2})$. 

\section{Learning from Single Pairwise Comparisons}
\label{sec:learning-from-comparison-feedback} 

We now replace direct access to the preference-direction field $D$ by pairwise comparisons.\\

\noindent\textbf{Feasible comparison queries.}
Throughout this section, we assume that each $\mathcal X_i$ is compact,
convex, and has nonempty interior.
To ensure feasible symmetric queries, 
fix $c_i\in\operatorname{int}(\mathcal X_i)$
and $\rho_i>0$ such that
$c_i+\rho_i\mathbb B_{d_i}\subseteq\mathcal X_i,$ 
and let $\rho:=\min_{i\in[N]}\rho_i$.
For $0<\delta<\rho_i$, define for any $z_i\in\mathcal X_i,$
$\mathsf Q_{i,\delta}(z_i) :=
c_i+
\left(1-\frac{\delta}{\rho_i}\right)(z_i-c_i).$  
Then, 
$\mathsf Q_{i,\delta}(z_i)\pm\delta u 
\in\mathcal X_i$ and 
$\|\mathsf Q_{i,\delta}(z_i)-z_i\|
=
\frac{\delta}{\rho_i}\|z_i-c_i\|
\le
\frac{R_i}{\rho_i}\delta$ for any $u\in\mathbb S^{d_i-1}$, 
we write
$\mathcal X_i^\delta:=\mathsf Q_{i,\delta}(\mathcal X_i)$ and 
$\mathcal X^\delta:=\prod_{i=1}^N\mathcal X_i^\delta.$ 

\subsection{The Finite-Resolution Comparison Vector Field}
\label{subsec:directional-reg-bound}
Let $U_i\sim\operatorname{Unif}(\mathbb S^{d_i-1})$. For any radius
$\delta>0$ and any $x \in \mathcal{X}^{\delta}$, define: 
\begin{equation}
\label{eq:central-comparison}
S_i^\delta(x,U_i)
:=
\mathsf C_i
\left(
x_i+\delta U_i,\,
x_i-\delta U_i;\,
x_{-i}
\right) =
\operatorname{sgn}\!\left(
J_i(x_i+\delta U_i,x_{-i})
-
J_i(x_i-\delta U_i,x_{-i})
\right).
\end{equation}
Differentiability gives $
J_i(x_i+\delta U_i,x_{-i})
-
J_i(x_i-\delta U_i,x_{-i})
=
2\delta
\langle\nabla_iJ_i(x),U_i\rangle
+o(\delta),$
so infinitesimal comparisons reveal the sign of
$\langle D_i(x),U_i\rangle$, but not the magnitude of the gradient.
Define the \emph{$\delta$-comparison vector field} 
induced by the comparison oracle at finite resolution $\delta > 0$ as follows: 
\begin{equation}
\label{eq:delta-comparison-dir}
D_i^\delta(x)
:=
\frac{1}{\kappa_{d_i}}
\mathbb E_{U_i}
\left[
S_i^\delta(x,U_i)U_i
\right], \quad \forall x \in \mathcal{X}^{\delta}\,,
\end{equation}
where
$\kappa_d :=
\mathbb E_{U\sim\operatorname{Unif}(\mathbb S^{d-1})}|U_1|$ and $U_1$ is the 
first coordinate of the random variable $U$. We collect some of the 
properties of $D_i^{\delta}$ in the following result. 

\begin{tcolorbox}[colframe=white!, top=2pt,left=2pt,right=2pt,bottom=2pt] 
\begin{proposition}[Properties of the comparison field]
\label{prop:comparison-field}
For any $i \in [N]$ and every $\delta>0$,
(i) $D_i^\delta$ is invariant under every strictly increasing transformation
$J_i\mapsto\phi_i\circ J_i$;
(ii) $\|D_i^\delta(x)\|\leq1$ for every $x\in\mathcal{X}^{\delta}$; 
(iii) for any $x \in \mathcal{X}$ with $x_i \in \operatorname{int}(\mathcal{X}_i)$ and $\nabla_iJ_i(x)\neq0$, $D_i^\delta(x)$ is well-defined for all sufficiently small $\delta > 0$, and 
$\lim_{\delta\downarrow0}D_i^\delta(x)=D_i(x)$.
\end{proposition}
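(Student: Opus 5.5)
The plan is to treat the three claims separately. All three follow from the sign structure of $S_i^\delta$ together with a standard spherical identity for $\kappa_{d}$.

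For (i), take $\phi_i$ strictly increasing. Then for every realization $U_i=u$,
\[
\operatorname{sgn}\bigl(\phi_i(J_i(x_i+\delta u,x_{-i}))-\phi_i(J_i(x_i-\delta u,x_{-i}))\bigr)=\operatorname{sgn}\bigl(J_i(x_i+\delta u,x_{-i})-J_i(x_i-\delta u,x_{-i})\bigr).
\]
So $S_i^\delta(x,u)$ is unchanged pointwise, and taking expectations gives invariance of $D_i^\delta$. No smoothness of $\phi_i$ is needed here.

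For (ii), use duality. For any unit vector $v$,
\[
\langle D_i^\delta(x),v\rangle=\kappa_{d_i}^{-1}\mathbb E[S_i^\delta U_i^\top v]\le \kappa_{d_i}^{-1}\mathbb E|\langle U_i,v\rangle|.
\]
By rotation invariance of the uniform measure, $\mathbb E|\langle U_i,v\rangle|=\mathbb E|U_{i,1}|=\kappa_{d_i}$. Taking the supremum over $v$ gives $\|D_i^\delta(x)\|\le1$.

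For (iii), first handle well-definedness. Since $x_i\in\operatorname{int}(\mathcal X_i)$, for small $\delta$ the points $x_i\pm\delta u$ lie in $\mathcal X_i$ for all unit $u$, so $S_i^\delta(x,u)$ is defined. This is the only role of $\mathcal X^\delta$ in the definition.

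Next, let $g=\nabla_iJ_i(x)\neq0$ and $e=g/\|g\|$. The key identity is
\[
\mathbb E[\operatorname{sgn}(\langle e,U\rangle)U]=\kappa_d e.
\]
To see it, note that by symmetry under reflections fixing $e$, the component orthogonal to $e$ vanishes, and the $e$-component equals $\mathbb E|\langle e,U\rangle|=\kappa_d$. Hence $D_i(x)=\kappa_{d_i}^{-1}\mathbb E[\operatorname{sgn}(\langle g,U_i\rangle)U_i]$. It therefore suffices to show $\mathbb E[(S_i^\delta(x,U_i)-\operatorname{sgn}\langle g,U_i\rangle)U_i]\to0$.

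Since $J_i$ is $C^1$ near $x$, the mean value theorem gives
\[
J_i(x_i+\delta u,x_{-i})-J_i(x_i-\delta u,x_{-i})=2\delta\langle g,u\rangle+2\delta r_\delta(u),\qquad \sup_{u}|r_\delta(u)|\to0.
\]
This uniformity comes from uniform continuity of $\nabla_iJ_i$ on a compact neighborhood. Consequently $S_i^\delta(x,u)=\operatorname{sgn}\langle g,u\rangle$ whenever $|\langle g,u\rangle|>\sup|r_\delta|=:\epsilon_\delta$. The two signs can therefore differ only on the band $\{u:|\langle e,u\rangle|\le\epsilon_\delta/\|g\|\}$.

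On that band, the integrand has norm at most $2$. The band's probability tends to $0$ as $\delta\downarrow0$, because the law of $\langle e,U\rangle$ has no atom at $0$ for $d_i\ge2$, and equals $\pm1$ for $d_i=1$, so the band is eventually empty. Equivalently, dominated convergence applies, since the integrand converges pointwise off the null set $\{\langle g,u\rangle=0\}$.

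The main care point is this last step. It needs the uniformity of the remainder together with the null hyperplane, rather than a naive $o(\delta)$ bound that is only pointwise in $u$. It is also where $\nabla_iJ_i(x)\neq0$ is essential: without it the band does not shrink and the limit can fail.
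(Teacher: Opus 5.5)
Your proof is correct and follows the paper's argument in all three parts: pointwise sign invariance for (i), the duality bound with $\mathbb E|\langle U_i,v\rangle|=\kappa_{d_i}$ for (ii), and for (iii) the symmetry identity $\mathbb E[\operatorname{sgn}(\langle e,U\rangle)U]=\kappa_{d}e$ combined with convergence of the signs off the null great sphere. Your band estimate is a quantitative version of the paper's dominated-convergence step. One correction to your closing remark: uniformity of the remainder is not needed, since the pointwise-in-$u$ limit $S_i^\delta(x,u)\to\operatorname{sgn}\langle\nabla_iJ_i(x),u\rangle$ together with $\|S_i^\delta(x,u)u\|\le1$ already suffices by dominated convergence, which is exactly how the paper argues and what you note yourself. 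The uniform bound therefore only buys an explicit rate, $\|D_i^\delta(x)-D_i(x)\|\le \frac{2}{\kappa_{d_i}}\mathbb P\bigl(|\langle e,U_i\rangle|\le \epsilon_\delta/\|\nabla_iJ_i(x)\|\bigr)$.
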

\end{tcolorbox}
Hence $D^\delta$ is comparison-invariant and converges pointwise to
$D$ away from stationary blocks.\\

\noindent\textbf{Comparison-normalized gradient play.} We now use the finite-resolution comparison vector field~$D^{\delta}$ and its expectation expression~\eqref{eq:delta-comparison-dir} to motivate the stochastic algorithm we consider, which uses only pairwise stochastic comparisons.
At round $t$, each player $i$ samples
$U_{i,t}\sim\operatorname{Unif}(\mathbb S^{d_i-1})$ and queries the symmetric
comparison~\eqref{eq:central-comparison} at radius $\delta_t > 0$.
Under the noisy comparison model of
Section~\ref{sec:continuous-games-ordinal-feedback}, the observed stochastic sign is
$Y_{i,t} =\xi_{i,t}S_i^{\delta_t}(x_t,U_{i,t}),$
with exact comparisons corresponding to $\nu_i=1/2$. We define the associated
single-comparison estimator $\widehat D_{i,t}
:= \frac{Y_{i,t}}{2\nu_i\kappa_{d_i}}U_{i,t}.$ 
This random-direction construction is standard in comparison-based
optimization \citep{saha-et-al21dueling,cai-et-al22one-bit}; here it is
applied blockwise to a coupled game and evaluated through ordinal regret. 

\begin{tcolorbox}[colframe=white!, top=2pt,left=2pt,right=2pt,bottom=2pt] 
\begin{lemma}[Single-comparison estimator]
\label{lem:one-bit-estimator}
Let $\mathcal F_t = \sigma(\xi_0, \dots, \xi_{t-1}, U_0, \dots, U_{t-1})$ be the $\sigma$-algebra induced by the random uniform directions and comparison noise up to round $t-1$ for $t \geq 1$. In particular,
$z_t$ and
$x_t=\mathsf Q_{\delta_t}(z_t)$ are $\mathcal F_t$-measurable, and
$x_t\in\mathcal X^{\delta_t}$ by construction.
Conditionally on $\mathcal F_t$ and the current action profile
$x_t$,
$\mathbb E[\widehat D_{i,t}\mid\mathcal F_t]
= D_i^{\delta_t}(x_t),$
and $\mathbb E[\|\widehat D_{i,t}\|^2\mid\mathcal F_t] \leq G_i^2,$ 
where $G_i= \frac{1}{2\nu_i\kappa_{d_i}}
= \Theta\!\left(\frac{\sqrt{d_i}}{\nu_i}\right).$ 
\end{lemma}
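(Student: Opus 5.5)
The plan is to condition on $\mathcal F_t$, which fixes $x_t$ and $\delta_t$. It then remains to integrate over the two fresh sources of randomness at round $t$: the direction $U_{i,t}$ and the noise $\xi_{i,t}$. We use that these are independent of each other and of $\mathcal F_t$. Since $x_t=\mathsf Q_{\delta_t}(z_t)\in\mathcal X^{\delta_t}$, every query $x_{i,t}\pm\delta_tU_{i,t}$ lies in $\mathcal X_i$. Hence $S_i^{\delta_t}(x_t,U_{i,t})$ is well defined, and $D_i^{\delta_t}(x_t)$ is given by \eqref{eq:delta-comparison-dir}.

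\textbf{Unbiasedness.} We apply the tower property over $\xi_{i,t}$ first. Since $\xi_{i,t}$ is independent of $(\mathcal F_t,U_{i,t})$ with $\mathbb E[\xi_{i,t}]=(\tfrac12+\nu_i)-(\tfrac12-\nu_i)=2\nu_i$, we get
\[
\mathbb E[\widehat D_{i,t}\mid \mathcal F_t,U_{i,t}]=\frac{2\nu_i}{2\nu_i\kappa_{d_i}}S_i^{\delta_t}(x_t,U_{i,t})U_{i,t}.
\]
Integrating over $U_{i,t}\sim\operatorname{Unif}(\mathbb S^{d_i-1})$ then gives $\frac{1}{\kappa_{d_i}}\mathbb E_{U}[S_i^{\delta_t}(x_t,U)U]=D_i^{\delta_t}(x_t)$, as required. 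One subtlety must be handled. When $S_i^{\delta_t}=0$, the noisy sign $Y_{i,t}=\xi_{i,t}\cdot 0=0$, and the identity still holds trivially. So ties cause no bias, and the noise model is used only through $\mathbb E\xi=2\nu_i$.

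\textbf{Second moment.} We have $|Y_{i,t}|\le1$ and $\|U_{i,t}\|=1$, so $\|\widehat D_{i,t}\|\le \frac{1}{2\nu_i\kappa_{d_i}}=G_i$ almost surely. This gives the conditional second-moment bound immediately.

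\textbf{Order of $\kappa_d$.} The only nonroutine step is the asymptotic $G_i=\Theta(\sqrt{d_i}/\nu_i)$, which amounts to showing $\kappa_d=\Theta(1/\sqrt d)$. For the upper bound, Cauchy--Schwarz and symmetry give $\kappa_d\le(\mathbb E U_1^2)^{1/2}=1/\sqrt d$. For the lower bound, we could use the explicit formula $\kappa_d=\Gamma(d/2)/(\sqrt\pi\,\Gamma((d+1)/2))$ together with Gautschi's inequality. Alternatively, we could use $\mathbb E|U_1|\ge (\mathbb E U_1^2)^{3/2}/(\mathbb E U_1^4)^{1/2}$ with $\mathbb E U_1^4=3/(d(d+2))$, which yields $\kappa_d\ge \frac{1}{\sqrt{3}}\cdot\frac{1}{\sqrt d}\cdot\sqrt{\tfrac{d+2}{d}}\,\frac{1}{1}\gtrsim 1/\sqrt d$. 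Either route gives $\kappa_d\asymp d^{-1/2}$, and hence $G_i=\Theta(\sqrt{d_i}/\nu_i)$.
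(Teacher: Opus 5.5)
Your proof is correct and follows the paper's argument. You condition on $(\mathcal F_t,U_{i,t})$ to get $\mathbb E[Y_{i,t}\mid U_{i,t},\mathcal F_t]=2\nu_i S_i^{\delta_t}(x_t,U_{i,t})$, integrate over $U_{i,t}$ to recover $D_i^{\delta_t}(x_t)$, and bound $\|\widehat D_{i,t}\|\le G_i$ almost surely; your additional Cauchy--Schwarz/H\"older verification that $1/\sqrt{3d}\le\kappa_d\le 1/\sqrt d$ justifies the $\Theta(\sqrt{d_i}/\nu_i)$ claim, which the paper only asserts through its remark $\kappa_d^{-1}=\sqrt{\pi d/2}\,(1+o(1))$.
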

\end{tcolorbox}

To keep all comparison queries feasible, each player maintains a
reference iterate $z_{i,t}\in\mathcal X_i$ and plays the contracted
action $x_{i,t}
=
\mathsf Q_{i,\delta_t}(z_{i,t})
\in\mathcal X_i^{\delta_t}\,.$ 
Player $i$ then samples
$U_{i,t}\sim\operatorname{Unif}(\mathbb S^{d_i-1})$ and queries
$(x_{i,t}+\delta_tU_{i,t},x_{-i,t})$ versus 
$(x_{i,t}-\delta_tU_{i,t},x_{-i,t})$, 
which are feasible by construction. From the observed sign $Y_{i,t}$,
the player updates the reference iterate as follows: 
\begin{equation}
\label{eq:cngp}
z_{i,t+1}
=
\Pi_{\mathcal X_i}
\left(
z_{i,t}-\eta_{i,t}\widehat D_{i,t}
\right)\,, \quad \widehat D_{i,t}
=
\frac{Y_{i,t}}{2\nu_i\kappa_{d_i}}U_{i,t},
\tag{CNGP} 
\end{equation}
We call this procedure comparison-normalized gradient play (CNGP).
The profile $x_t$ is the action profile actually played and used as the
center of the comparison queries, while $z_t$ is the internal reference
iterate. Note that $\|x_{i,t}-z_{i,t}\| \le \frac{R_i}{\rho_i}\delta_t$, 
so the two sequences become asymptotically indistinguishable when
$\delta_t\to0$. Each round still uses one pairwise comparison per player,
and all players' queries can be performed in parallel. 
The full pseudocode is provided in Appendix~\ref{app:cngp-pseudocode}.

\subsection{Single-Comparison Ordinal Regret Guarantees}

Define the finite-resolution regret and expected pseudo-regret, respectively, as follows: 
\begin{equation}
\label{eq:finite-res-reg}
\Reg_{i,T}^{\delta}
:=
\sup_{y_i\in\mathcal X_i}
\sum_{t=1}^T
\left\langle
D_i^{\delta_t}(x_t),
x_{i,t}-y_i
\right\rangle,\, 
\overline{\Reg}_{i,T}^{\delta}
:=
\sup_{y_i\in\mathcal X_i}
\mathbb E
\left[
\sum_{t=1}^T
\left\langle
D_i^{\delta_t}(x_t),
x_{i,t}-y_i
\right\rangle
\right]\,.
\end{equation}
The benchmark is evaluated at the played actions $x_t$; the reference
iterates $z_t$ enter only the implementation of \eqref{eq:cngp}. 
Both quantities are invariant under arbitrary strictly increasing transformations of the players' utilities since $D_i^{\delta}$ is invariant (by Proposition~\ref{prop:comparison-field}). 

\begin{tcolorbox}[colframe=white!, top=2pt,left=2pt,right=2pt,bottom=2pt] 
\begin{theorem}[Directional regret from one comparison per round]
\label{thm:comparison-regret}
Let player $i \in [N]$ run \eqref{eq:cngp} with constant stepsize $\eta_i>0$ and radii
$0<\delta_t<\rho$. Then
$
\overline{\Reg}_{i,T}^{\delta}
\le
\frac{R_i^2}{2\eta_i}
+
\frac{\eta_iG_i^2T}{2}
+
\frac{R_i}{\rho_i}\sum_{t=1}^T\delta_t
$ for any $T \geq 1\,.$ If in addition  
$\eta_i=\frac{R_i}{G_i\sqrt T},
\delta_t=\frac{\delta}{\sqrt T},
0<\delta<\rho,$ 
then
\begin{equation}
\overline{\Reg}_{i,T}^{\delta}
\leq 
R_i(G_i+1)\sqrt T 
=
O\left(
\frac{R_i\sqrt{d_iT}}{\nu_i}
\right), \quad 
\mathbb E[\Reg_{i,T}^{\delta}]  
\leq 
3R_iG_i\sqrt T 
=
O\left(
\frac{R_i\sqrt{d_iT}}{\nu_i}
\right)\,.
\end{equation}
Furthermore, 
$\mathbb E\bigl[[\Reg_{i,T}^{\delta}]_+\bigr] \leq 4R_iG_i\sqrt T$.
\end{theorem}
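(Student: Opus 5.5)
The plan is the standard online projected stochastic gradient analysis applied to the reference iterates $z_{i,t}$, followed by a correction for the gap between $z_{i,t}$ and the played actions $x_{i,t}$. Fix a comparator $y_i\in\mathcal X_i$ and write $g_t:=D_i^{\delta_t}(x_t)$.

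\textbf{Step 1 (one-step inequality).} Nonexpansiveness of $\Pi_{\mathcal X_i}$ gives
\[
\|z_{i,t+1}-y_i\|^2\le\|z_{i,t}-y_i\|^2-2\eta_i\langle\widehat D_{i,t},z_{i,t}-y_i\rangle+\eta_i^2\|\widehat D_{i,t}\|^2 .
\]
Summing over $t$ and telescoping, with $\|z_{i,1}-y_i\|\le R_i$, yields
\[
\sum_{t=1}^T\langle\widehat D_{i,t},z_{i,t}-y_i\rangle\le\frac{R_i^2}{2\eta_i}+\frac{\eta_i}{2}\sum_t\|\widehat D_{i,t}\|^2 .
\]

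\textbf{Step 2 (expectation bound).} Since $z_{i,t}$ and $x_t$ are $\mathcal F_t$-measurable and $y_i$ is deterministic, Lemma~\ref{lem:one-bit-estimator} gives $\mathbb E\langle\widehat D_{i,t},z_{i,t}-y_i\rangle=\mathbb E\langle g_t,z_{i,t}-y_i\rangle$ and $\mathbb E\|\widehat D_{i,t}\|^2\le G_i^2$.

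\textbf{Step 3 (switch to played actions).} Replace $z_{i,t}$ by $x_{i,t}$ using $\|g_t\|\le1$ (Proposition~\ref{prop:comparison-field}(ii)) and $\|x_{i,t}-z_{i,t}\|\le R_i\delta_t/\rho_i$. This costs at most $\frac{R_i}{\rho_i}\sum_t\delta_t$. Taking the supremum over $y_i$ gives the first bound. With $\eta_i=R_i/(G_i\sqrt T)$ the first two terms give $R_iG_i\sqrt T$. With $\delta_t=\delta/\sqrt T$ and $\delta<\rho\le\rho_i$, the third term is at most $R_i\sqrt T$. Together these give $R_i(G_i+1)\sqrt T$.

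\textbf{Step 4 (supremum inside the expectation).} This is the main obstacle: for $\mathbb E[\Reg^\delta_{i,T}]$ the supremum sits inside the expectation, so Step 2 cannot be applied pointwise in $y_i$. I would decompose $g_t=\widehat D_{i,t}+(g_t-\widehat D_{i,t})$ and set $M_t:=g_t-\widehat D_{i,t}$, a martingale difference sequence with $\mathbb E[\|M_t\|^2\mid\mathcal F_t]\le G_i^2$ (centering reduces the second moment). Then
\[
\sum_t\langle g_t,x_{i,t}-y_i\rangle=\sum_t\langle\widehat D_{i,t},z_{i,t}-y_i\rangle+\sum_t\langle M_t,z_{i,t}-c_i\rangle+\Big\langle\sum_t M_t,c_i-y_i\Big\rangle+\text{(shift term)} .
\]
\begin{itemize}
\item The first sum is bounded pathwise, uniformly in $y_i$, by Step 1; its expectation is at most $R_iG_i\sqrt T$.
\item The second sum has zero mean, because $z_{i,t}$ is $\mathcal F_t$-measurable.
\item The third term is at most $R_i\|\sum_tM_t\|$, whose expectation is at most $R_i\sqrt{\sum_t\mathbb E\|M_t\|^2}\le R_iG_i\sqrt T$ by Jensen and orthogonality of martingale increments.
\item The shift term is at most $R_i\sqrt T\le R_iG_i\sqrt T$, since $G_i\ge1$ because $\kappa_d\le1$ and $2\nu_i\le1$.
\end{itemize}
Summing gives $3R_iG_i\sqrt T$.

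\textbf{Step 5 (positive part).} For $\mathbb E[[\Reg^\delta_{i,T}]_+]$, I would rerun the same decomposition with positive parts. The pathwise terms are already nonnegative, and the third and shift terms are bounded in absolute value. The second sum $\sum_t\langle M_t,z_{i,t}-c_i\rangle$ is the only signed piece. Its positive part has expectation at most its $L^2$ norm, which is at most $R_iG_i\sqrt T$. This adds one more $R_iG_i\sqrt T$ and gives $4R_iG_i\sqrt T$.
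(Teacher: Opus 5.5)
Your proposal is correct and follows the paper's proof essentially step for step. It uses the same pathwise projection/telescoping bound on the reference iterates $z_{i,t}$, the same conditional unbiasedness for the pseudo-regret, and, for the realized regret and its positive part, the same martingale decomposition with $M_{i,t}=D_i^{\delta_t}(x_t)-\widehat D_{i,t}$ around a fixed reference point: a zero-mean term, an $R_i\|\sum_t M_{i,t}\|$ term controlled by orthogonality, and an $L^2$ bound on the signed martingale sum. The differences are cosmetic: you pair the martingale with $z_{i,t}-c_i$ instead of $x_{i,t}-y_i^0$, and you bound the shift term via $\|D_i^{\delta_t}(x_t)\|\le 1$ and then $G_i\ge1$, where the paper uses $\|\widehat D_{i,t}\|\le G_i$; both routes give the same constants $3R_iG_i\sqrt T$ and $4R_iG_i\sqrt T$.
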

\end{tcolorbox}

The guarantee requires no monotonicity, potential structure, or assumptions on opponents; the $\sqrt{d_i}$ factor is the cost of one random comparison per round (minibatching is discussed in Appendix~\ref{subsubsec:trading-comp-for-dim}). Importantly, $D^\delta(x)\to D(x)$ only pointwise away from stationary blocks, so vanishing resolution does not in general imply sublinear infinitesimal regret; see Proposition~\ref{prop:finite-vs-infinitesimal-regret}. This motivates evaluating finite-query learning using the exact oracle-identified field $D^\delta$. 

\subsection{Last-Iterate Nash Convergence in Ordinal Potential Games}
\label{subsec:as-cvg-Nash-ordinal-pot}

In this section, we suppose that the game has an additional ordinal potential structure 
to prove convergence to Nash equilibria. Under this structure, 
although the potential $\Phi$ is never observed, every unilateral comparison generated by a player
coincides with the corresponding comparison of~$\Phi$: 
for every $i\in[N]$, every $\delta\in(0,\rho)$,
$x\in\mathcal X^\delta$, and
$u\in\mathbb S^{d_i-1}$,
\[
\operatorname{sgn}
\left(
J_i(x_i+\delta u,x_{-i})
-
J_i(x_i-\delta u,x_{-i})
\right)
=
\operatorname{sgn}
\left(
\Phi(x_i+\delta u,x_{-i})
-
\Phi(x_i-\delta u,x_{-i})
\right).
\]
Consequently,
$D_{i,J_i}^\delta(x)=D_{i,\Phi}^\delta(x)$ for any $x\in\mathcal X^\delta$. 
Hence the comparison signal used by CNGP is exactly the finite-resolution comparison field of a common ordinal potential, 
even though the potential itself is never observed.

\begin{assumption}[Nondegeneracy]
\label{as:non-degeneracy-ord-pot}
For every $i\in[N]$ and $x\in\mathcal X$,
$\nabla_i J_i(x)=0
\Longleftrightarrow
\nabla_i\Phi(x)=0.$
\end{assumption}
This condition rules out ordinal representations that preserve all unilateral
comparisons while introducing or removing stationary points. Together with
the ordinal-potential property, it ensures that the players' and potential's
first-order preference directions are positively aligned whenever nonzero.

\begin{assumption}[Robbins-Monro stepsizes and vanishing resolution]
\label{as:rm-stepsizes} The stepsizes $(\eta_t)$ and the resolution $(\delta_t)$ satisfy: 
$\sum_{t=1}^\infty\eta_t=\infty, \sum_{t=1}^\infty\eta_t^2<\infty$ 
with $\eta_t>0, \delta_t \in (0,\rho), \forall t \in \mathbb{N}$ and $\delta_t\to0.$
\end{assumption}

\begin{tcolorbox}[colframe=white!, top=2pt,left=2pt,right=2pt,bottom=2pt] 
\begin{theorem}[Almost-sure last-iterate convergence to Nash equilibria]
\label{thm:as-comparison-ne}
Let $\mathcal G$ be an ordinal potential game with ordinal potential $\Phi$, and
suppose that each $\mathcal X_i$ is compact and convex with nonempty interior. 
Assume that the costs $J_i$ and the potential $\Phi$ are continuously differentiable on an open neighborhood of $\mathcal{X}$ on which $\Phi$ is $L$-smooth.
Let Assumptions~\ref{ass:block-pseudoconvex} and
\ref{as:non-degeneracy-ord-pot} hold, and suppose that $\operatorname{int}_{\mathbb R}
\bigl(\Phi(\operatorname{NE}(\mathcal G))\bigr)
=
\varnothing.$ 
Let $(x_t)$ be generated by \eqref{eq:cngp}, 
where $\widehat D_{i,t}$ is defined in
\eqref{eq:cngp} under the gap-independent noisy
comparison model with $\nu_i\in(0,1/2]$.
Suppose that the stepsizes and comparison radii satisfy
Assumption~\ref{as:rm-stepsizes}. Then, with probability one,
$\operatorname{dist}
\bigl(x_t,\operatorname{NE}(\mathcal G)\bigr)
\longrightarrow 0$ and $\Phi(x_t)$ converges.  
If $\operatorname{NE}(\mathcal G)$ is finite, then there exists an
$\operatorname{NE}(\mathcal G)$-valued random variable $X_\infty$ such that
$x_t\to X_\infty$ almost surely. 
In particular, if the Nash equilibrium is unique, the iterates converge to it
almost surely.
\end{theorem}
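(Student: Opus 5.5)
The plan is to treat \eqref{eq:cngp} as a stochastic approximation of projected block-normalized descent on the unobserved potential $\Phi$, and to use $\Phi$ as a Lyapunov function through a Robbins--Siegmund argument. The first step is to replace the players' comparison fields by those of $\Phi$. By the ordinal-potential identity displayed before Assumption~\ref{as:non-degeneracy-ord-pot}, $D^{\delta}_{i,J_i}=D^\delta_{i,\Phi}$ on $\mathcal X^\delta$. Lemma~\ref{lem:one-bit-estimator} then gives $\mathbb E[\widehat D_{i,t}\mid\mathcal F_t]=D^{\delta_t}_{i,\Phi}(x_t)$ and a second moment bounded by $G_i^2$. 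From $L$-smoothness of $\Phi$, I would derive a one-step descent inequality for the reference iterates, written blockwise (each player uses stepsize $\eta_t$, up to constants):
\[
\mathbb E[\Phi(z_{t+1})\mid\mathcal F_t]\le \Phi(z_t)-\eta_t\sum_i \bigl\langle \nabla_i\Phi(z_t),\, \tfrac{1}{\eta_t}\bigl(z_{i,t}-\mathbb E[\Pi_{\mathcal X_i}(z_{i,t}-\eta_t\widehat D_{i,t})\mid\mathcal F_t]\bigr)\bigr\rangle+\tfrac{L}{2}\eta_t^2\textstyle\sum_i G_i^2 .
\]
To handle the projection, I would use the standard variational characterization of $\Pi_{\mathcal X_i}$. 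The positive part to extract is $\eta_t\, g_t$ with $g_t:=\sum_i\langle\nabla_i\Phi(x_t),D^{\delta_t}_{i,\Phi}(x_t)\rangle$. This term is nonnegative up to an $o(1)$ error as $\delta_t\to0$, because for small $\delta$ the sign of the central difference of $\Phi$ matches $\operatorname{sgn}\langle\nabla_i\Phi,u\rangle$ except on a cap of directions of vanishing measure. The swap between $x_t$ and $z_t$ costs $O(\delta_t)$ since $\|x_{i,t}-z_{i,t}\|\le R_i\delta_t/\rho_i$.

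The second step is to make the descent inequality usable. Robbins--Siegmund needs summable perturbations, and $\sum\eta_t\delta_t$ need not be finite. I would therefore avoid a summable-error argument and instead use an asymptotic-pseudotrajectory / ODE argument in the style of Benaïm. The martingale noise $\sum\eta_t(\widehat D_t-D^{\delta_t}(x_t))$ converges a.s. because $\sum\eta_t^2<\infty$ and the second moments are bounded. Together with $\delta_t\to0$, this makes the linear interpolation of $(z_t)$ an asymptotic pseudotrajectory of the differential inclusion $\dot x_i\in\Pi_{T_{\mathcal X_i}}(-\overline{D}_i(x))$. Here $\overline D$ is the Filippov-type set-valued closure collecting all limits of $D^{\delta}_{i}(y)$ as $\delta\to0$ and $y\to x$. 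At points with $\nabla_i\Phi\ne0$, Proposition~\ref{prop:comparison-field}(iii) in a locally uniform form shows the closure is the singleton $D_i$. At stationary blocks it lies in the unit ball. Boundary effects at $\partial\mathcal X_i$ are handled by the contraction $\mathsf Q_{i,\delta}$.

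The third step is the Lyapunov property. Along solutions of the inclusion, $\Phi$ is nonincreasing, since $\langle\nabla_i\Phi,\Pi_T(-D_i)\rangle\le0$ by Assumption~\ref{as:non-degeneracy-ord-pot}, which aligns $D_{i,J_i}$ and $D_{i,\Phi}$. It is strictly decreasing off the set of points where every block is stationary in the projected sense. By Assumption~\ref{ass:block-pseudoconvex} and the argument of Proposition~\ref{prop:ordinal-gap-ne}, that set is exactly $\NE(\mathcal G)$. The hypothesis that $\Phi(\NE(\mathcal G))$ has empty interior is precisely Benaïm's condition. It yields that the limit set of the pseudotrajectory is internally chain transitive and contained in $\NE(\mathcal G)$, and that $\Phi(z_t)$ converges. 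Because $x_t-z_t\to0$, the same conclusions hold for $x_t$.

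For finite $\NE(\mathcal G)$, the limit set is a connected compact subset of a finite set, hence a single point, which defines $X_\infty$; uniqueness is then immediate. I expect the main obstacle to be the Lyapunov step at stationary blocks and the boundary. There $D^\delta$ does not converge uniformly, so the set-valued limit $\overline D_i$ can be an arbitrary vector in the unit ball. I must show that $\Phi$ is still nonincreasing along the inclusion and that the only invariant sets on which it is constant are Nash. I would first try to show that at a stationary block the projected contribution $\langle\nabla_i\Phi,\cdot\rangle$ vanishes, so the block cannot increase $\Phi$. For the remaining blocks I would use continuity of $\nabla\Phi$ to get a uniform negative drift on compact sets away from $\NE(\mathcal G)$.
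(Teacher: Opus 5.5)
Your architecture is the paper's. You complete the normalized potential field by the unit ball at stationary blocks, let the comparison fields approach this completion as $\delta_t\to0$, and absorb the martingale noise via $\sum_t\eta_t^2<\infty$. You then use $\Phi$ as a Lyapunov function with zero set $\NE(\mathcal G)$ and conclude with the empty-interior condition. You are also right that Robbins--Siegmund is unavailable without $\sum_t\eta_t\delta_t<\infty$, and your connectedness argument for finite $\NE(\mathcal G)$ is equivalent to the paper's separation argument. The problems are in the two places where the constraint set matters, and your justifications there do not work as written.

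\textbf{The pseudotrajectory step.} The Bena\"{\i}m--Hofbauer--Sorin theory needs a right-hand side with closed graph, and $x\mapsto\Pi_{T_{\mathcal X}(x)}\bigl(-\overline D(x)\bigr)$ does not have one. Take $\mathcal X_i=[0,1]$ with $\overline D_i\equiv\{1\}$ near $0$: the projected field is $-1$ at $x_i=1/n$ but $0$ at $x_i=0$. Your remark that boundary effects are handled by $\mathsf Q_{i,\delta}$ does not address this. The contraction only keeps the queries $x_{i,t}\pm\delta_tU_{i,t}$ feasible; the projection $\Pi_{\mathcal X_i}$ in the $z$-update stays active. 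You have two options:
\begin{itemize}
\item Use a stochastic-approximation theorem for projected inclusions. This is why the paper invokes Majewski--Miasojedow--Moulines (Theorem~\ref{thm:majewski-et-al4.2}) and checks its hypotheses in Lemmas~\ref{lem:completed-field-regularity}--\ref{lem:cngp-ppad}.
\item Write the projection correction as $n_t\in N_{\mathcal X}(z_{t+1})$ with $\|n_t\|\le G$. Then treat $\dot x\in-\overline D(x)-\bigl(N_{\mathcal X}(x)\cap G\,\overline{\mathbb B}\bigr)$ and recover the projected form on $\mathcal X$ via Moreau's decomposition.
\end{itemize}
Similarly, the ``locally uniform form'' of Proposition~\ref{prop:comparison-field}(iii) is not off the shelf, since that result is pointwise and interior. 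The paper gets the needed uniformity, including at boundary limit points, from Lemma~\ref{lem:uniform-potential-alignment-as}: $\langle\nabla_i\Phi(x),D_i^\delta(x)\rangle\ge\|\nabla_i\Phi(x)\|-L\delta/\kappa_{d_i}$. Your sign-disagreement heuristic gives the same bound once you note that disagreement forces $|\langle\nabla_i\Phi(x),u\rangle|\le L\delta/2$.

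\textbf{The Lyapunov step.} The stationary blocks you call the main obstacle are trivial: there $\nabla_i\Phi(x)=0$, so the block contributes zero for every selection from the ball. The delicate point is the boundary. The dissipation $U(x)=-\sum_i\|g_i(x)\|\,\bigl\|\Pi_{T_{\mathcal X_i}(x_i)}\bigl(-g_i(x)/\|g_i(x)\|\bigr)\bigr\|^2$, with $g_i=\nabla_i\Phi$, jumps because the tangent cone jumps. In the example above, $U=-g_i$ at interior points but $U=0$ at $x_i=0$. So continuity of $\nabla\Phi$ alone does not give a uniform negative drift away from $\NE(\mathcal G)$. The paper proves $U$ is upper semicontinuous (Lemma~\ref{lem:lyapunov-ineq}). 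It writes $\Psi_i=\operatorname{dist}^2\bigl(-g_i,N_{\mathcal X_i}(x_i)\bigr)/\|g_i\|$ and uses the closed graph of the normal-cone map. In your framework you could instead get strict decrease off $\NE(\mathcal G)$ by combining three facts:
\begin{itemize}
\item the identity $\Phi(x(t))-\Phi(x(0))=\int_0^tU(x(s))\,ds$;
\item $\{U=0\}=\NE(\mathcal G)$;
\item $\NE(\mathcal G)$ is closed.
\end{itemize}
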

\end{tcolorbox}

A convenient admissible choice is
$\eta_t=\frac{\eta}{t^p}, 
\delta_t=\frac{\delta}{t^q},$
with $1/2<p\leq1$, $q>0$ 
and $\delta \in (0,\rho).$ 

\section{Conclusion and Future Work}

We developed an ordinal framework for learning in continuous games from
pairwise preference feedback, aligning the information available to players
with the regret benchmark, learning dynamics, and equilibrium guarantees.
This perspective opens several directions, including richer
preference-feedback models and extensions to dynamical games.

\subsection*{AI use statement}

In this work, ChatGPT was used to assist with literature search and identification of relevant references, drafting and editing portions of the manuscript, code generation for the illustrative figure, and technical proof development. In particular, ChatGPT was used interactively to explore proof strategies, refine technical arguments, and assist in the development of proofs for some of the main results.

The research questions, conceptual framing, main mathematical definitions, theorem statements, and decisions about which results to pursue were developed
by the author, with AI tools used for brainstorming. 
AI-generated proof suggestions were subsequently selected, independently checked, completed, revised, and integrated by the author, and references identified with AI assistance were verified against the original sources. The author takes full responsibility for the final content of the manuscript. 

\bibliographystyle{apalike}
\bibliography{references}


\newpage
\appendix 

\tableofcontents

\newpage
\section{Extended Related Work Discussion}
\label{apdx:related-work}

\noindent\textbf{Comparison-based optimization and ranking feedback.}
Preference-based online learning has a long history in dueling bandits; see
\citet{bengs-et-al21survey-pref-bandits} for a survey. In continuous
single-objective optimization, comparison queries can replace numerical
function evaluations and still provide useful directional information
\citep{jamieson-et-al12,kumagai17continuous-dueling-bandit,
saha-et-al21dueling,cai-et-al22one-bit}.
Subsequent work considers richer preference models, adversarial comparisons,
multiway feedback, and ranking oracles
\citep{blum-et-al24dueling-adversary,saha-et-al24multiway,
saha-et-al25,tang-et-al24zo-ranking,ye-et-al25zo,
chen-et-al25compo,lu-et-al26oco-dueling-feedback}.
\cite{maran-et-al24bandits-ranking} study finite-armed bandits with ranking feedback, where the learner observes the ranking of empirical rewards of previously pulled arms rather than their numerical values. They obtain sublinear regret in stochastic settings but show that standard sublinear regret is impossible under adversarial rewards. Their feedback model differs from ours in that rankings aggregate past arm rewards, whereas our oracle provides contemporaneous unilateral comparisons in a continuous game. 
Recent work also studies nonconvex optimization from comparisons:
\citet{wang-et-al26fosp-comp} give comparison-query guarantees for reaching
first-order stationary points of a smooth objective, while
\citet{scheinberg-xiong26} take a representation-free perspective in which
the preference relation itself is primitive and formulate optimality and
stationarity certificates through the geometry of preference level sets.
This distinction is relevant here because quantitative gradient-norm
criteria depend on the chosen cardinal representation, whereas our goal is
to define learning objects determined by comparisons themselves.
\citet{krupa-bemporad26} learn generalized Nash equilibria from pairwise preferences by fitting parametric surrogate objectives from preference data and repeatedly solving the learned GNEP. In contrast, we do not reconstruct a numerical game: we identify an ordinal first-order field directly from the comparison oracle and establish regret and almost-sure last-iterate convergence guarantees for the resulting comparison-based dynamics. 
\citet{liu-et-al26iclr-ranking-feedback} study standard external
regret from ranking observations. For instantaneous-utility rankings they show that sublinear external regret is impossible in general, and obtain positive results under additional temporal structure; their stochastic Plackett--Luce model is gap-dependent, so repeated rankings can be used to
estimate numerical utilities. In contrast, our comparison noise is
gap-independent: conditional on the underlying ordering, repeated
observations contain no information about the magnitude of the utility gap. 

\medskip
\noindent\textbf{Learning from human preferences.}
Pairwise and ranking feedback is central to reinforcement learning from human
feedback
\citep{christiano-et-al17deep-rl-human-pref,
ouyang-et-al22llms-hf,zhu-jordan-jiao23rlhf}.
Beyond the standard reward-modeling pipeline,
\citet{munos-et-al24nash-learning-from-hf} learn a pairwise preference model
and formulate policy optimization through a regularized Nash equilibrium;
other recent work studies best-policy identification from trajectory
preferences \citep{agnihotri-et-al26}, preference-based multi-agent
reinforcement learning \citep{zhu-et-al24}, and ordinal formulations of
reinforcement learning \citep{gemp-et-al26rl-social-choice}.
The information contained in the feedback itself is also an active concern:
\citet{feng-et-al25}, for example, augment ordinary trajectory comparisons
with queries designed to elicit preference strength. Our setting isolates
the opposite extreme---binary feedback containing only the ordering---and
asks what learning and equilibrium guarantees remain identifiable without
recovering preference magnitudes. 
Recent related work on preference learning \citep{pukdee-balcan-ravikumar26} asks what latent preference information is recoverable from pairwise-comparison data under models such as the Bradley–Terry model. Our identifiability question is instead game-theoretic: rather than recovering a numerical representation, we ask which regret, equilibrium, and learning objects are invariant across all representations compatible with the observed unilateral ordering.

\medskip
\noindent\textbf{Ordinal and invariant learning in games.}
Ordinal games and ordinal potential games have a long history in game theory
\citep{monderer-shapley96potential-games,cruz-et-al00ordinal,
durieu-et-al08ordinal-games,ewerhart20ordinal-potential}, as do questions
concerning utility representations and invariance
\citep{hammond05utility-invariance,tewolde-conitzer24ijcai}.
Recent work has renewed these questions from an algorithmic perspective.
Preference graphs encode the unilateral ordering structure of finite games
\citep{biggar-shames25pref-graphs}; \citet{conitzer24complexity-ordinal-games}
studies equilibrium concepts robust to compatible cardinal representations. More precisely, \citet{conitzer24complexity-ordinal-games} studies finite ordinal games and mediated/repeated-game equilibria that remain enforceable across compatible cardinal utility representations, allowing punishment strategies to sustain robust mixed outcomes. Our robustness question is different: we keep the one-shot CCE notion fixed and intersect its solution sets over all continuous cardinal representations consistent with the same unilateral comparisons; Theorem~\ref{thm:cce-vs-ne} shows that this intersection is exactly the set of distributions supported on pure Nash equilibria. 
\citet{gemp26nash-without-numbers} develop mixed equilibrium, regret,
and learning notions directly for finite context-ordinal games.
\citet{abbadi-laraki-mertikopoulos26icml} investigate which aspects of the
long-run behavior of no-regret dynamics are, and are not, determined by the
preference structure of a finite game. In smooth games,
\citet{ahunbay26fo-cceSTOC} develops first-order notions of coarse correlated
equilibrium based on regret against local vector-field deviations.
Concurrently, \cite{cai-et-al26proximal} develop similar notions. 
Our first-order benchmark arises from a different principle: ordinal identifiability. Player-wise normalized unilateral gradients are precisely the local directions preserved under smooth monotone reparameterizations, and we subsequently learn this preference geometry directly from comparisons.

\medskip
\noindent\textbf{Normalized and limited-feedback dynamics.}
Normalized gradient and subgradient methods are classical in optimization,
with deterministic, stochastic, and online variants analyzed under a range
of convex and nonconvex assumptions
\citep{cortes06finite-time-cv,hazan-levy-shalevshwartz15ngd,
gao-et-al18,murray-swenson-kar19ngd,cutkosky-mehta20momentum-nsgd,
zhao-et-al21sngd,zhao-et-al24SNGD,josz26normalized-grad-descent}.
Our use of normalization has a different purpose: player-wise normalization
removes precisely the gradient magnitudes that unilateral preference
comparisons do not identify. Accordingly, our stochastic algorithm does not
normalize a noisy gradient; a single pairwise comparison instead gives an unbiased
estimator of the exact finite-resolution ordinal field $D^\delta$.
Learning in continuous games with limited numerical feedback has also been
studied under bandit and payoff-based information
\citep{heliou-cohen-mertikopoulos17,
bravo-leslie-mertikopoulos18bandit-concave-games,
mertikopoulos-zhou19,bervoets-bravo-faure20},
while \citet{cheung14} studies deterministic pairwise-comparison dynamics in
continuous-strategy population games and
\citet{poveda-krstic-basar22tac} consider rescaled dynamics for Nash
seeking. In contrast, the dynamics studied here are dictated by ordinal
identifiability, and their finite-comparison implementation yields both
no-ordinal-regret and last-iterate Nash-convergence guarantees.\\

\noindent\textbf{Additional details on the comparison to
\cite{liu-et-al26iclr-ranking-feedback}.}
The Plackett--Luce (PL) model is not invariant under arbitrary strictly
increasing transformations of utilities: its ranking probabilities depend on
utility differences and therefore encode cardinal utility-gap information.
As the temperature $\tau\to0^+$, the PL model becomes increasingly
deterministic and approaches purely ordinal ranking feedback, since the
ordering remains observable while information about the magnitude of utility
gaps vanishes. This limiting regime is therefore closely related to the
information structure considered here.
For instantaneous-utility rankings,
\citet{liu-et-al26iclr-ranking-feedback} show that standard sublinear external
regret is impossible in general. Their positive result in this setting assumes
that the utility sequence has sublinear temporal variation, allowing recent
ranking observations to be pooled in a sliding window to estimate an
approximately constant cardinal utility vector. Their estimation bound
contains a factor $\tau(e^{1/\tau}+1)^2$, which diverges as
$\tau\to0^+$. Thus, precisely as the PL observations approach purely ordinal
rankings, recovering the cardinal information needed for standard external
regret becomes increasingly ill-conditioned.

\section{Proofs for Section~\ref{sec:comparison-identification}: Ordinal Feedback, Invariance and Identification}

\subsection{Proof of Lemma~\ref{lem:D-invariance}}

\begin{proof}
If $F_i(x)\neq0$, then
$
\frac{\widetilde F_i(x)}{\|\widetilde F_i(x)\|}
=
\frac{\phi_i'(J_i(x))F_i(x)}
     {\phi_i'(J_i(x))\|F_i(x)\|}
=
D_i(x),$
and $\widetilde F_i(x)=0$ if and only if $F_i(x)=0$.
\end{proof}

\subsection{Proof of Proposition~\ref{prop:cardinal-regret-noninvariant}}

\begin{proof}
Consider two players with $\mathcal X_i=\mathcal X_j=[0,1]$, and let player $i$ have the smooth cost
\[
J_i(x_i,x_j)=2-x_j+x_i(1-2x_j).
\]
The cost of the second player is immaterial. Suppose player $i$ always selects $x_{i,t}=1$, while player $j$ alternates between $x_{j,t}=0$ and $x_{j,t}=1$. Over every pair of rounds, player $i$ incurs total cost $3+0=3$. Any fixed comparator $y\in[0,1]$ incurs
\[
J_i(y,0)+J_i(y,1)=(2+y)+(1-y)=3.
\]
Hence $\Reg_{i,T}^{J}=0$ for every even $T$.

Now consider the ordinally equivalent representation $\widetilde J_i=\exp(J_i)$. The learner incurs $e^3+1$ over every pair of rounds, while a fixed comparator $y$ incurs $e^{2+y}+e^{1-y}$. Since $y\mapsto e^{2+y}+e^{1-y}$ is strictly increasing on $[0,1]$, its minimum is attained at $y=0$. Therefore
\[
\Reg_{i,T}^{\widetilde J}
=
\frac{T}{2}\left(e^3+1-e^2-e\right)
=
\Omega(T),
\]
while $\widetilde J_i=\exp(J_i)$ induces exactly the same preference comparisons as $J_i$.
\end{proof}

\subsection{Proof of Proposition~\ref{prop:cce-not-ordinally-invar}}

\begin{proof}
Consider a two-player game with $\mathcal X_1=\mathcal X_2=[0,1],$
and costs
\[
J_1(x_1,x_2)
=
2-x_2+x_1(1-2x_2),
\qquad
J_2(x_1,x_2)=0.
\]
Let $\mu =\frac12\delta_{(1,0)} + \frac12\delta_{(1,1)}.$ 
We first show that $\mu$ is a coarse correlated equilibrium of
$\mathcal G$. Under $\mu$, player~$1$ incurs expected cost
\[
\mathbb E_{x\sim\mu}[J_1(x)]
=
\frac12 J_1(1,0)+\frac12 J_1(1,1)
=
\frac12(3+0)
=
\frac32.
\]
For any fixed deviation $y_1\in[0,1]$,
\begin{equation*}
\mathbb E_{x\sim\mu}
\big[J_1(y_1,x_2)\big]
=
\frac12 J_1(y_1,0)
+
\frac12 J_1(y_1,1)
=
\frac12(2+y_1)
+
\frac12(1-y_1)
=
\frac32.
\end{equation*}
Thus player~$1$ cannot improve its expected cost by any fixed deviation.
Player~$2$ is indifferent between all actions since $J_2\equiv 0$.
Therefore $\mu\in\operatorname{CCE}(\mathcal G).$

Now define the ordinally equivalent game $\widetilde{\mathcal G}$ by
$\widetilde J_1=e^{J_1}, \widetilde J_2=J_2.$ 
Since $s\mapsto e^s$ is strictly increasing, the transformation preserves
all unilateral preference comparisons and best responses.
Under the same distribution $\mu$, player~$1$ now incurs expected cost
\[
\mathbb E_{x\sim\mu}[\widetilde J_1(x)]
=
\frac12 e^{J_1(1,0)}
+
\frac12 e^{J_1(1,1)}
=
\frac12(e^3+1).
\]
On the other hand, the fixed deviation $y_1=0$ yields
\[
\mathbb E_{x\sim\mu}
\big[\widetilde J_1(0,x_2)\big]
=
\frac12 e^{J_1(0,0)}
+
\frac12 e^{J_1(0,1)}
=
\frac12(e^2+e).
\]
Since
$e^3+1-e^2-e
= (e-1)^2(e+1)>0,$
we have
\[
\mathbb E_{x\sim\mu}[\widetilde J_1(x)]
>
\mathbb E_{x\sim\mu}
\big[\widetilde J_1(0,x_2)\big].
\]
Hence player~$1$ has a profitable fixed deviation, so
$\mu\notin\operatorname{CCE}(\widetilde{\mathcal G}).$
This proves that the CCE correspondence is not preserved under arbitrary
strictly increasing transformations.
\end{proof}

\subsection{Proof of Theorem~\ref{thm:cce-vs-ne}}
\label{appx:proof-thm-cce-vs-ne}

The inclusion $\Delta(\operatorname{NE}(\mathcal G))\subseteq \bigcap_{\widetilde{\mathcal G}\sim_{\mathrm{ord}}\mathcal G}
\operatorname{CCE}(\widetilde{\mathcal G})$ is immediate from the invariance of best responses under unilateral ordinal transformations. For the reverse inclusion, recall first that the usual CCE condition only controls \emph{average} regret, so positive regret at some opponents' profiles may be offset by negative regret at others. Robustness over the ordinal equivalence class rules out precisely this cross-context compensation: if, for some fixed deviation \(y_i\), the conditional regret were positive on a set of opponents' profiles of positive measure, one could choose \(\alpha_i\) large on that set and small elsewhere, without changing any unilateral comparison, thereby making the overall expected regret positive. Hence ordinal robustness forces every fixed-deviation conditional regret to be nonpositive almost everywhere. Taking the infimum over deviations then implies that the player's conditional expected cost equals the pointwise minimum cost, which is possible only if the player is best-responding almost surely. Applying this argument to every player yields support on \(\operatorname{NE}(\mathcal G)\). We provide a complete proof below in this section.\\   

\noindent\textbf{Proof idea.} We start by highlighting the intuition behind the proof of the more challenging and interesting inclusion $\left( \bigcap_{\widetilde{\mathcal G}\in[\mathcal G]_{\rm ord}}
\operatorname{CCE}(\widetilde{\mathcal G})
\subseteq
\Delta(\operatorname{NE}(\mathcal G)) \right)$. 

For a fixed deviation $y_i$, ordinary CCE only requires

$$
\int q_{i,y_i}(z)\,\lambda(dz)\leq0,
\qquad
q_{i,y_i}(z)
=
\mathbb E[
J_i(X_i,z)\mid X_{-i}=z]
-
J_i(y_i,z).
$$

Hence positive conditional regret at some opponents' profiles may be offset
by negative conditional regret at others. Unilateral comparison feedback,
however, does not identify the relative cardinal scale of regret across
different $z$: replacing $J_i(x_i,z)$ by
$\alpha_i(z)J_i(x_i,z)$ for any positive $\alpha_i(z)$ leaves all
unilateral comparisons unchanged. Requiring the CCE inequality to hold
under every such reweighting eliminates cross-context cancellation and
forces

$$
q_{i,y_i}(z)\leq0
\qquad
\text{a.e.}
$$

for every deviation $y_i$. Taking the infimum over $y_i$ then gives

$$
\mathbb E[
J_i(X_i,z)\mid X_{-i}=z]
=
\min_{y_i}J_i(y_i,z),
$$

which is possible only if $X_i$ is a best response almost surely.
Thus robustness to all cardinal representations consistent with the
comparison oracle turns the global expected-regret inequalities defining CCE
into pointwise best-response conditions, yielding Nash support.

In what follows, we provide a complete and detailed proof of Theorem~\ref{thm:cce-vs-ne}. 

\begin{proof}
We prove the two inclusions separately.\\

\noindent\textbf{(i) Proof of $\Delta(\operatorname{NE}(\mathcal G))
\subseteq \bigcap_{\widetilde{\mathcal G}\in[\mathcal G]_{\rm ord}}
\operatorname{CCE}(\widetilde{\mathcal G})$.}
Let $\mu\in\Delta(\operatorname{NE}(\mathcal G)),$ 
and consider an arbitrary
$\widetilde{\mathcal G}\in[\mathcal G]_{\rm ord}$.
By ordinal equivalence, for every $i \in [N]$, every $x_{-i} \in \mathcal{X}_{-i}$, and every
$a_i,b_i\in\mathcal X_i$,
\begin{equation*}
\operatorname{sgn}
\bigl(
J_i(a_i,x_{-i})-J_i(b_i,x_{-i})
\bigr)
=
\operatorname{sgn}
\bigl(
\widetilde J_i(a_i,x_{-i})
-
\widetilde J_i(b_i,x_{-i})
\bigr).    
\end{equation*}
Hence the two cost functions induce exactly the same minimizers on every
unilateral slice:
$\operatorname{BR}^{\mathcal G}_i(x_{-i})=
\operatorname{BR}^{\widetilde{\mathcal G}}_i(x_{-i}).$ Therefore, 
$\operatorname{NE}(\mathcal G) = \operatorname{NE}(\widetilde{\mathcal G}).$

Since $\mu$ is supported on $\operatorname{NE}(\mathcal G)$, for every
player $i \in [N]$ and every fixed deviation $y_i\in\mathcal X_i$,
$\widetilde J_i(x_i,x_{-i})\leq \widetilde J_i(y_i,x_{-i})$ 
for $\mu$-a.e. $x$. 
Integrating gives $\mathbb E_{x\sim\mu} \bigl[\widetilde J_i(x)\bigr] \leq \mathbb E_{x\sim\mu} \bigl[\widetilde J_i(y_i,x_{-i})\bigr].$ 
Thus $\mu\in\operatorname{CCE}(\widetilde{\mathcal G}).$
Since $\widetilde{\mathcal G}\in[\mathcal G]_{\rm ord}$ was arbitrary,
$\mu\in\bigcap_{\widetilde{\mathcal G}\in[\mathcal G]_{\rm ord}}
\operatorname{CCE}(\widetilde{\mathcal G}).$\\

\noindent\textbf{(ii) Proof of
$\displaystyle
\bigcap_{\widetilde{\mathcal G}\in[\mathcal G]_{\rm ord}}
\operatorname{CCE}(\widetilde{\mathcal G})
\subseteq
\Delta(\operatorname{NE}(\mathcal G))$.}
Let $\mu\in\bigcap_{\widetilde{\mathcal G}\in[\mathcal G]_{\rm ord}}
\operatorname{CCE}(\widetilde{\mathcal G}).$ 

We show that $\mu$ is supported on $\operatorname{NE}(\mathcal G)$.

\emph{Proof strategy.}
Fix a player $i \in [N]$. Conditional on an opponents' profile $z=x_{-i} \in \mathcal{X}_{-i}$, let

$$
A_i(z)
:=
\mathbb E_\mu\!\left[J_i(X_i,z)\mid X_{-i}=z\right]
$$

denote player $i$'s conditional expected cost. For a fixed deviation
$y_i\in\mathcal X_i$, define its conditional regret for any $z \in \mathcal{X}_{-i}$ as follows: 
$$
q_{i,y_i}(z)
:=
A_i(z)-J_i(y_i,z).
$$

The key step is to show that ordinal robustness forces
$$
q_{i,y_i}(z)\leq0
\qquad
\text{for a.e. }z
$$

for every fixed $y_i$. Otherwise, an ordinally invisible positive rescaling
of costs could amplify the contexts in which $q_{i,y_i}>0$ and violate the
CCE inequality. We then remove the dependence on $y_i$ of the exceptional null set 
using a countable dense set of deviations. This yields
$$
A_i(z)
\leq
\min_{y_i\in\mathcal X_i}J_i(y_i,z)
\qquad\text{a.e.}
$$
The reverse inequality holds automatically because $A_i(z)$ is an average
of the values $J_i(\cdot,z)$. Hence the conditional average equals the
minimum, which is possible only if the conditional distribution of $X_i$
is supported on best responses. Repeating this argument for every player
implies $\mu(\operatorname{NE}(\mathcal G))=1$.

\medskip
\noindent\emph{Step 1: Ordinal robustness implies nonpositive conditional regret.}
Fix $i\in[N]$, write $z:=x_{-i}$. 
Denote by $\pi_{-i}:\mathcal X\to\mathcal X_{-i},\pi_{-i}(x)=x_{-i},$
the canonical projection onto the opponents' action space. Let
$\lambda:=\mu_{-i}:=(\pi_{-i})_{\#}\mu$ be the marginal distribution of $X_{-i}$ induced by $\mu$; equivalently, for every Borel set $B\subseteq\mathcal X_{-i}$,
$\mu_{-i}(B) = \mu\bigl(\mathcal X_i\times B\bigr).$ 
Since $\mathcal X_i$ and $\mathcal X_{-i}$ are compact metric spaces, there exists a regular conditional distribution
$\kappa_i(\,\cdot\mid z)$ of $X_i$ given $X_{-i}=z$ such that the disintegration
$\mu(dx_i,dz) =\kappa_i(dx_i\mid z)\,\mu_{-i}(dz) = \kappa_i(dx_i\mid z) \lambda(dz)$ holds.

Define for any $z \in \mathcal{X}_{-i}$ the average cost of player $i \in [N]$ as follows: 
$$
A_i(z)
:=
\int_{\mathcal X_i}
J_i(x_i,z)\,
\kappa_i(dx_i|z).
$$
Since $J_i$ is continuous on the compact set $\mathcal X$, it is bounded,
and therefore $A_i$ is bounded and measurable. Fix an arbitrary deviation $y_i\in\mathcal X_i$ and define 
the conditional regret
$$
q_{i,y_i}(z)
:=
A_i(z)-J_i(y_i,z).
$$

We claim that
\begin{equation}
\label{eq:conditional-regret-nonpositive}
q_{i,y_i}(z)\leq0
\qquad
\text{for $\lambda$-a.e. }z.
\end{equation}

To see this, let $h$ be any nonnegative smooth function defined on an open
neighborhood of $\mathcal X_{-i}$, and let $t\geq0$. Consider the game
$\mathcal G^{t,h}$ obtained by replacing player $i$'s cost by
$$
J_i^{t,h}(x_i,z)
:=
\bigl(1+t h(z)\bigr)J_i(x_i,z)\,, \quad \forall x_i \in \mathcal{X}_i, \forall z \in \mathcal{X}_{-i}\,,
$$
and leaving every other player's cost unchanged. Since
$1+t h(z)>0,$ this transformation preserves every unilateral comparison: for any $a_i, b_i \in \mathcal{X}_i, z \in \mathcal{X}_{-i},$
$$
\operatorname{sgn}\!\left(
J_i^{t,h}(a_i,z)-J_i^{t,h}(b_i,z)
\right)
=
\operatorname{sgn}\!\left(
J_i(a_i,z)-J_i(b_i,z)
\right).
$$
Therefore, $\mathcal G^{t,h}\sim_{\rm ord}\mathcal G.$
Since $\mu\in\bigcap_{\widetilde{\mathcal G}\in[\mathcal G]_{\rm ord}}
\operatorname{CCE}(\widetilde{\mathcal G}),$  it follows that 
$\mu\in\operatorname{CCE}(\mathcal G^{t,h}).$ Then by definition of $\operatorname{CCE}(\mathcal G^{t,h})$, we have 
for the fixed deviation $y_i \in \mathcal{X}_i$, 
$$
\mathbb E_\mu\!\left[
\bigl(1+t h(X_{-i})\bigr)
\bigl(
J_i(X_i,X_{-i})-J_i(y_i,X_{-i})
\bigr)
\right]
\leq0.
$$

Disintegrating with respect to $X_{-i}$ (recall that $\mu(dx_i,dz) = \kappa_i(dx_i |z) \lambda(dz)$) yields: 
\begin{equation}
\label{eq:weighted-regret}
\int_{\mathcal X_{-i}}
\bigl(1+t h(z)\bigr)
q_{i,y_i}(z)
\lambda(dz)
\leq0\,, 
\qquad
\forall t\geq0.
\end{equation}
Therefore, 
$$
\int_{\mathcal X_{-i}} h(z) q_{i,y_i}(z)\,\lambda(dz)
\leq - \frac{1}{t} \int_{\mathcal X_{-i}} q_{i,y_i}(z)\,\lambda(dz) \,, 
\qquad
\forall t > 0.
$$
Taking the limit $t \to + \infty$ implies that necessarily: 
\begin{equation}
\label{eq:test-function}
\int_{\mathcal X_{-i}}
h(z)q_{i,y_i}(z) 
\lambda(dz)
\leq0
\end{equation}
for every nonnegative smooth $h$. 

We now localize \eqref{eq:test-function}, i.e., we use the freedom to choose the test function $h$ so that it concentrates on an arbitrary region of $\mathcal{X}_{-i}$, to pass from the global integral inequality~\ref{eq:test-function} to the local/almost-everywhere statement~\eqref{eq:conditional-regret-nonpositive} by using test functions with localized support. Suppose, toward a contradiction,
that
$$
\lambda\bigl(\{z \in \mathcal{X}_{-i}:q_{i,y_i}(z)>0\}\bigr)>0.
$$
Then, for some $\varepsilon>0$,
$E_\varepsilon:=\{z:q_{i,y_i}(z)\geq\varepsilon\}$ 
has positive $\lambda$-measure. By inner regularity of $\lambda$, choose a
compact set
$K\subseteq E_\varepsilon,\, \lambda(K)>0.$

Since $q_{i,y_i}$ is bounded by boundedness of $J_i$ over the compact set $\mathcal{X}$, let $M:=\|q_{i,y_i}\|_\infty < \infty.$ By outer regularity, choose an open neighborhood $U\supseteq K$ such that
$$
\lambda(U\setminus K)
<
\frac{\varepsilon\lambda(K)}{2M},
$$
where the claim is immediate if $M=0$. Choose a smooth bump function $h$
satisfying: 
$$
0\leq h\leq1,
\qquad
h=1\ \text{on }K,
\qquad
\operatorname{supp}(h)\cap\mathcal X_{-i}\subseteq U.
$$
Then
$$
\begin{aligned}
\int h(z)q_{i,y_i}(z)\,\lambda(dz)
&\geq
\int_K q_{i,y_i}(z)\,\lambda(dz)
-
\int_{U\setminus K}|q_{i,y_i}(z)|\,\lambda(dz)
\\
&\geq
\varepsilon\lambda(K)-M\lambda(U\setminus K)
\\
&>
\frac{\varepsilon\lambda(K)}{2}
>0,
\end{aligned}
$$

contradicting \eqref{eq:test-function}. Hence
$q_{i,y_i}(z)\leq0, 
\lambda\text{-a.e.}$
for every fixed $y_i\in\mathcal X_i$. Equivalently,
\begin{equation}
\label{eq:conditional-fixed-y}
A_i(z)
\leq
J_i(y_i,z)\,,
\qquad
\lambda\text{-a.e.}
\end{equation}

\medskip
\noindent\emph{Step 2: Obtain a common full-measure set for all deviations.}
The exceptional null set in \eqref{eq:conditional-fixed-y} may depend on
$y_i$. To remove this dependence, let
$D_i\subseteq\mathcal X_i$
be a countable dense subset. For each $y_i\in D_i$, let $Z_{i,y_i}$ be a
set of full $\lambda$-measure on which
$A_i(z)\leq J_i(y_i,z).$ 
Because $D_i$ is countable,
$Z_i:=\bigcap_{y_i\in D_i} Z_{i,y_i}$
still satisfies
$\lambda(Z_i)=1.$
Thus, for every $z\in Z_i$,
$$
A_i(z)\leq J_i(y_i,z)
\qquad
\forall y_i\in D_i.
$$
By continuity of $J_i(\cdot,z)$ and density of $D_i$,
$$
A_i(z)
\leq
\inf_{y_i\in D_i}J_i(y_i,z)
=
\min_{y_i\in\mathcal X_i}J_i(y_i,z)\,,
\qquad
\forall z\in Z_i.
$$
We have therefore shown that
\begin{equation}
\label{eq:A-less-than-min}
A_i(z)\leq \min_{y_i\in\mathcal X_i}J_i(y_i,z)
\qquad
\lambda\text{-a.e.}
\end{equation}

\medskip
\noindent\emph{Step 3: Nonpositive conditional regret forces best-response play.}
Note first that for any $z \in \mathcal{X}_{-i}$, 
$$
J_i(x_i,z)\geq \min_{y_i\in\mathcal X_i}J_i(y_i,z)\,, 
\qquad
\forall x_i\in\mathcal X_i.
$$

Averaging with respect to $\kappa_i(\cdot|z)$ gives 
$$
A_i(z)
=
\int_{\mathcal X_i}J_i(x_i,z)\,
\kappa_i(dx_i\mid z)
\geq
\min_{y_i\in\mathcal X_i}J_i(y_i,z).
$$

Together with \eqref{eq:A-less-than-min}, this yields
\begin{equation}
\label{eq:average-equals-min}
A_i(z)=\min_{y_i\in\mathcal X_i}J_i(y_i,z)
\qquad
\lambda\text{-a.e.}
\end{equation}

Now define for any $x_i \in \mathcal{X}_i, x_{-i} \in \mathcal{X}_{-i}$, 
$r_i(x_i,z) := J_i(x_i,z)-\min_{y_i\in\mathcal X_i}J_i(y_i,z)\geq0.$

For $\lambda$-a.e. $z$, \eqref{eq:average-equals-min} implies
$$
\int_{\mathcal X_i}
r_i(x_i,z)\,
\kappa_i(dx_i\mid z)
=
A_i(z)-\min_{y_i\in\mathcal X_i}J_i(y_i,z)
=
0.
$$

Since $r_i(\cdot,z)$ is nonnegative, it must vanish
$\kappa_i(\cdot\mid z)$-almost surely. Therefore

$$
\kappa_i
\bigl(
\operatorname{BR}^{\mathcal G}_i(z)
\mid z
\bigr)
=
1
\qquad
\lambda\text{-a.e. }z.
$$
Integrating over $z$ gives
\begin{equation}
\label{eq:player-best-response-as}
\mu\left(
\left\{
x \in \mathcal{X}:
x_i\in\operatorname{BR}_i^{\mathcal G}(x_{-i})
\right\}
\right) = 1\,.
\end{equation}

\noindent\emph{Step 4: Intersect the best-response conditions across players.}
Since \eqref{eq:player-best-response-as} holds for every $i\in[N]$, and
the number of players is finite,

$$
\mu\!\left(
\bigcap_{i=1}^N
\left\{
x \in \mathcal{X}:
x_i\in\operatorname{BR}_i^{\mathcal G}(x_{-i})
\right\}
\right)
=
1.
$$

The set inside the probability is exactly
$\operatorname{NE}(\mathcal G)$. Hence
$\mu(\operatorname{NE}(\mathcal G))=1,$
or equivalently,
$\mu\in\Delta(\operatorname{NE}(\mathcal G)),$
which concludes the proof of the desired inclusion. 
\end{proof}

\subsection{Proof of Proposition~\ref{prop:ordinal-gap-ne}}

\begin{proof}
Since every summand in \eqref{eq:ordinal-nash-gap} is nonnegative, $\Gap_D(x)=0$ if and only if, 
\begin{equation}
\label{eq:variational-charac}
\left\langle D_i(x),y_i-x_i\right\rangle \geq 0\, \quad \forall i \in [N], \forall\, y_i \in \mathcal{X}_i.
\end{equation}
If $\nabla_iJ_i(x)\neq0$, multiplying the previous inequality~\eqref{eq:variational-charac} by the positive scalar $\|\nabla_iJ_i(x)\|$ yields $\langle\nabla_iJ_i(x),y_i-x_i\rangle\geq0$ for all $y_i$. If $\nabla_iJ_i(x)=0$, the same condition holds trivially. 

By pseudoconvexity, this first-order condition is sufficient for $x_i$ to globally minimize $J_i(\cdot,x_{-i})$ and hence $x$ is a Nash equilibrium. 

Conversely, if $x$ is a NE, since every global minimizer of a differentiable function over a convex set satisfies the corresponding first-order variational inequality, it follows that \eqref{eq:variational-charac} holds and hence by the first step in this proof we obtain that $\Gap_D(x)=0$. 
\end{proof}

\section{Continuous-Time Analysis of \eqref{eq:BNPGD}}
\label{apdx:cont-time-BNPGD}

In the unconstrained case (and at interior points of a full-dimensional action set), \eqref{eq:BNPGD} reduces  
to $\dot x_i = -\frac{\nabla_iJ_i(x)}{\|\nabla_iJ_i(x)\|}$ whenever $\nabla_i J_i(x) \neq 0$, and $\dot{x}_i = 0$ otherwise. 
Because normalization is singular at stationary blocks, the field $D_i$ is
continuous on $\{x \in \mathcal{X}:\nabla_iJ_i(x)\neq0\}$ but may be discontinuous where
$\nabla_iJ_i(x)=0$ (see Appendix~\ref{apdx:ex-discontinuity} for a simple example), despite the convention $D_i=0$ at such points.
Accordingly, our continuous-time statements apply to any absolutely continuous
trajectory satisfying \eqref{eq:BNPGD} almost everywhere. The derivative
identities are likewise understood almost everywhere.

\begin{tcolorbox}[colframe=white!, top=2pt,left=2pt,right=2pt,bottom=2pt] 
\begin{proposition}[Invariance of the dynamics]
\label{prop:dynamics-invariant}
Suppose $\widetilde J_i=\phi_i\circ J_i$ for every $i$, where
$\phi_i'>0$. Then the block-normalized projected vector fields generated
by $\mathcal G = ((\mathcal{X})_{i \in [N]}, (J_i)_{i \in [N]})$ and $\widetilde{\mathcal G} = ((\mathcal{X})_{i \in [N]}, (\tilde{J}_i)_{i \in [N]})$ coincide.
\end{proposition}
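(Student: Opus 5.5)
The plan is to reduce the statement to Lemma~\ref{lem:D-invariance}. The projected block-normalized vector field of a game is the map
\[
x\longmapsto\Bigl(\Pi_{T_{\mathcal X_i}(x_i)}\bigl(-D_i(x)\bigr)\Bigr)_{i\in[N]} .
\]
It depends on the game only through the action sets $\mathcal X_i$ and the preference-direction field $D$. The two games $\mathcal G$ and $\widetilde{\mathcal G}$ share the same action sets. Hence the tangent cones $T_{\mathcal X_i}(x_i)$, and the Euclidean projections onto them, coincide pointwise. It therefore suffices to show that $\widetilde D_i(x)=D_i(x)$ for every $i$ and every $x\in\mathcal X$.

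That identity is exactly Lemma~\ref{lem:D-invariance}. By the chain rule, $\nabla_i\widetilde J_i(x)=\phi_i'(J_i(x))\nabla_iJ_i(x)$, and $\phi_i'(J_i(x))>0$. We then split into two cases.
\begin{itemize}
\item If $\nabla_iJ_i(x)\neq 0$, the positive scalar cancels after normalization, so $\widetilde D_i(x)=D_i(x)$.
\item If $\nabla_iJ_i(x)=0$, then $\nabla_i\widetilde J_i(x)=0$ as well, and both fields equal $0$ by the common convention in \eqref{eq:Di-pref-field}.
\end{itemize}
Substituting this identity into the projected field gives equality of the right-hand sides of \eqref{eq:BNPGD} for the two games at every $x\in\mathcal X$.

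As a consequence, the sets of absolutely continuous solutions satisfying \eqref{eq:BNPGD} almost everywhere also coincide. This is worth recording because the dynamics are interpreted in that Carathéodory sense when $D$ is discontinuous.

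The only point needing care, which is mild, is that $\phi_i'>0$ must hold at the values $J_i(x)$ for $x$ in a neighborhood of $\mathcal X$. This is what guarantees that the chain rule applies and that the scalar $\phi_i'(J_i(x))$ is strictly positive rather than merely nonnegative. A vanishing $\phi_i'$ would create spurious stationary blocks and break the argument. The regularity assumption on $\phi_i$ made in the definition of regular monotone reparameterization rules this out.
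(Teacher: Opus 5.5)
Your proposal is correct and follows the paper's own route: the paper likewise derives this proposition directly from Lemma~\ref{lem:D-invariance}, since the projected field depends on the game only through the common action sets (hence common tangent cones) and the invariant field $D$. Your extra remark that the sets of almost-everywhere solutions of \eqref{eq:BNPGD} therefore coincide is a correct immediate consequence.
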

\end{tcolorbox}

In contrast, ordinary gradient play is transformed blockwise as
$-\nabla_iJ_i(x)\longmapsto
-\phi_i'(J_i(x))\nabla_iJ_i(x),$ 
so ordinally equivalent games may induce different relative adaptation speeds
and different joint trajectories due to the player-wise factor $\phi_i'(J_i(x))$. 
This is specific to the multi-player
setting: for a single objective, normalization only reparametrizes the speed
along a gradient-flow trajectory, whereas player-wise normalization generally
changes the geometry of the coupled dynamics.

In this section, we show that \eqref{eq:BNPGD} enjoys a $\mathcal{O}(1/T)$ continuous-time average ordinal regret (see Theorem~\ref{thm:ct-regret}), and in ordinal potential games the potential is a Lyapunov function yielding convergence to Nash equilibria under suitable conditions (see Theorem~\ref{thm:cv-BNPG-ordinal-pot}). 

\subsection{Discontinuity example of the preference-direction vector field}
\label{apdx:ex-discontinuity}

In one dimension and for a single player ($N=1$), let $J(x)=\frac{1}{2}x^2$ for any $x \in \mathbb{R}.$
Then, 
\[
D(x)
=
\begin{cases}
1, & x>0,\\
0, & x=0,\\
-1, & x<0.
\end{cases}
\]
Hence $D$ is well defined everywhere under the convention $D(0)=0$, but it is
discontinuous at the stationary point $x=0$.

\subsection{Continuous-time regret and equilibrium}

For continuous play, define
\begin{equation}
\label{eq:cont-dir-reg}
\Reg_i^D(T)
:=
\sup_{y_i\in\mathcal X_i}
\int_0^T
\langle D_i(x(t)),x_i(t)-y_i\rangle\,dt.
\end{equation}

\begin{tcolorbox}[colframe=white!, top=2pt,left=2pt,right=2pt,bottom=2pt] 
\begin{theorem}[Constant continuous-time directional regret]
\label{thm:ct-regret}
Every trajectory of \eqref{eq:BNPGD} satisfies
$\Reg_i^D(T)\leq\frac{\diam(\mathcal X_i)^2}{2}$ for every $T\geq 0.$
Consequently,
$\frac{\Reg_i^D(T)}{T}=O(T^{-1}).$
\end{theorem}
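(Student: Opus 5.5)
The plan is to use the squared distance to a fixed comparator as an energy function, which is the continuous-time analogue of the mirror-descent/projected-gradient potential argument. Fix a player $i$ and a deviation $y_i\in\mathcal X_i$, and set $V(t):=\frac12\|x_i(t)-y_i\|^2$. Since $x_i(\cdot)$ is absolutely continuous, so is $V$. For almost every $t$ we then have
\[
\dot V(t)=\langle \dot x_i(t),x_i(t)-y_i\rangle
=\bigl\langle \Pi_{T_{\mathcal X_i}(x_i(t))}(-D_i(x(t))),\,x_i(t)-y_i\bigr\rangle .
\]
The goal is to show $\dot V(t)\le -\langle D_i(x(t)),x_i(t)-y_i\rangle$ almost everywhere. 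Integrating over $[0,T]$ then gives
\[
\int_0^T\langle D_i(x(t)),x_i(t)-y_i\rangle\,dt\le V(0)-V(T)\le \tfrac12\diam(\mathcal X_i)^2 .
\]
The bound is uniform in $y_i$, so taking the supremum over $y_i$ yields the claim. Dividing by $T$ gives the $O(T^{-1})$ rate.

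The key step is the pointwise inequality. Write $v:=-D_i(x(t))$, $T:=T_{\mathcal X_i}(x_i(t))$ and $N:=N_{\mathcal X_i}(x_i(t))$. The Moreau decomposition for the closed convex cone $T$, whose polar cone is $N$, gives $v=\Pi_T(v)+\Pi_N(v)$ with $\langle \Pi_T(v),\Pi_N(v)\rangle=0$. Therefore
\[
\langle \Pi_T(v),x_i-y_i\rangle=\langle v,x_i-y_i\rangle-\langle \Pi_N(v),x_i-y_i\rangle .
\]
By definition of the normal cone, $\langle \Pi_N(v),y_i-x_i\rangle\le0$, that is, $-\langle \Pi_N(v),x_i-y_i\rangle\le 0$. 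It follows that $\dot V\le\langle v,x_i-y_i\rangle=-\langle D_i(x),x_i-y_i\rangle$, which is exactly what is needed. In the interior case $T=\mathbb R^{d_i}$ and the inequality holds with equality.

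Two technical points need care, and I expect them to be the main (mild) obstacle. First, the chain rule for $V$ must hold almost everywhere; this follows from absolute continuity of $x_i$ and smoothness of the squared norm. Second, the integrand $t\mapsto\langle D_i(x(t)),x_i(t)-y_i\rangle$ must be measurable and integrable even though $D_i$ may be discontinuous at stationary blocks. For measurability, $D_i$ is Borel: it is continuous on the open set $\{\nabla_iJ_i\ne0\}$ and identically $0$ on its closed complement. Its composition with the continuous path $x(\cdot)$ is therefore measurable. Integrability follows because $\|D_i\|\le1$ and $\mathcal X_i$ is bounded, so the integrand is bounded. No property of the opponents' trajectories is used, so the bound holds for every trajectory.
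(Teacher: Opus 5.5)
Your proposal is correct and follows the paper's proof essentially verbatim: the paper also uses $V_i(t)=\frac12\|x_i(t)-y_i\|^2$, writes $-D_i(x)=\dot x_i+n_i$ with $n_i\in N_{\mathcal X_i}(x_i)$ (your Moreau decomposition), drops the term $\langle n_i,y_i-x_i\rangle\le 0$, integrates over $[0,T]$, and takes the supremum over $y_i$. Your remarks on the almost-everywhere chain rule and the Borel measurability of $D_i$ are details the paper leaves implicit; just avoid using the letter $T$ for both the tangent cone and the time horizon.
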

\end{tcolorbox}
Thus block-normalized play is no-regret with respect to the ordinal benchmark
against arbitrary opponent behavior, without monotonicity or potential
assumptions.

Let $\mu_T =\frac1T\int_0^T\delta_{x(t)}\,dt$ denote the occupation measure. Combining
Theorem~\ref{thm:ct-regret} with
\eqref{eq:ordinal-coarse-var-gap-from-dir-reg} gives:

\begin{tcolorbox}[colframe=white!, top=2pt,left=2pt,right=2pt,bottom=2pt] 
\begin{corollary}[Ordinal coarse equilibrium at rate $O(1/T)$]
\label{cor:ct-ocve}
The occupation measure of block-normalized play satisfies
$\mathcal E_D(\mu_T)
\leq
\frac{1}{2T}\sum_{i=1}^N R_i^2.$
Hence time-average play converges to the set of ordinal coarse variational
equilibria at rate $O(T^{-1})$.
\end{corollary}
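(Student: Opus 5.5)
The plan is to combine Theorem~\ref{thm:ct-regret} with the identity \eqref{eq:ordinal-coarse-var-gap-from-dir-reg}, now written for the occupation measure rather than the empirical distribution. First I check that the identity carries over. For $\mu_T=\frac1T\int_0^T\delta_{x(t)}\,dt$ and any fixed $y_i\in\mathcal X_i$, the definition of the pushforward of Lebesgue measure along the trajectory gives
\[
\mathbb E_{x\sim\mu_T}\langle D_i(x),x_i-y_i\rangle=\frac1T\int_0^T\langle D_i(x(t)),x_i(t)-y_i\rangle\,dt .
\]
I then take the supremum over $y_i$ on both sides. Since $1/T$ is a constant, this yields $\sup_{y_i}\mathbb E_{\mu_T}\langle D_i(x),x_i-y_i\rangle=\Reg_i^D(T)/T$ with $\Reg_i^D$ as in \eqref{eq:cont-dir-reg}. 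Taking positive parts and summing over players gives $\mathcal E_D(\mu_T)=\frac1T\sum_i[\Reg_i^D(T)]_+$.

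There is one measurability point to address. $D_i$ may be discontinuous at stationary blocks, but it is Borel: it is the normalized continuous field on the open set $\{\nabla_iJ_i\neq0\}$ and zero on the complement. The map $t\mapsto x(t)$ is continuous, so the integrand is bounded and measurable, with bound $\|D_i\|\le1$ and $\|x_i-y_i\|\le R_i$. Hence both sides are well defined and finite.

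Theorem~\ref{thm:ct-regret} gives $\Reg_i^D(T)\le R_i^2/2$, where $R_i=\diam(\mathcal X_i)$. This implies $[\Reg_i^D(T)]_+\le R_i^2/2$, because the right-hand side is nonnegative. Summing over $i$ gives $\mathcal E_D(\mu_T)\le\frac{1}{2T}\sum_iR_i^2$.

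For the set-convergence interpretation, I note that $\mathcal E_D\ge0$ and that it vanishes exactly on ordinal coarse variational equilibria. The residual along $\mu_T$ therefore tends to zero at rate $O(1/T)$, which is the sense in which the statement is intended.

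I expect no real obstacle. The only delicate step is verifying that the discrete identity \eqref{eq:ordinal-coarse-var-gap-from-dir-reg} holds for the occupation measure, which is the measurability and pushforward check above. The substantive work is already contained in Theorem~\ref{thm:ct-regret}.
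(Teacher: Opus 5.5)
Your proposal is correct and follows the paper's argument: the paper proves the corollary by combining the constant continuous-time regret bound of Theorem~\ref{thm:ct-regret} with the continuous-time analogue of the identity \eqref{eq:ordinal-coarse-var-gap-from-dir-reg} for the occupation measure $\mu_T$. Your pushforward and Borel-measurability checks make explicit what the paper leaves implicit, and your residual-based reading of ``convergence to the set'' matches the sense the paper intends.
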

\end{tcolorbox}
This is the ordinal counterpart of the classical no-regret-to-CCE
correspondence.

\subsection{Nash convergence in ordinal potential games}
No-regret alone controls time-average behavior. We next identify a class of
games in which the same dynamics also admits a Lyapunov function and converges
in the last iterate.

\begin{tcolorbox}[colframe=white!, top=2pt,left=2pt,right=2pt,bottom=2pt] 
\begin{lemma}[First-order alignment of costs and the ordinal potential]
\label{lem:ordinal-potential-gradient-alignment}
Suppose that $\mathcal G$ is an ordinal potential game with costs
$J_i$ and ordinal potential $\Phi$ that are $C^1$ on an open
neighborhood of $\mathcal X$, and suppose that each $\mathcal X_i$ has nonempty interior. 
Under Assumption~\ref{as:non-degeneracy-ord-pot}, for every
$i\in[N]$ and every $x\in\mathcal X$, either
$\nabla_iJ_i(x) = \nabla_i\Phi(x) = 0$, 
or there exists a scalar $\lambda_i(x)>0$ s.t. 
\begin{equation}
\label{eq:positive-gradient-collinearity}
\nabla_iJ_i(x)
=
\lambda_i(x)\nabla_i\Phi(x).
\end{equation}
In particular, whenever the gradients are nonzero,
$
D_i(x)= \frac{\nabla_iJ_i(x)}
{\|\nabla_iJ_i(x)\|}
=
\frac{\nabla_i\Phi(x)}
{\|\nabla_i\Phi(x)\|}.
$
\end{lemma}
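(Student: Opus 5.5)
The plan is to fix a player $i$ and a profile $x\in\mathcal X$. By Assumption~\ref{as:non-degeneracy-ord-pot}, either both block gradients vanish, in which case we are done, or both are nonzero. So assume $g:=\nabla_iJ_i(x)\neq0$ and $p:=\nabla_i\Phi(x)\neq0$. The goal is to show that $g$ and $p$ are positively collinear. Equivalently, the half-spaces $\{v:\langle g,v\rangle<0\}$ and $\{v:\langle p,v\rangle<0\}$ coincide on the tangent directions available at $x_i$, and that set of directions is rich enough to pin down the rays.

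\textbf{Step 1 (sign agreement of directional derivatives).} Take any $y_i\in\mathcal X_i$ and set $v=y_i-x_i$. For $s\in(0,1]$ the point $x_i+sv$ lies in $\mathcal X_i$ by convexity. The ordinal potential property gives
\[
\operatorname{sgn}\bigl(J_i(x_i+sv,x_{-i})-J_i(x_i,x_{-i})\bigr)=\operatorname{sgn}\bigl(\Phi(x_i+sv,x_{-i})-\Phi(x_i,x_{-i})\bigr).
\]
Dividing by $s$ and letting $s\downarrow0$, we conclude: if $\langle g,v\rangle<0$ then the $J_i$-difference is negative for small $s$, hence so is the $\Phi$-difference, and therefore $\langle p,v\rangle\le0$. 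The symmetric implication holds as well. So the open half-cones of strict descent for $g$ and for $p$, intersected with the cone $K$ of feasible directions $\{\lambda(y_i-x_i)\}$, each lie in the other's closed half-space.

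\textbf{Step 2 (enough directions).} Because $\mathcal X_i$ has nonempty interior, the cone $K$ has nonempty interior in $\mathbb R^{d_i}$. Moreover $\operatorname{cl}K=T_{\mathcal X_i}(x_i)$, and this set is full-dimensional even when $x_i$ lies on the boundary. I expect this to be the main obstacle: at boundary points we cannot use all of $\mathbb R^{d_i}$ and must argue within a full-dimensional convex cone. Step 1 shows that for $v\in\operatorname{int}K$, $\langle g,v\rangle<0$ implies $\langle p,v\rangle\le0$. By openness this upgrades to the statement that $\{\langle g,\cdot\rangle<0\}\cap\operatorname{int}K\subseteq\{\langle p,\cdot\rangle\le0\}$, and symmetrically. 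The hard case is when $\operatorname{int}K$ misses the descent half-space of $g$, that is, when $-g\in N_{\mathcal X_i}(x_i)$. In that case I would fall back on pseudoconvexity (Assumption~\ref{ass:block-pseudoconvex}) to treat the situation in which $x_i$ is a unilateral minimizer for both $J_i$ and $\Phi$. If that does not resolve it, the alternative is to use interior points: perturb $x$ to nearby $x'$ with $x'_i\in\operatorname{int}\mathcal X_i$, prove collinearity there, and pass to the limit by continuity of the gradients. The normalized vectors converge because $g,p\neq0$ at $x$. This last route seems cleanest, so I would adopt it and reduce everything to interior points.

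\textbf{Step 3 (interior linear algebra).} At an interior point, $K=\mathbb R^{d_i}$, and Step 1 yields that $\langle g,v\rangle<0$ implies $\langle p,v\rangle\le0$ for all $v$. Suppose $p\neq\lambda g$ for every $\lambda>0$. Then there is some $v$ with $\langle g,v\rangle<0<\langle p,v\rangle$: if $p$ is not a nonnegative multiple of $g$, separate $p$ from the ray $\mathbb R_+g$. This contradicts the implication. Hence $p=\lambda g$ with $\lambda>0$, and since both vectors are nonzero, $\lambda_i(x)=1/\lambda>0$ gives \eqref{eq:positive-gradient-collinearity}. Passing to the limit from interior approximations preserves positive collinearity, with the limiting ratio $\lambda_i(x)=\|g\|/\|p\|>0$. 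The formula for $D_i(x)$ then follows immediately by normalization.
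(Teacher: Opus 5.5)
Your proposal is correct and follows essentially the paper's route: you reduce via nondegeneracy to the case of nonzero gradients, obtain positive collinearity at interior points from first-order sign agreement, and handle boundary points by interior approximation together with continuity of the normalized gradients. The only differences are cosmetic. You close the interior case with a separation argument rather than the paper's $a^\perp\subseteq b^\perp$ dimension count plus a separate positivity check; note that separation only yields $\langle w,g\rangle\le 0$, so to get $\langle g,v\rangle<0<\langle p,v\rangle$ strictly you should take $v=w-\varepsilon g$ for small $\varepsilon>0$. The pseudoconvexity fallback you mention is unnecessary and is not a hypothesis of the lemma.
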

\end{tcolorbox}

\begin{remark}[Role of Assumption~\ref{as:non-degeneracy-ord-pot}]
The nondegeneracy condition is needed because ordinal equivalence alone
does not preserve first-order stationarity. For example, the functions
$s\mapsto s$ and $s\mapsto s^3$ induce the same ordering on
$\mathbb R$, although their derivatives behave differently at the
origin. Thus, without nondegeneracy, an ordinal representation could
create or remove stationary blocks. Assumption~\ref{as:non-degeneracy-ord-pot}
rules out precisely this pathology and allows first-order stationarity
of the potential to be identified with first-order stationarity of the
players' costs.
\end{remark}

Thus block normalization removes the player-specific factor $\lambda_i(x)$:
within an ordinal potential game, \eqref{eq:BNPGD} follows the normalized
block gradients of a common potential.

\begin{tcolorbox}[colframe=white!, top=2pt,left=2pt,right=2pt,bottom=2pt] 
\begin{theorem}[Convergence of \eqref{eq:BNPGD} in ordinal potential games]
\label{thm:cv-BNPG-ordinal-pot}
Let $\mathcal{G}$ be an ordinal potential game with potential $\Phi$ such that
Assumption~\ref{as:non-degeneracy-ord-pot} holds. 
Suppose in addition that $\mathcal{X}_i$ has nonempty interior for any $i \in [N]$.  
Then the following statements
hold:
\begin{enumerate}[label=(\roman*), leftmargin=*,itemsep=0pt,topsep=0pt]
\item \label{pot-dissipation} (potential dissipation)
$\frac{d}{dt}\Phi(x(t))
= - \sum_{i=1}^N \|\nabla_i\Phi(x(t))\|
\,
\|\Pi_{T_{\mathcal X_i}(x_i)}\!\left(-D_i(x)\right)\|^2
\le 0$ for almost every $t$ along every absolutely
continuous solution of \eqref{eq:BNPGD}.
\item \label{pot-dissipation-interior} (block-$\ell_{2,1}$ dissipation)
$\frac{d}{dt}\Phi(x(t)) =
-\sum_{i=1}^N \|\nabla_i\Phi(x(t))\|
= -\|\nabla\Phi(x(t))\|_{2,1}$
for almost every $t$, if the trajectory remains in $\operatorname{int}(\mathcal X)$.
\item \label{fos} (first-order stationarity)
For any $T>0$,
$\min_{0\le t\le T}
\|\nabla\Phi(x(t))\|_{2,1}
\le
\frac{\Phi(x(0))-\Phi_{\inf}}{T},$
where
$\Phi_{\inf}:=\inf_{x\in\mathcal X}\Phi(x)>-\infty$,
if again the trajectory remains interior.
\item \label{convergence} (convergence)
Assuming a solution of \eqref{eq:BNPGD} remains in a compact set
$K\subset\operatorname{int}(\mathcal X)$ and $\nabla\Phi$ is Lipschitz on $K$,
then for every $i\in[N]$,
$\|\nabla_i\Phi(x(t))\|\longrightarrow0.$
If in addition Assumption~\ref{ass:block-pseudoconvex} holds, then the rest
points of \eqref{eq:BNPGD} coincide exactly with the set of Nash equilibria
and
$\operatorname{dist}\bigl(x(t),\operatorname{NE}\bigr)\longrightarrow 0.$
\end{enumerate}
\end{theorem}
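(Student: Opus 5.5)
We prove the four parts in order, relying throughout on Lemma~\ref{lem:ordinal-potential-gradient-alignment}: whenever $\nabla_iJ_i(x)\neq0$, we have $D_i(x)=\nabla_i\Phi(x)/\|\nabla_i\Phi(x)\|$; otherwise $D_i(x)=0=\nabla_i\Phi(x)$.

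For \ref{pot-dissipation}, an absolutely continuous trajectory $x(\cdot)$ and a $C^1$ function $\Phi$ give an absolutely continuous composition $\Phi(x(t))$. Its derivative is
\[
\frac{d}{dt}\Phi(x(t))=\sum_i\langle\nabla_i\Phi(x),\dot x_i\rangle \quad \text{a.e.}
\]
On blocks with $\nabla_i\Phi=0$ the summand vanishes, which is consistent with the claimed formula. On the remaining blocks, substitute $\nabla_i\Phi=\|\nabla_i\Phi\|\,D_i$ and $\dot x_i=\Pi_{T_i}(-D_i)$, where $T_i:=T_{\mathcal X_i}(x_i)$. We then use the Moreau decomposition for the closed convex cone $T_i$: $v=\Pi_{T_i}(v)+\Pi_{N_i}(v)$ with the two parts orthogonal, where $N_i$ is the normal cone. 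With $v=-D_i$ this gives $\langle -D_i,\Pi_{T_i}(-D_i)\rangle=\|\Pi_{T_i}(-D_i)\|^2$. Hence $\langle\nabla_i\Phi,\dot x_i\rangle=-\|\nabla_i\Phi\|\,\|\Pi_{T_i}(-D_i)\|^2$, and summing over $i$ yields the formula.

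For \ref{pot-dissipation-interior}, at interior points $T_i=\mathbb R^{d_i}$. Thus the projection is the identity and $\|\Pi_{T_i}(-D_i)\|^2=\|D_i\|^2=1$ whenever $\nabla_i\Phi\neq0$. For \ref{fos}, we integrate \ref{pot-dissipation-interior} over $[0,T]$ to obtain $\int_0^T\|\nabla\Phi(x(t))\|_{2,1}\,dt=\Phi(x(0))-\Phi(x(T))\le\Phi(x(0))-\Phi_{\inf}$. Bounding the minimum by the average gives the claim. Here $\Phi_{\inf}>-\infty$ because the action sets are compact, as in the standing assumption of this section; otherwise it is simply assumed. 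The minimum exists by continuity of $t\mapsto\|\nabla\Phi(x(t))\|_{2,1}$.

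For \ref{convergence}, let $g(t):=\|\nabla\Phi(x(t))\|_{2,1}$. We already know $\int_0^\infty g<\infty$. To upgrade this to $g(t)\to0$, we use a Barbalat-type argument, which is the main obstacle. The velocity satisfies $\|\dot x_i\|\le1$, so $x(\cdot)$ is $\sqrt N$-Lipschitz. Since $\nabla\Phi$ is $L_K$-Lipschitz on $K$, $g$ is uniformly Lipschitz. A nonnegative, uniformly continuous, integrable function must tend to zero, so every block $\|\nabla_i\Phi(x(t))\|\to0$. By Assumption~\ref{as:non-degeneracy-ord-pot} and continuity, $\nabla_iJ_i(x(t))\to0$ along the trajectory as well.

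To identify the limit set, let $\omega$ be the set of cluster points of $x(t)$. It is nonempty and contained in $K\subset\operatorname{int}\mathcal X$. At any $\bar x\in\omega$ we have $\nabla_i\Phi(\bar x)=0$, hence $\nabla_iJ_i(\bar x)=0$ and $D_i(\bar x)=0$, so $\Gap_D(\bar x)=0$. Under Assumption~\ref{ass:block-pseudoconvex}, Proposition~\ref{prop:ordinal-gap-ne} then gives $\bar x\in\NE$. Compactness of $K$ turns convergence of all cluster points into $\operatorname{dist}(x(t),\NE)\to0$: if not, some subsequence would stay $\varepsilon$-far from $\NE$ yet have a cluster point in $\NE$.

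Finally, the identification of rest points with $\NE$ splits in two. A rest point $\bar x$ satisfies $\Pi_{T_i}(-D_i(\bar x))=0$ for all $i$, which means $-D_i(\bar x)\in N_{\mathcal X_i}(\bar x_i)$. This is exactly $\langle D_i(\bar x),y_i-\bar x_i\rangle\ge0$ for all $y_i$, i.e.\ $\Gap_D(\bar x)=0$, and Proposition~\ref{prop:ordinal-gap-ne} gives both inclusions.
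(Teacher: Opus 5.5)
Your proposal is correct and follows the paper's proof essentially step by step. For (i)--(iii) you use the chain rule together with Lemma~\ref{lem:ordinal-potential-gradient-alignment} and Moreau's decomposition, then integrate. For (iv) you use the same Barbalat-type argument (integrability plus Lipschitz continuity of $t\mapsto\|\nabla_i\Phi(x(t))\|$ from $\|\dot x_i\|\le 1$), the same cluster-point argument via nondegeneracy and Proposition~\ref{prop:ordinal-gap-ne}, and the same normal-cone characterization of rest points.
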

\end{tcolorbox}
The interior dissipation identity admits a useful geometric interpretation.
Since $\|\cdot\|_{2,1}$ is dual to
$\|v\|_{2,\infty}=\max_i\|v_i\|_2$,
$\min_{\|v_i\|\le1,\ \forall i}
\langle\nabla\Phi(x),v\rangle=
-\|\nabla\Phi(x)\|_{2,1}.$
Hence block-normalized play is the steepest-descent direction of the ordinal
potential when each player's instantaneous speed is bounded by one.

\section{Proofs for Section~\ref{sec:ordinal-learning-dyn}: Ordinal Learning Dynamics}

\subsection{Proofs for Section~\ref{subsec:continuous-time-dyn}: Continuous-time ordinal dynamics}

\subsubsection{Proof of Theorem~\ref{thm:ct-regret}}

\begin{proof}
Fix $y_i\in\mathcal X_i$ and define $V_i(t)=\frac12\|x_i(t)-y_i\|^2$. The characterization of the projection in the  \eqref{eq:BNPGD} dynamics yields $-D_i(x)=\dot x_i+n_i$ for some $n_i\in N_{\mathcal X_i}(x_i)$, where $N_{\mathcal X_i}(x_i)$ denotes the normal cone to the set $\mathcal{X}_i$. Since $\langle n_i,y_i-x_i\rangle\leq0$ by definition of the normal cone,
\[
\dot V_i(t)=\langle \dot x_i, x_i-y_i\rangle
= -\langle D_i(x),x_i-y_i\rangle + \langle n_i,y_i-x_i\rangle
\leq-\langle D_i(x),x_i-y_i\rangle.
\]
Integrating the above inequality from $0$ to $T$ and rearranging gives: 
\[
\int_0^T\langle D_i(x(t)),x_i(t)-y_i\rangle\,dt
\leq
\frac12\|x_i(0)-y_i\|^2
-\frac12\|x_i(T)-y_i\|^2
\leq\frac{\diam(\mathcal X_i)^2}{2}.
\]
Taking the supremum over $y_i$ proves the claim.
\end{proof}

\subsubsection{Proof of Lemma~\ref{lem:ordinal-potential-gradient-alignment}}

\begin{proof}
Let $i\in[N]$ and $x\in\mathcal X$.
By Assumption~\ref{as:non-degeneracy-ord-pot},
the two block gradients vanish simultaneously. Thus it remains only
to consider the case $\nabla_iJ_i(x)\neq0, \nabla_i\Phi(x)\neq0\,.$

Suppose first that $x_i\in\operatorname{int}(\mathcal X_i)$.
Define the two $C^1$ functions
\[
f_i(y_i):=J_i(y_i,x_{-i}),
\qquad
\phi_i(y_i):=\Phi(y_i,x_{-i}).
\]
The ordinal-potential property implies that, for all
$y_i,z_i\in\mathcal X_i$,
\[
\operatorname{sgn}
\bigl(
f_i(y_i)-f_i(z_i)
\bigr)
=
\operatorname{sgn}
\bigl(
\phi_i(y_i)-\phi_i(z_i)
\bigr).
\]
We claim that the two gradients $\nabla f_i(x_i)$ and $\nabla \phi_i(x_i)$ are positively collinear.\\ 

\noindent\textbf{Proof strategy via orthogonal complements.} Set
$a:=\nabla f_i(x_i), b:=\nabla\phi_i(x_i).$ 
By assumption, $a\neq0$ and $b\neq0$.
To show that $a$ and $b$ are collinear, it suffices to show that $\operatorname{span}\{a\}
=\operatorname{span}\{b\}.$ In the following, we equivalently show by taking orthogonal complements that 
$a^\perp := \{d:\langle a,d\rangle=0\}$ and $b^\perp := \{d:\langle b,d\rangle=0\}$ coincide, i.e., 
$a^\perp=b^\perp.$ Finally, since $a$ and $b$ are both nonzero,
$a^\perp$ and $b^\perp$ are both $(d_i-1)$-dimensional subspaces, it suffices to show that $a^\perp \subseteq b^\perp$ 
to conclude the equality $a^\perp = b^\perp\,.$\\

\noindent\textbf{Proof of the inclusion $a^\perp \subseteq b^\perp$.} 
Fix any direction $d\in\mathbb R^{d_i}$ satisfying
$\langle a,d\rangle=0$. We prove that $\langle b, d\rangle = 0$ as well. 
For $\varepsilon>0$, define
$d_\varepsilon^+
:= d+\varepsilon a,
d_\varepsilon^-
:=
d-\varepsilon a.$ 
Since $x_i\in\operatorname{int}(\mathcal X_i)$, for every fixed
$\varepsilon>0$ and all sufficiently small $h>0$,
$x_i+hd_\varepsilon^\pm\in\mathcal X_i$.

By differentiability of $f_i$ (which follows from differentiability of $J_i)$,
\begin{align*}
f_i(x_i+hd_\varepsilon^+)-f_i(x_i)
=
h\langle a,d_\varepsilon^+\rangle+o(h)
=
h\varepsilon\|a\|^2+o(h)>0
\end{align*}
for all sufficiently small $h>0$.
By the ordinal-potential property,
\[
\phi_i(x_i+hd_\varepsilon^+)-\phi_i(x_i)>0
\]
for the same sufficiently small $h$.
Dividing by $h$ and letting $h\downarrow0$ gives
$\langle b,d+\varepsilon a\rangle\ge0.$ 
Similarly,
\begin{align*}
f_i(x_i+hd_\varepsilon^-)-f_i(x_i)
=
h\langle a,d_\varepsilon^-\rangle+o(h)
= -h\varepsilon\|a\|^2+o(h)<0
\end{align*}
for sufficiently small $h>0$.
Ordinal equivalence therefore gives
\[
\phi_i(x_i+hd_\varepsilon^-)-\phi_i(x_i)<0,
\]
and hence
$\langle b,d-\varepsilon a\rangle\le0.$
Letting $\varepsilon\downarrow0$ in the preceding two inequalities
yields
$\langle b,d\rangle\ge0$ and
$\langle b,d\rangle\le0,$ 
so $\langle b,d\rangle=0.$ 
This proves that $a^\perp \subseteq b^\perp$.\\ 

\noindent\textbf{Proof of positive collinearity.} 
We have proved that $a$ and $b$ are collinear: there exists
$\mu_i(x)\neq0$ such that $b=\mu_i(x)a.$
It remains to show that $\mu_i(x)>0$.
Consider the direction $a$. Since
$\langle a,a\rangle=\|a\|^2>0,$ 
differentiability gives
\[
f_i(x_i+ha)-f_i(x_i)>0
\]
for every sufficiently small $h>0$.
The ordinal-potential property therefore implies
\[
\phi_i(x_i+ha)-\phi_i(x_i)>0.
\]
Dividing by $h$ and letting $h\downarrow0$ yields
$\langle b,a\rangle\ge0.$ 
Since $b=\mu_i(x)a$,
$\langle b,a\rangle
=\mu_i(x)\|a\|^2.$ 
Thus $\mu_i(x)\ge0$. Because $b\neq0$, we have
$\mu_i(x)\neq0$, and consequently
$\mu_i(x)>0.$ 
Equivalently, there exists $\lambda_i(x)>0$ such that
\[
\nabla f_i(x_i)
=
\lambda_i(x)\nabla\phi_i(x_i).
\]

\noindent\textbf{Proof for the case $x_i\in\partial\mathcal X_i$.} 
It remains to treat $x_i\in\partial\mathcal X_i$. We will use the result for the case $x_i\in\operatorname{int}(\mathcal X_i)$ proved above. 
Fix $c_i\in\operatorname{int}(\mathcal X_i)$ which always exists since $\mathcal{X}_i$ has a nonempty interior by assumption, and let
$\varepsilon_n\downarrow0$ with $\varepsilon_n\in(0,1)$. Define
$x_i^n
:= (1-\varepsilon_n)x_i+\varepsilon_n c_i.$ 
By convexity and since $c_i\in\operatorname{int}(\mathcal X_i)$, 
we have $x_i^n\in\operatorname{int}(\mathcal X_i)$ 
for every $n$, and $x_i^n\to x_i$.

Set $x^n:=(x_i^n,x_{-i}).$ 
Since $\nabla_iJ_i(x)\neq0, \nabla_i\Phi(x)\neq0,$ 
continuity implies that both gradients remain nonzero at $x^n$
for all sufficiently large $n$. Applying the interior argument gives
\[
\frac{\nabla_iJ_i(x^n)}
{\|\nabla_iJ_i(x^n)\|}
=
\frac{\nabla_i\Phi(x^n)}
{\|\nabla_i\Phi(x^n)\|}.
\]
Passing to the limit and using continuity yields
\[
\frac{\nabla_iJ_i(x)}
{\|\nabla_iJ_i(x)\|}
=
\frac{\nabla_i\Phi(x)}
{\|\nabla_i\Phi(x)\|}.
\]
Since both gradients are nonzero,
\[
\nabla_iJ_i(x)
=
\frac{\|\nabla_iJ_i(x)\|}
{\|\nabla_i\Phi(x)\|}
\,\nabla_i\Phi(x).
\]
Thus \eqref{eq:positive-gradient-collinearity} holds with
$\lambda_i(x)
:=
\frac{\|\nabla_iJ_i(x)\|}
{\|\nabla_i\Phi(x)\|}
>0.$ 
This completes the proof.
\end{proof}

\subsubsection{Proof of Theorem~\ref{thm:cv-BNPG-ordinal-pot}}

\begin{proof}

\noindent\textbf{Proof of \ref{pot-dissipation}.} Let $x(t)$ be an absolutely
continuous solution of \eqref{eq:BNPGD}. By the differentiability of $\Phi$ and using the chain rule, we have for any $t \geq 0$,  
\begin{equation}
\label{eq:lyap1}
\frac{d}{dt}\Phi(x(t)) = \langle \dot{x}(t), \nabla \Phi(x(t)) \rangle = \sum_{i=1}^N \langle \dot{x}_i(t), \nabla_i \Phi(x(t)) \rangle = \sum_{i=1}^N \|\nabla_i \Phi(x(t))\| \cdot \langle \dot{x}_i(t), D_i(x(t)) \rangle\,, 
\end{equation}
where the last equality follows from using $D_i(x(t)) = \frac{\nabla_i \Phi(x(t))}{\|\nabla_i \Phi(x(t))\|}$ as proved in Lemma~\ref{lem:ordinal-potential-gradient-alignment}. 
When $\nabla_i\Phi(x)=0$, Assumption~\ref{as:non-degeneracy-ord-pot} gives $\nabla_iJ_i(x)=0$, hence $D_i(x)=0$ and that block contributes zero to the chain-rule identity. 

Now by Moreau's decomposition theorem applied, we can write for every $i \in [N]$ and any $t \geq 0, -D_i(x(t)) = \dot{x}_i(t) + n_i(t)$ where $\dot{x}_i(t) \in T_{\mathcal{X}_i}(x_i(t))$ and $n_i(t) \in N_{\mathcal{X}_i}(x_i(t))$. In particular, we have $\langle \dot{x}_i(t), n_i(t) \rangle = 0.$ \eqref{eq:BNPGD} gives $\dot x_i=\Pi_{T_{\mathcal X_i}(x_i)}\!\left(-D_i(x)\right)$ and using this in \eqref{eq:lyap1} gives: 
\begin{equation}
\label{eq:lyap2}
\frac{d}{dt}\Phi(x(t)) = \sum_{i=1}^N \|\nabla_i \Phi(x(t))\| \cdot \langle \dot{x}_i(t), - \dot{x}_i(t) - n_i(t)  \rangle = -  \sum_{i=1}^N \|\nabla_i \Phi(x(t))\| \cdot \|\dot{x}_i(t)\|^2\,, 
\end{equation}
which concludes the proof of the first item using \eqref{eq:BNPGD}, i.e., $\dot{x}_i(t) = \Pi_{T_{\mathcal X_i}(x_i)}\!\left(-D_i(x)\right)$.\\

\noindent\textbf{Proof of \ref{pot-dissipation-interior}.} If the trajectory $x(t)$ remains in $\operatorname{int}(\mathcal X)$, it follows that $\dot{x}_i(t) = \Pi_{T_{\mathcal X_i}(x_i)}\!\left(-D_i(x)\right) = - D_i(x(t))$ as the projection is not active. Since $\|D_i(x(t))\| = 1$ when $\nabla_i \Phi(x(t)) \neq 0$, we obtain that for any $t$ s.t. $\nabla_i \Phi(x(t)) \neq 0$, 
\begin{equation}
\label{eq:lyap2}
\frac{d}{dt}\Phi(x(t)) = -  \sum_{i=1}^N \|\nabla_i \Phi(x(t))\| = - \|\nabla \Phi(x(t))\|_{2,1}\,.  
\end{equation}

\noindent\textbf{Proof of \ref{fos}.} Let $T > 0$. Integrating the previous identity between $0$ and $T$, and rearranging the resulting identity yields: 
\begin{equation}
\label{eq:integ-pot}
\int_0^T \|\nabla \Phi(x(t))\|_{2,1} dt = \Phi(x(0)) - \Phi(x(T)) \leq \Phi(x(0)) - \Phi_{\inf}\,.
\end{equation}
As a consequence, 
\begin{equation}
\min_{0\le t\le T} \|\nabla\Phi(x(t))\|_{2,1} \leq \frac{1}{T} \int_0^T \|\nabla \Phi(x(t))\|_{2,1} dt \leq \frac{\Phi(x(0)) - \Phi_{\inf}}{T}\,.
\end{equation}

\noindent\textbf{Proof of \ref{convergence}.} 
A point $x^{\star}$ is a rest point of \eqref{eq:BNPGD} if and only if: 
\begin{equation}
\Pi_{T_{\mathcal X_i}(x_i^{\star})}\!\left(-D_i(x^{\star})\right) = 0\,, \quad \forall i \in [N]\,.
\end{equation}
Equivalently, $- D_i(x^\star) \in N_{\mathcal{X}_i}(x_i^\star)$, i.e., 
\begin{equation}
\langle D_i(x^{\star}), y_i - x_i^{\star} \rangle \geq 0\,, \quad \forall y_i \in \mathcal{X}_i, \forall i \in [N]\,.
\end{equation}
By \eqref{eq:ordinal-nash-gap}, this is equivalent to $\Gap_D(x^{\star}) = 0$, which by Proposition~\ref{prop:ordinal-gap-ne} is equivalent to $x^{\star} \in \NE(\mathcal{G})$ under Assumption~\ref{ass:block-pseudoconvex}.

Now it follows from \eqref{eq:integ-pot} by taking $T \to +\infty$ that:  
\begin{equation}
\int_0^{+\infty} \|\nabla \Phi(x(t))\|_{2,1} dt < +\infty.
\end{equation}

We now prove that if $x(t)$ remains in a compact set $K\subset\operatorname{int}(\mathcal X)$ and $\nabla\Phi$ is Lipschitz on $K$, then $\|\nabla\Phi(x(t))\|_{2,1}\longrightarrow0.$ The proof of this statement follows from proving the following two claims: 
\begin{enumerate}[label=(\roman*), leftmargin=*,itemsep=0pt,topsep=0pt]
\item For all $i \in [N], \int_0^\infty \|\nabla_i\Phi(x(t))\|\,dt<\infty\,.$
\item For any $i \in [N], g_i: t \mapsto \|\nabla_i\Phi(x(t))\|$ is uniformly continuous. 
\end{enumerate}
We start by proving each one of the statements before concluding. 

Along the trajectory, we have, for almost every $t\geq 0$,
\begin{equation}
\frac{d}{dt}\Phi(x(t))
=
\sum_{i=1}^N
\left\langle \nabla_i\Phi(x(t)),\dot{x}_i(t)\right\rangle
=
-\sum_{i=1}^N \|\nabla_i\Phi(x(t))\|.
\end{equation}
Hence, for every $T>0$,
\begin{equation}
\int_0^T
\sum_{i=1}^N\|\nabla_i\Phi(x(t))\|\,dt
=
\Phi(x(0))-\Phi(x(T)).
\end{equation}
Since $\Phi$ is bounded from below, it follows that for every $i \in [N],$
\begin{equation}
\label{eq:integrable-gi}
\int_0^\infty \|\nabla_i\Phi(x(t))\|\,dt<\infty\,, 
\end{equation}
which proves the first claim above. 

As for the second claim, observe first that $\|\dot{x}_i(t)\|\leq 1$ by definition of the normalized flow. 
It follows then that $\|\dot{x}(t)\|\leq \sqrt{N}$ for the Euclidean norm and as a consequence, for all $s,t\geq 0$, we have $\|x(t)-x(s)\| \leq \sqrt{N}\,|t-s|.$ 
Since $\nabla\Phi$ is $L$-Lipschitz on the compact set $K$ in which the trajectory lies, we obtain for all $s,t\geq 0$: 
\[
\begin{aligned}
|g_i(t)-g_i(s)| \leq \|\nabla_i\Phi(x(t))-\nabla_i\Phi(x(s))\|
\leq
\|\nabla\Phi(x(t))-\nabla\Phi(x(s))\| 
\leq
L\|x(t)-x(s)\|\\
\leq
L\sqrt{N}\,|t-s|\,, 
\end{aligned}
\]
which proves that $g_i$ is Lipschitz, and in particular uniformly continuous. This concludes the proof of the second claim above. 

We now show that $g_i(t)\to 0$ using a standard argument. We provide a detailed proof for completeness. Suppose otherwise for contradiction. Then there exist
$\varepsilon>0$ and a sequence $t_n\to\infty$ such that
$g_i(t_n)\geq \varepsilon$ for all $n$. 
By uniform continuity, there exists $\delta>0$, independent of $n$, such that
\[
|t-t_n|\leq\delta
\quad\Longrightarrow\quad
g_i(t)\geq\frac{\varepsilon}{2}.
\]
Passing to a subsequence if necessary, we may assume that the intervals
$[t_n-\delta,t_n+\delta]$ are pairwise disjoint. It follows that
\[
\begin{aligned}
\int_0^\infty g_i(t)\,dt
\geq
\sum_n
\int_{t_n-\delta}^{t_n+\delta}g_i(t)\,dt 
\geq
\sum_n \varepsilon\delta
=
\infty,
\end{aligned}
\]
contradicting the integrability result proved above \eqref{eq:integrable-gi}. Therefore
$\lim_{t\to\infty}\|\nabla_i\Phi(x(t))\|=0.$

We conclude by proving that $\operatorname{dist}\bigl(x(t),\operatorname{NE}\bigr) \longrightarrow0.$
Suppose, by contradiction, that this limit statement does not hold. Then there
exist $\varepsilon>0$ and a sequence $t_k\to\infty$ such that
\begin{equation}
\operatorname{dist}(x(t_k),\operatorname{NE})\geq \varepsilon\,, 
\qquad \forall k \geq 0.
\end{equation}
Since $K$ is compact, after passing to a subsequence we may assume that $x(t_k)\to \bar x$ for some $\bar x\in K.$

By continuity of $\nabla_i\Phi$ and using the limit statement we already established, we obtain that for every player $i \in [N]$,
\[
\nabla_i\Phi(\bar x)
=
\lim_{k\to\infty}\nabla_i\Phi(x(t_k))
=
0.
\] 
Using then Assumption~\ref{as:non-degeneracy-ord-pot} gives $\nabla_i J_i(\bar{x}) = 0$ for every $i \in [N]$. 
Hence $D(\bar{x}) = 0$, which then implies that 
$\Gap_D(\bar{x}) = 0$ using \eqref{eq:ordinal-nash-gap}. We conclude then from Proposition~\ref{prop:ordinal-gap-ne} which uses Assumption~\ref{ass:block-pseudoconvex} that $\bar{x} \in \NE(\mathcal{G})$. 

Then $\operatorname{dist}(x(t_k),\operatorname{NE}) \leq \|x(t_k)-\bar x\| \longrightarrow 0$ 
because $\bar x$ is a Nash equilibrium and $x(t_k)\to \bar x$. This
contradicts $\operatorname{dist}(x(t_k),\operatorname{NE})\geq\varepsilon$.
Therefore $\operatorname{dist}(x(t),\operatorname{NE})\longrightarrow 0,$ as claimed. 
\end{proof}

\subsection{Proofs for Section~\ref{subsec:discrete-time-ordinal-grad}: Discrete-time ordinal gradient play}

\subsubsection{Proof of Theorem~\ref{thm:discrete-regret}}
\label{apdx:proof-thm-discrete-reg}

\begin{proof}
Nonexpansiveness of projection gives, for every $y_i\in\mathcal X_i$,
\[
\|x_{i,t+1}-y_i\|^2
\leq
\|x_{i,t}-y_i\|^2
-2\eta\langle D_i(x_t),x_{i,t}-y_i\rangle
+\eta^2\|D_i(x_t)\|^2.
\]
Since $\|D_i(x_t)\|\leq1$, summing over $t$ and telescoping gives the result.
\end{proof}

\begin{remark}
On unbounded action sets (e.g., $\mathcal X_i=\mathbb R^{d_i}$), the uniform comparator supremum in $\Reg^D_{i,T}$ is generally infinite. Indeed,
it is infinite whenever $\sum_t D_i(x_t)\neq0$. This differs from standard external regret, where nonlinear utilities may make the cumulative comparator objective coercive and hence yield a finite infimum. Such quantitative growth away from an optimum is not encoded by the normalized preference field. On unbounded domains, we may use a comparator-dependent regret (i.e. for a fixed comparator $y_i$ rather than supremum in $\Reg^D_{i,T}$), or restrict the comparator class to a bounded neighborhood.
\end{remark}

\section{Proofs for Section~\ref{sec:learning-from-comparison-feedback}: Learning from Pairwise Comparisons}

\subsection{Pseudocode for Comparison-Normalized Gradient Play}
\label{app:cngp-pseudocode}

\begin{algorithm}[h]
\caption{Comparison-Normalized Gradient Play (CNGP)}
\label{alg:cngp}
\begin{algorithmic}[1]
\STATE Initialize $z_{i,1}\in\mathcal X_i$ for every player $i\in[N]$.
\FOR{$t=1,\ldots,T$}
    \FOR{each player $i$, in parallel}
        \STATE Set
        $x_{i,t} = \mathsf Q_{i,\delta_t}(z_{i,t}).$
        \STATE Sample $U_{i,t}\sim\operatorname{Unif}
        (\mathbb S^{d_i-1})$.
        \STATE Compare
        $(x_{i,t}+\delta_tU_{i,t},x_{-i,t})$
        and 
        $(x_{i,t}-\delta_tU_{i,t},x_{-i,t})$,
        and observe $Y_{i,t}$.
        \STATE Set
        $\widehat D_{i,t} =
        \frac{Y_{i,t}}{2\nu_i\kappa_{d_i}}U_{i,t}.$
        \STATE Update
        $z_{i,t+1}
        =
        \Pi_{\mathcal X_i}
        \left(
        z_{i,t}-\eta_{i,t}\widehat D_{i,t}
        \right).$
    \ENDFOR
\ENDFOR
\end{algorithmic}
\end{algorithm}

\subsection{Proofs for Section~\ref{subsec:directional-reg-bound}: Directional regret bound from comparisons}

\subsubsection{Finite-resolution regret does not control infinitesimal ordinal regret}
\label{apdx:finite-res-to-infinitesimal}

The following result shows that even with exact comparisons and vanishing comparison radii, sublinear
finite-resolution ordinal regret does not in general imply sublinear infinitesimal ordinal directional regret.

\begin{tcolorbox}[colframe=white!, top=2pt,left=2pt,right=2pt,bottom=2pt] 
\begin{proposition}[Finite-resolution regret does not control infinitesimal
ordinal regret]
\label{prop:finite-vs-infinitesimal-regret}
There exist a compact convex action set $\mathcal X\subset\mathbb R$, a
$C^\infty$ strongly convex cost function $J:\mathcal X\to\mathbb R$, a
sequence of comparison radii $\delta_t$ converging to zero, and a sequence of actions
$(x_t)_{t\geq1}$ such that all symmetric comparison queries
$x_t\pm\delta_t$ are feasible and, for every $K\geq1$,
\[
\Reg^{\delta}_{2K}
\leq \frac{\pi^2}{512},
\qquad\text{while}\qquad
\Reg^{D}_{2K}\geq 2K.
\]
\end{proposition}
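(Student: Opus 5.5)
The plan is to work with a single player in one dimension ($N=1$, $d=1$), where the comparison field $D^\delta$ has a closed form. The mechanism is that $D^\delta$ and $D$ disagree near the minimizer of an \emph{asymmetric} strongly convex cost. If we can find points $x_t$ at which $D^{\delta_t}(x_t)=0$ but $D(x_t)=\pm1$, then the finite-resolution regret vanishes while the infinitesimal one grows linearly.

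\textbf{Step 1: closed form of $D^\delta$ in one dimension.} For $d=1$, $U$ is uniform on $\{-1,+1\}$ and $\kappa_1=\mathbb E|U|=1$. From \eqref{eq:delta-comparison-dir},
\[
D^\delta(x)=\tfrac12\bigl(S^\delta(x,1)-S^\delta(x,-1)\bigr)=\operatorname{sgn}\bigl(J(x+\delta)-J(x-\delta)\bigr).
\]
In particular, $D^\delta(x)=0$ exactly when $J(x+\delta)=J(x-\delta)$. The symmetric quadratic $x^2/2$ gives $D^\delta=D$, so asymmetry of $J$ is essential.

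\textbf{Step 2: the construction.} Take $\mathcal X=[-1,1]$ and $J(x)=e^x-x$. This $J$ is $C^\infty$, and since $J''=e^x\ge e^{-1}$ on $\mathcal X$, it is strongly convex there. Its minimizer is $0$, and $J'(x)<0$ for $x<0$. Fix radii $\delta_t\in(0,1/2]$ with $\delta_t\to0$. Solving $J(x+\delta)=J(x-\delta)$ gives $2e^x\sinh\delta=2\delta$, that is,
\[
x_t:=\log\bigl(\delta_t/\sinh\delta_t\bigr).
\]
Since $\sinh\delta>\delta$, we get $x_t\in(-\delta_t^2/6,0)$, approximately. Hence $x_t<0$, $|x_t|$ is tiny, and the queries $x_t\pm\delta_t$ lie in $[-1,1]$, so they are feasible.

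\textbf{Step 3: the two regrets.} By construction $D^{\delta_t}(x_t)=0$ for every $t$, so every summand in \eqref{eq:finite-res-reg} is zero. Therefore
\[
\Reg^\delta_{2K}=0\le \pi^2/512.
\]
On the other hand, $J'(x_t)<0$ gives $D(x_t)=-1$. Then
\[
\Reg^D_{2K}=\sup_{y\in[-1,1]}\sum_{t=1}^{2K}(y-x_t)\ge\sum_{t=1}^{2K}(1-x_t)\ge 2K,
\]
using $x_t<0$.

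\textbf{Main obstacle.} The only delicate point is the exact solvability and localization of the level-set equation $J(x+\delta)=J(x-\delta)$. For $J(x)=e^x-x$ this is explicit, via $\sinh\delta>\delta$. The paper's constant $\pi^2/512$ suggests a different construction, presumably one where $D^{\delta_t}(x_t)$ is only approximately zero, but a bound of $0$ is stronger, so the above suffices. If one insisted on a construction in which both $D^{\delta_t}(x_t)$ and the regret are nonzero, the backup would be to alternate $x_t$ slightly on both sides of the equal-level point and sum a $\sum 1/t^2$-type error. That is likely the source of the $\pi^2$.
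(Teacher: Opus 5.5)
Your construction is correct, and it takes a different route from the paper's.

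\textbf{The paper's construction.} It uses $J(x)=x^2-\frac14x^3$ on $[-1,1]$. For each resolution $\varepsilon_k=1/(4k)$ it plays the pair $x_{2k-1}=\varepsilon_k^2/16$ and $x_{2k}=\varepsilon_k^2/4$, which lie on opposite sides of the finite-resolution sign threshold. There $D^{\delta}$ takes the strict values $-1$ and $+1$ and cancels pairwise, so the comparator term drops out. This gives $\Reg^\delta_{2K}=\frac{3}{16}\sum_{k\le K}\varepsilon_k^2\le\pi^2/512$, which is exactly the alternation you guessed as the source of the constant.

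\textbf{Your construction.} You place every action exactly on the tie set $J(x_t+\delta_t)=J(x_t-\delta_t)$. For $J(x)=e^x-x$ this set is available in closed form. Since the oracle takes values in $\{-1,0,+1\}$, the tie is legitimate and gives $D^{\delta_t}(x_t)=0$. Hence $\Reg^\delta_T=0$ for every horizon $T$, not only even ones, and for any radii in $(0,1/2]$ tending to zero.

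\textbf{What each buys.} Your argument is shorter and yields a stronger regret bound. What it gives up is robustness. It rests on a knife-edge event, an exact tie in the comparison, and it exhibits $D^\delta$ vanishing where $D=-1$. The paper's example uses only strict comparisons, which are open conditions stable under small perturbations of the $x_t$. It also exhibits a genuine sign reversal $D^{\delta}(x)=-D(x)$ arbitrarily close to the minimizer. That is the more informative form of the non-uniform convergence the proposition is meant to illustrate.

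\textbf{One small point to tighten.} Your localization of $x_t$ need not be stated as ``approximate.'' From $\delta<\sinh\delta\le\delta\cosh\delta$ and $\log\cosh\delta\le\delta^2/2$ you get $-\delta_t^2/2\le x_t<0$ rigorously. This is all you need for $x_t<0$ and for feasibility of $x_t\pm\delta_t$ in $[-1,1]$.
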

\end{tcolorbox}

\paragraph{Proof sketch.} We provide an overview of the proof to highlight the intuition. 
The construction exploits the fact that $D^\delta(x)\to D(x)$ only
pointwise away from stationary points. Consider
$J(x)=x^2-\frac14x^3$ on $\mathcal X=[-1,1]$. For every small $x>0$,
$J'(x)>0$, so the infinitesimal direction is $D(x)=1$. On the other hand,
the symmetric finite difference is
\[
J(x+\delta)-J(x-\delta)
=
2\delta
\left(
2x-\frac34x^2-\frac14\delta^2
\right).
\]
Thus the first-order contribution is of size $\delta x$, while the
finite-resolution correction is of size $\delta^3$. Choosing
$x=\Theta(\delta^2)$ makes these terms comparable, so the sign of the finite
comparison can differ from the infinitesimal gradient direction even as
$\delta\downarrow0$.

For each resolution $\varepsilon_k$, we choose two positive actions on opposite
sides of this sign-change threshold:
\[
x_{2k-1}=\frac{\varepsilon_k^2}{16},
\qquad
x_{2k}=\frac{\varepsilon_k^2}{4},
\qquad
\delta_{2k-1}=\delta_{2k}=\varepsilon_k.
\]
Then $D(x_{2k-1})=D(x_{2k})=1$, whereas the corresponding
finite-resolution directions satisfy
\[
D^{\delta_{2k-1}}(x_{2k-1})=-1,
\qquad
D^{\delta_{2k}}(x_{2k})=1.
\]
Hence the finite-resolution directions cancel pairwise, so the comparator term
in $\Reg^\delta$ cancels as well; the remaining contribution is of order
$\sum_k \varepsilon_k^2$. Taking $\varepsilon_k\asymp 1/k$ makes this sum
finite. By contrast, the infinitesimal directions are equal to $+1$ at every
round, so the fixed comparator $y=-1$ yields a unit-order advantage per round
and therefore linear $\Reg^D$. The example shows that vanishing comparison
radii alone do not ensure that the finite-resolution field tracks the
infinitesimal one along trajectories approaching the stationary set.

\begin{proof}
Consider the one-dimensional action set
$\mathcal X=[-1,1]$ 
and the cost function $J: \mathcal{X} \to \mathbb{R}$ defined for any $x \in \mathcal{X}$ by: 
\begin{equation}
J(x)=x^2-\frac14x^3.
\end{equation}
Since the first and second derivatives of $J$ are respectively given by 
$J'(x)=2x-\frac34x^2,
J''(x)=2-\frac32x,$
we have $J''(x)\geq 1/2$ for every $x\in[-1,1]$. Hence $J$ is
$1/2$-strongly convex on $\mathcal X$. Moreover, for every $x\in(0,1]$,
$J'(x) = x\left(2-\frac34x\right)>0,$ 
so the infinitesimal preference direction satisfies
$D(x)=1$ for any $x \in (0,1]\,.$ 

In dimension one, $\mathbb S^0=\{-1,+1\}$ and $\kappa_1=1$. Therefore,
for every feasible $x\pm\delta$,
\[
D^\delta(x)
=
\operatorname{sgn}\!\left(J(x+\delta)-J(x-\delta)\right).
\]
Indeed, if
$s=\operatorname{sgn}(J(x+\delta)-J(x-\delta))$, then the two possible
values of $S^\delta(x,U)U$, corresponding to $U=1$ and $U=-1$, are both
equal to $s$.

A direct calculation yields
\begin{equation}
J(x+\delta)-J(x-\delta)
=
4x\delta
-\frac14\left(6x^2\delta+2\delta^3\right) 
=
2\delta
\left(
2x-\frac34x^2-\frac14\delta^2
\right).
\label{eq:finite-inf-separation-central-diff}
\end{equation}

For every $k\geq1$, let
$
\varepsilon_k:=\frac{1}{4k},
\delta_{2k-1}=\delta_{2k}:=\varepsilon_k,$ 
and define
\[
x_{2k-1}:=\frac{\varepsilon_k^2}{16},
\qquad
x_{2k}:=\frac{\varepsilon_k^2}{4}.
\]
Clearly, $\delta_t \to 0$, and all query points
$x_t\pm\delta_t$ belong to $\mathcal X$. Since $x_t>0$ for every $t$,
\[
D(x_t)=1.
\]

For the first action in each pair, using
\eqref{eq:finite-inf-separation-central-diff},
\begin{equation}
2x_{2k-1}
-\frac34x_{2k-1}^2
-\frac14\varepsilon_k^2
=
\frac{\varepsilon_k^2}{8}
-\frac{3\varepsilon_k^4}{1024}
-\frac{\varepsilon_k^2}{4}\\
=
-\frac{\varepsilon_k^2}{8}
-\frac{3\varepsilon_k^4}{1024}
<0,
\end{equation}
and hence
\[
D^{\delta_{2k-1}}(x_{2k-1})=-1.
\]
For the second action,
\begin{equation}
2x_{2k}
-\frac34x_{2k}^2
-\frac14\varepsilon_k^2
=
\frac{\varepsilon_k^2}{2}
-\frac{3\varepsilon_k^4}{64}
-\frac{\varepsilon_k^2}{4}
=
\frac{\varepsilon_k^2}{4}
-\frac{3\varepsilon_k^4}{64}
>0,
\end{equation}
where the last inequality follows from $\varepsilon_k\leq1/4$. Thus
\[
D^{\delta_{2k}}(x_{2k})=1.
\]
The finite-resolution directions therefore cancel pairwise:
\[
D^{\delta_{2k-1}}(x_{2k-1})
+
D^{\delta_{2k}}(x_{2k})
=0.
\]

For $T=2K$, the finite-resolution regret is consequently
\begin{align*}
\Reg^\delta_{2K}
&=
\sup_{y\in[-1,1]}
\sum_{t=1}^{2K}
D^{\delta_t}(x_t)(x_t-y)\\
&=
\sum_{k=1}^K
\left(
-x_{2k-1}+x_{2k}
\right)\\
&=
\frac{3}{16}
\sum_{k=1}^K\varepsilon_k^2\\
&=
\frac{3}{256}
\sum_{k=1}^K\frac{1}{k^2}\\
&\leq
\frac{3}{256}\frac{\pi^2}{6}
=
\frac{\pi^2}{512}.
\end{align*}

On the other hand, since $D(x_t)=1$ for every $t$,
\begin{align*}
\Reg^D_{2K}
=
\sup_{y\in[-1,1]}
\sum_{t=1}^{2K}(x_t-y)
=
\sum_{t=1}^{2K}x_t+2K,
\end{align*}
where the supremum is attained at $y=-1$. Therefore
\[
\Reg^D_{2K}\geq2K.
\]
This proves the claim.
\end{proof}

\subsubsection{Proof of Proposition~\ref{prop:comparison-field}}

\begin{proof}

\noindent\textbf{Proof of (i).} Observe that strictly increasing transformations preserve the sign, i.e., for any strictly increasing function $\phi_i: \mathbb{R} \to \mathbb{R}$, any $x \in \mathcal{X}, U_i$ s.t. $x_i \pm \delta U_i \in \mathcal{X}_i$,  
\begin{equation*}
    \operatorname{sgn}(J_i(x_i + \delta U_i, x_{-i}) - J_i(x_i - \delta U_i, x_{-i})) = 
    \operatorname{sgn}(\phi_i(J_i(x_i + \delta U_i, x_{-i})) - \phi_i(J_i(x_i - \delta U_i, x_{-i})))\,.
\end{equation*}
This implies that $S_i^{\delta}(x,U_i)$ is invariant under transformations of the cost functions $J_i$ by strictly increasing functions. As a consequence,  the \emph{$\delta$-comparison field} $D_i^\delta(x)
:=\frac{1}{\kappa_{d_i}}\mathbb E_{U_i}\left[S_i^\delta(x,U_i)U_i\right]$ defined in \eqref{eq:delta-comparison-dir} is also invariant.\\

\noindent\textbf{Proof of (ii).} 
The desired bound follows from: 
\begin{equation}
\|D_i^\delta(x)\|
=
\sup_{\|v\|=1}
\frac{1}{\kappa_{d_i}}
\mathbb E\!\left[
S_i^\delta(x,U_i)\langle U_i,v\rangle
\right]
\leq
\frac{1}{\kappa_{d_i}}
\sup_{\|v\|=1}
\mathbb E\!\left[
|\langle U_i,v\rangle|
\right]
=1\,,
\end{equation}
where the first equality uses the fact that $\|z\| = \sup_{\|v\| =1} \langle z,v \rangle$ (which follows from the Cauchy-Schwarz inequality) and the definition of $D_i^{\delta}(x)$ in \eqref{eq:delta-comparison-dir}, and the last inequality uses Lemma 3.1 in~\citet{fu-wang-yan23minimax-spherical} (which relies on rotational invariance).\\ 

\noindent\textbf{Proof of (iii).} Suppose $\nabla_i J_i(x) \neq 0.$ Then differentiability of $J_i(\cdot, x_{-i})$ gives, for every $u$,
\[
\frac{
J_i(x_i+\delta u,x_{-i})
-
J_i(x_i-\delta u,x_{-i})
}{2\delta}
\underset{\delta \to 0}{\longrightarrow}
\langle \nabla_iJ_i(x),u\rangle.
\]
For all $u$ outside the measure-zero great sphere $\{u:\langle \nabla_i J_i(x),u\rangle=0\}$,
\[
S_i^\delta(x,u)
\underset{\delta \to 0}{\longrightarrow}
\operatorname{sgn}(\langle \nabla_iJ_i(x),u\rangle).
\]
Dominated convergence therefore yields
\[
D_i^\delta(x) = \frac{1}{\kappa_{d_i}}
\mathbb E_{U_i}
\left[
S_i^\delta(x,U_i)U_i
\right]
\underset{\delta \to 0}{\longrightarrow}
\frac{1}{\kappa_{d_i}}
\mathbb E
\left[
\operatorname{sgn}(\langle \nabla_iJ_i(x),U_i\rangle) U_i
\right].
\]
By rotational symmetry, the expectation on the right must be collinear with $\nabla_i J_i(x)$. Taking its inner product with $\nabla_i J_i(x)/\|\nabla_i J_i(x)\|$ gives
$\mathbb E
\left|
\left\langle
U_i,\frac{\nabla_i J_i(x)}{\|\nabla_i J_i(x)\|}
\right\rangle
\right| =
\kappa_{d_i},$ and hence 
\begin{equation*}
\frac{1}{\kappa_{d_i}}
\mathbb E\left[\operatorname{sgn}(\langle \nabla_iJ_i(x),U_i\rangle) U_i\right] 
= \frac{\nabla_i J_i(x)}{\|\nabla_i J_i(x)\|} = D_i(x)\,.
\end{equation*}
\end{proof}

\subsubsection{Proof of Lemma~\ref{lem:one-bit-estimator}}

\begin{proof}
Conditional on $U_{i,t}$,
\begin{align}
\mathbb E[Y_{i,t}\mid U_{i,t},\mathcal F_t]
&=\mathbb E[\xi_{i,t} S_i^{\delta_t}(x_t,U_{i,t})\mid U_{i,t},\mathcal F_t] \nonumber\\
&= \left(\frac 12 + \nu_i\right) S_i^{\delta_t}(x_t,U_{i,t}) - \left(1 -\frac 12 - \nu_i \right)S_i^{\delta_t}(x_t,U_{i,t})\nonumber\\
&= 2\nu_i S_i^{\delta_t}(x_t,U_{i,t}).\nonumber
\end{align}
Hence
\[
\mathbb E[
\widehat D_{i,t}
\mid\mathcal F_t]
= \mathbb E\left[ \frac{Y_{i,t}}{2 \nu_i \kappa_{d_i}} U_{i,t}
\mid\mathcal F_t \right]
=
\frac{1}{\kappa_{d_i}}
\mathbb E_{U_{i,t}}
[
S_i^{\delta_t}(x_t,U_{i,t})U_{i,t}]
=
D_i^{\delta_t}(x_t).
\]
Moreover, $\|U_{i,t}\|=1$ and $|Y_{i,t}|\leq1$, so
\[
\|\widehat D_{i,t}\|^2
= \left\|\frac{Y_{i,t}}{2 \nu_i \kappa_{d_i}} U_{i,t} \right\|^2
\leq
\frac{1}{4\nu_i^2\kappa_{d_i}^2} = G_i^2\,.
\]
Taking conditional expectation concludes the proof. 
\end{proof}

\begin{remark}
For exact comparisons,
$G_i =\kappa_{d_i}^{-1} =
\sqrt{\frac{\pi d_i}{2}}\,(1+o(1)).$
\end{remark}

\subsubsection{Proof of Theorem~\ref{thm:comparison-regret}}

\begin{proof}
Let $y_i\in\mathcal X_i$. Projection nonexpansiveness gives
\begin{equation}
\|z_{i,t+1}-y_i\|^2
\leq
\|z_{i,t}-y_i\|^2
-2\eta_i
\langle\widehat D_{i,t},z_{i,t}-y_i\rangle
+
\eta_i^2\|\widehat D_{i,t}\|^2.
\end{equation}
Rearranging and summing,
\begin{equation}
\label{eq:pathwise-bound}
\sum_{t=1}^T
\langle\widehat D_{i,t},z_{i,t}-y_i\rangle
\leq
\frac{\|z_{i,1}-y_i\|^2}{2\eta_i}
+
\frac{\eta_i}{2}
\sum_{t=1}^T
\|\widehat D_{i,t}\|^2
\leq
\frac{R_i^2}{2\eta_i}
+
\frac{\eta_i}{2}
\sum_{t=1}^T
\|\widehat D_{i,t}\|^2.
\end{equation}
Now, introducing the sequence $(x_{i,t})$, we obtain: 
\begin{equation}
\sum_{t=1}^T
\langle\widehat D_{i,t},x_{i,t}-y_i\rangle 
=  \sum_{t=1}^T
\langle\widehat D_{i,t},z_{i,t}-y_i\rangle 
+ \sum_{t=1}^T
\langle\widehat D_{i,t},x_{i,t}-z_{i,t}\rangle\,.
\end{equation}
Taking expectations, using $\mathbb E[
\widehat D_{i,t}\mid\mathcal F_t]
=
D_i^{\delta_t}(x_t)$, $\mathbb E[
\|\widehat D_{i,t}\|^2
\mid\mathcal F_t]
\leq
G_i^2$ (Lemma~\ref{lem:one-bit-estimator}), \eqref{eq:pathwise-bound}, 
and noting that $z_{i,t}-y_i$ and $x_{i,t}-z_{i,t}$ are $\mathcal{F}_t$-measurable yields: 
\[
\mathbb E \left[
\sum_{t=1}^T
\langle
D_i^{\delta_t}(x_t),
x_{i,t}-y_i
\rangle \right]
\leq
\frac{R_i^2}{2\eta_i}
+
\frac{\eta_iG_i^2T}{2} 
+ \mathbb E \left[\sum_{t=1}^T \langle D_i^{\delta_t}(x_t), x_{i,t} - z_{i,t} \rangle\right]\,.
\]
Since this is true for every fixed $y_i \in \mathcal{X}_i$, taking the supremum gives: 
\begin{equation}
\overline{\Reg}_{i,T}^{\delta}
\leq
\frac{R_i^2}{2\eta_i}
+
\frac{\eta_i G_i^2T}{2} 
+  \mathbb E \left[\sum_{t=1}^T \langle D_i^{\delta_t}(x_t), x_{i,t} - z_{i,t} \rangle \right]\,.
\end{equation}
To obtain the first desired inequality, it remains to use Cauchy-Schwarz inequality for the last inner product, with the fact that $\|D_i^{\delta}(x_t)\| \leq 1$ (Proposition~\ref{prop:comparison-field}) and $\|x_{i,t}-z_{i,t}\| \le \frac{R_i}{\rho_i}\delta_t$. We obtain: 
\begin{equation}
\overline{\Reg}_{i,T}^{\delta}
\leq
\frac{R_i^2}{2\eta_i}
+
\frac{\eta_i G_i^2T}{2} 
+ \frac{R_i}{\rho_i} \sum_{t=1}^T \delta_t\,.
\end{equation}
Setting $\eta_i=\frac{R_i}{G_i\sqrt T}$ and $\delta_t = \frac{\delta}{\sqrt{T}}$ (with $\delta \in (0, \rho)$) 
gives:
\begin{equation}
\overline{\Reg}_{i,T}^{\delta}
\leq
R_iG_i\sqrt T 
+ R_i \frac{\rho}{\rho_i} \sqrt{T}
\leq R_i(G_i +1) \sqrt{T}
= O\!\left(
\frac{R_i\sqrt{d_iT}}{\nu_i}
\right)\,.
\end{equation}

To control the realized regret, define for all $i \in [N]$, the martingale difference sequence $(M_{i,t})$:  
\[
M_{i,t}
:= 
D_i^{\delta_t}(x_t)-\widehat D_{i,t}\,, \quad \forall t \geq 0\,.
\]
For an arbitrary reference point $y_i^0\in\mathcal X_i$, every $y_i\in\mathcal X_i$ and any $T > 0$, 
\[
\sum_{t=1}^T
\langle
M_{i,t},
x_{i,t}-y_i
\rangle
=
\sum_{t=1}^T
\langle
M_{i,t},
x_{i,t}-y_i^0
\rangle
+
\left\langle
\sum_{t=1}^TM_{i,t},
y_i^0-y_i
\right\rangle.
\]
Hence, for any $i \in [N]$ and any $T > 0$,
\begin{align}
\Reg_{i,T}^{\delta} &=  \sup_{y_i \in \mathcal{X}_i}
\sum_{t=1}^T
\langle
D_{i}^{\delta_t}(x_t),
x_{i,t}-y_i
\rangle \nonumber\\
&= \sup_{y_i \in \mathcal{X}_i}
\sum_{t=1}^T
\langle
\widehat{D}_{i,t},
x_{i,t}-y_i
\rangle + \sum_{t=1}^T
\langle
M_{i,t},
x_{i,t}-y_i
\rangle \nonumber\\
&= \sup_{y_i \in \mathcal{X}_i}
\sum_{t=1}^T
\langle
\widehat{D}_{i,t},
x_{i,t}-y_i
\rangle 
+ \sum_{t=1}^T
\langle
M_{i,t},
x_{i,t}-y_i^0
\rangle + 
\sum_{t=1}^T
\langle
M_{i,t},
y_i^0-y_i
\rangle   \nonumber\\
&\leq \sup_{y_i \in \mathcal{X}_i}
\sum_{t=1}^T
\langle
\widehat D_{i,t},
x_{i,t}-y_i
\rangle
+
A_T
+
R_i \cdot\left\|
\sum_{t=1}^TM_{i,t}
\right\|,  \label{eq:master-realized-regret}
\end{align}
where $A_T
:=  \sum_{t=1}^T
\langle M_{i,t},x_{i,t}-y_i^0 \rangle.$

By the pathwise projection inequality proved above in \eqref{eq:pathwise-bound}, 
\[ \sup_{y_i\in\mathcal X_i} \sum_{t=1}^T \langle \widehat D_{i,t}, x_{i,t}-y_i \rangle \leq \frac{R_i^2}{2\eta_i} + \frac{\eta_i}{2} \sum_{t=1}^T \|\widehat D_{i,t}\|^2 
+ \sum_{t=1}^T \langle \widehat D_{i,t}, x_{i,t}-z_{i,t} \rangle
\] 
Moreover, by construction,
$\|x_{i,t}-z_{i,t}\|
\le
\frac{R_i}{\rho_i}\delta_t,$
and, from the definition of the single-comparison estimator,
$\|\widehat D_{i,t}\|
=
\frac{|Y_{i,t}|}{2\nu_i\kappa_{d_i}}
\le
G_i$ almost surely.
Hence, using Cauchy-Schwarz inequality, we have: 
\begin{equation}
\sum_{t=1}^T
\langle
\widehat D_{i,t},
x_{i,t}-z_{i,t}
\rangle
\le 
\sum_{t=1}^T \|\widehat D_{i,t}\| 
\cdot \|x_{i,t}-z_{i,t}\|
\le
\frac{R_iG_i}{\rho_i}
\sum_{t=1}^T\delta_t.
\end{equation}
Moreover, since $x_{i,t}$ is $\mathcal F_t$-measurable and $\mathbb E[M_{i,t}\mid\mathcal F_t]=0$, $\mathbb E[A_T]=0.$

We also have, using
$M_{i,t} = D_i^{\delta_t}(x_t)-\widehat D_{i,t},
D_i^{\delta_t}(x_t) = \mathbb E[\widehat D_{i,t}\mid\mathcal F_t]$,
we obtain
\[
\begin{aligned}
\mathbb E[
\|M_{i,t}\|^2
\mid\mathcal F_t]
&=
\mathbb E\!\left[
\left\|
\widehat D_{i,t}
-
\mathbb E[\widehat D_{i,t}\mid\mathcal F_t]
\right\|^2
\Bigm|\mathcal F_t
\right] \\
&=
\mathbb E[
\|\widehat D_{i,t}\|^2
\mid\mathcal F_t]
-
\left\|
\mathbb E[
\widehat D_{i,t}\mid\mathcal F_t]
\right\|^2 \\
&\leq
\mathbb E[
\|\widehat D_{i,t}\|^2
\mid\mathcal F_t]
\leq
G_i^2\,, 
\end{aligned}
\]
where the last inequality follows from using Lemma~\ref{lem:one-bit-estimator}. 

By martingale orthogonality, 
\[ \mathbb E \left\| \sum_{t=1}^T M_{i,t} \right\|^2 = \sum_{t=1}^T \mathbb E\|M_{i,t}\|^2 \leq G_i^2T. \] 

Therefore, by Cauchy-Schwarz inequality, 
\[ \mathbb E \left\| \sum_{t=1}^T M_{i,t} \right\| \leq G_i\sqrt T. \] 

Taking expectations in \eqref{eq:master-realized-regret} therefore yields 
\[ \mathbb E[\Reg_{i,T}^{\delta}] \leq \frac{R_i^2}{2\eta_i} + \frac{\eta_iG_i^2T}{2} + R_iG_i\sqrt T +
\frac{R_iG_i}{\rho_i}
\sum_{t=1}^T\delta_t. \] 
With $\eta_i=\frac{R_i}{G_i\sqrt T},$ the first two terms sum to $R_iG_i\sqrt T$, and with $\delta_t = \frac{\delta}{\sqrt{T}}$ the last term is bounded by $R_i G_i \sqrt{T}$ as well. Hence, 
\[\mathbb E[\Reg_{i,T}^{\delta}] \leq 3R_iG_i\sqrt T = O\!\left( \frac{R_i\sqrt{d_iT}}{\nu_i} \right). 
\] 

Finally, we bound the positive part. Since the right-hand side of \eqref{eq:master-realized-regret}, apart from $A_T$, consists of nonnegative terms, 
\[ [\Reg_{i,T}^{\delta}]_+ \leq \frac{R_i^2}{2\eta_i} + \frac{\eta_i}{2} \sum_{t=1}^T \|\widehat D_{i,t}\|^2 
+\frac{R_iG_i}{\rho_i}\sum_{t=1}^T\delta_t
+|A_T| + R_i \left\| \sum_{t=1}^T M_{i,t} \right\|. \] 
It remains to control $A_T$. Note first that its increments form a scalar martingale difference sequence. 
To see this, define 
$Z_t :=\left\langle M_{i,t},x_{i,t}-y_i^0 \right\rangle,$ so that $A_T=\sum_{t=1}^T Z_t.$ 
Since $x_{i,t}$ is $\mathcal F_t$-measurable and
$\mathbb E[M_{i,t}\mid\mathcal F_t]=0,$ 
we have $\mathbb E[Z_t\mid\mathcal F_t] = \left\langle \mathbb E[M_{i,t}\mid\mathcal F_t],
x_{i,t}-y_i^0 \right\rangle =0.$
Thus $(Z_t)$ is a scalar martingale difference sequence.
Expanding the square gives
\[
A_T^2
=
\sum_{t=1}^T Z_t^2
+
2\sum_{1\leq s<t\leq T} Z_sZ_t.
\]
For $s<t$, $Z_s$ is $\mathcal F_t$-measurable, so by the tower property,
$\mathbb E[Z_sZ_t] = 
\mathbb E\!\left[
\mathbb E[Z_sZ_t\mid\mathcal F_t]
\right] = \mathbb E\!\left[
Z_s\,\mathbb E[Z_t\mid\mathcal F_t]
\right]
=0.$
Therefore the cross terms vanish, and
\[
\mathbb E[A_T^2]
=
\sum_{t=1}^T
\mathbb E[Z_t^2].
\]

By Cauchy--Schwarz inequality,
\[
Z_t^2
=
\left\langle
M_{i,t},
x_{i,t}-y_i^0
\right\rangle^2
\leq
\|M_{i,t}\|^2
\|x_{i,t}-y_i^0\|^2.
\]
Since $x_{i,t},y_i^0\in\mathcal X_i$,
$\|x_{i,t}-y_i^0\|
\leq
R_i,$
and hence
\[
\mathbb E[A_T^2]
\leq
R_i^2
\sum_{t=1}^T
\mathbb E\|M_{i,t}\|^2 \leq R_i^2G_i^2T.
\]
Therefore, by Cauchy--Schwarz inequality,
\[
\mathbb E[|A_T|]
\leq
\sqrt{\mathbb E[A_T^2]}
\leq
R_iG_i\sqrt T.
\]

Thus, it follows that: 
\[ \mathbb E\bigl[[\Reg_{i,T}^{\delta}]_+\bigr] \leq \frac{R_i^2}{2\eta_i} + \frac{\eta_iG_i^2T}{2} + 2R_iG_i\sqrt T 
+\frac{R_iG_i}{\rho_i}\sum_{t=1}^T\delta_t
. \] 
Choosing again $\eta_i=\frac{R_i}{G_i\sqrt T}$ and $\delta_t = \frac{\delta}{\sqrt{T}}$ with $\delta \in (0, \rho)$ gives 
\[\mathbb E\bigl[[\Reg_{i,T}^{\delta}]_+\bigr] \leq 4R_iG_i\sqrt T = O\!\left( \frac{R_i\sqrt{d_iT}}{\nu_i} \right).\] 
This proves the claimed bound.
\end{proof}

\subsubsection{Trading comparisons for dimension dependence using minibatch}
\label{subsubsec:trading-comp-for-dim}

The $\sqrt{d_i}$ factor in regret comes from using only one random directional comparison per iteration. It can be reduced by batching comparisons.

Suppose player $i$ draws $m_i$ independent directions and averages the corresponding estimators:
\[
\widehat D_{i,t}^{(m_i)}
=
\frac1{m_i}
\sum_{j=1}^{m_i}
\widehat D_{i,t}^{(j)}.
\]
Since $\|D_i^\delta(x)\|\leq1$,
\[
\begin{aligned}
\mathbb E
\left[
\|\widehat D_{i,t}^{(m_i)}\|^2
\mid\mathcal F_t
\right]
&=
\|D_i^\delta(x_t)\|^2
+
\frac1{m_i}
\mathbb E
\left[
\|\widehat D_{i,t}-D_i^\delta(x_t)\|^2
\mid\mathcal F_t
\right] \leq
1+
\frac{G_i^2}{m_i}.
\end{aligned}
\]
The proof of Theorem~\ref{thm:comparison-regret} therefore gives 
\[
\Reg_{i,T}^{\delta}
=
O\!\left(
R_i
\sqrt{
T\left(
1+
\frac{d_i}{m_i\nu_i^2}
\right)}
\right)
\]
in expectation, up to universal constants. Thus $m_i=1$ gives $\Reg_{i,T}^{\delta} = O\!\left(
\frac{R_i\sqrt{d_iT}}{\nu_i}
\right)$ whereas $m_i=\Theta(d_i/\nu_i^2)$ gives $\Reg_{i,T}^{\delta} = O(R_i\sqrt T)$
in terms of iterations, at the cost of $\Theta(d_i/\nu_i^2)$ comparisons per iteration. 

\begin{remark} 
To compare the finite-resolution guarantee with the infinitesimal benchmark,
let $b_{i,t}:= \|D_i^{\delta_t}(x_t)-D_i(x_t)\|.$
Then $\Reg_{i,T}^{D} \leq \Reg_{i,T}^{\delta}
+R_i\sum_{t=1}^T b_{i,t}.$
Uniform control of $b_{i,t}$ is impossible in general near vanishing
gradients; the normalized infinitesimal field is nonsmooth in general and may change sharply even when the
underlying utilities are smooth. This is why $D^\delta$ is the natural
oracle-matched benchmark at finite comparison resolution. 
\end{remark}

\subsection{Proofs for Section~\ref{subsec:as-cvg-Nash-ordinal-pot}}

\subsubsection{Proof of Theorem~\ref{thm:as-comparison-ne}: Almost sure convergence to Nash in ordinal potential games}

The proof relies on the projected differential inclusion stochastic approximation framework of
\citet{majewski-miasojedow-moulines18}.

We use the following $C^1$ specialization of
\citet[Theorem~4.2]{majewski-miasojedow-moulines18}. The limiting
dynamics remain a projected differential inclusion; the specialization
only replaces the general set-valued Lie derivative by the ordinary
directional derivative of the smooth Lyapunov function used here.

\begin{tcolorbox}[colframe=white!, top=2pt,left=2pt,right=2pt,bottom=2pt] 
\begin{theorem}[$C^1$ specialization of Theorem 4.2 in \citet{majewski-miasojedow-moulines18}]
\label{thm:majewski-et-al4.2}
Let $\mathcal K\subset\mathbb R^d$ be nonempty, compact, and convex,
let $\mathcal U\supseteq\mathcal K$ be open, and let
$F:\mathcal U\rightrightarrows\mathbb R^d$ be a set-valued mapping. 
Consider a recursion of the form
\begin{equation}
\label{eq:generic-ppad}
z_{t+1}
=
\Pi_{\mathcal K}
\left[
z_t+\eta_t
\bigl(f_t+e_{t+1}+r_t\bigr)
\right],
\qquad
f_t\in F(y_t)\,,
\end{equation}
where the set-valued mapping $F$, the stepsize sequence $(\eta_t)$, the noise sequence $(e_t)$ and the bias sequence $(r_t)$ satisfy the following assumptions:
\begin{enumerate}[label=(A\arabic*)]
\item \label{as:set-val-mapping}
$F$ is upper hemicontinuous, locally bounded, and has nonempty
compact convex values. 
\item \label{as:stepsize}
$\eta_t>0$, $\eta_t\to0$, and
$\sum_{t=1}^{\infty}\eta_t=\infty.$
\item \label{as:bias-and-stoch-error}
$r_t\to0$ and
$\sum_{t=1}^{\infty}\eta_t e_{t+1}$ 
converges.
\item \label{as:state-y-to-zt}
$y_t\in\mathcal U$ and
$\|y_t-z_t\|\to0.$
\item \label{as:bounded-error}
$\sup_t\|e_{t+1}\|<\infty$.
\end{enumerate} 

Suppose that $V\in C^1(\mathcal U)$ and that there exists an
upper semicontinuous function $U:\mathcal U\to\mathbb R$ such that
\begin{equation}
\label{eq:majewski-specialized-Lyapunov}
\left\langle
\nabla V(x),
\Pi_{T_{\mathcal K}(x)}u
\right\rangle
\le
U(x)
\le0,
\qquad
\forall x\in\mathcal K,\quad
\forall u\in F(x).
\end{equation}
Let $\mathcal S:=\{x\in\mathcal K:U(x)=0\}.$ 
Then the image under $V$ of the limit set of $(z_t)$ is a compact
interval contained in $V(\mathcal S)$. If
$\operatorname{int}_{\mathbb R}V(\mathcal S)=\varnothing,$
then
\[
\operatorname{dist}(z_t,\mathcal S)\longrightarrow0.
\]
\end{theorem}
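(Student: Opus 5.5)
The plan is to obtain this statement as a direct corollary of \citet[Theorem~4.2]{majewski-miasojedow-moulines18} rather than reproving the stochastic approximation machinery. I will check that \ref{as:set-val-mapping}--\ref{as:bounded-error} are, up to notation, the standing assumptions of that theorem for the projected recursion \eqref{eq:generic-ppad}. These are: an upper hemicontinuous, locally bounded, convex-compact-valued field; nonsummable vanishing steps; a vanishing bias $r_t$; a convergent weighted noise series $\sum_t\eta_te_{t+1}$ with bounded increments; and evaluation points $y_t$ asymptotically close to $z_t$. Their first conclusion, under these assumptions, is that the affine interpolation of $(z_t)$ is an asymptotic pseudotrajectory of the projected differential inclusion
\[
\dot x(t)\in F(x(t))-N_{\mathcal K}(x(t)).
\]
Hence the limit set $L$ of $(z_t)$ is compact, invariant and internally chain transitive for this inclusion. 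The remaining work is to check that the $C^1$ condition \eqref{eq:majewski-specialized-Lyapunov} implies their Lyapunov hypothesis, which is phrased through a set-valued Lie derivative along solutions.

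\textbf{Step 1 (velocities are projections).} Let $x(\cdot)$ be an absolutely continuous solution in $\mathcal K$ with $\dot x=u-n$, where $u(t)\in F(x(t))$ and $n(t)\in N_{\mathcal K}(x(t))$ a.e. I will show that $\dot x(t)=\Pi_{T_{\mathcal K}(x(t))}u(t)$ for a.e.\ $t$. First, at a.e.\ $t$ both one-sided difference quotients of $x$ stay in $\mathcal K-x(t)$. Hence $\dot x(t)\in T_{\mathcal K}(x(t))\cap(-T_{\mathcal K}(x(t)))$. Since $n(t)\in N_{\mathcal K}(x(t))$, this gives $\langle n(t),\dot x(t)\rangle\le0$ and $\langle n(t),-\dot x(t)\rangle\le0$, so $\langle n(t),\dot x(t)\rangle=0$. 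Thus $u=\dot x+n$ with $\dot x\in T_{\mathcal K}$, $n\in N_{\mathcal K}=T_{\mathcal K}^{\circ}$ and $\dot x\perp n$. By Moreau's decomposition this is exactly $\dot x=\Pi_{T_{\mathcal K}(x)}u$.

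\textbf{Step 2 (Lyapunov decrease).} Since $V\in C^1(\mathcal U)$, the chain rule holds for absolutely continuous curves. Combining it with Step 1 and \eqref{eq:majewski-specialized-Lyapunov} gives
\[
\tfrac{d}{dt}V(x(t))=\bigl\langle\nabla V(x(t)),\Pi_{T_{\mathcal K}(x(t))}u(t)\bigr\rangle\le U(x(t))\le0\quad\text{a.e.}
\]
Consequently $V$ is nonincreasing along every solution, and it is strictly decreasing on any time interval during which the solution avoids $\mathcal S$. For the strictness I will use upper semicontinuity of $U$ together with compactness: on a compact set at positive distance from $\mathcal S$, $U$ is bounded above by a negative constant. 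In the $C^1$ setting this is precisely what the set-valued Lie derivative condition of \citet{majewski-miasojedow-moulines18} demands.

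\textbf{Step 3 (conclusion).} With $V$ a Lyapunov function for $\mathcal S$, I will invoke the Benaïm--Hofbauer--Sorin argument, which is the content of their Theorem~4.2. That argument shows that $V(L)$ is contained in $V(\mathcal S)$. Since $L$ is compact and connected, $V(L)$ is a compact interval. If $\operatorname{int}_{\mathbb R}V(\mathcal S)=\varnothing$, this interval is a single point, so $V$ is constant on $L$. Invariance of $L$ and strict decrease off $\mathcal S$ then force $L\subseteq\mathcal S$. Because $(z_t)$ lies in the compact set $\mathcal K$, $L\subseteq\mathcal S$ gives $\operatorname{dist}(z_t,\mathcal S)\to0$.

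The main obstacle is Step 1: making sure the limiting dynamics in \citet{majewski-miasojedow-moulines18} really are the normal-cone inclusion above, and that its velocities can be written as $\Pi_{T_{\mathcal K}}u$ with a single $u\in F(x)$, rather than as a convex combination of such projections. I will handle this with the tangent/orthogonality argument of Step 1. That argument needs only convexity of $\mathcal K$ and convexity of $F(x)$, the latter ensuring that a measurable selection $u(t)\in F(x(t))$ exists.
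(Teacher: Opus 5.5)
Your proposal is correct and follows the same route as the paper, which also obtains the statement by directly invoking \citet[Theorem~4.2]{majewski-miasojedow-moulines18}, noting only that for $V\in C^1$ their set-valued Lie derivative collapses to $\langle\nabla V(x),\Pi_{T_{\mathcal K}(x)}u\rangle$, so that \eqref{eq:majewski-specialized-Lyapunov} is literally their Lyapunov hypothesis. Your Steps 1--3 (the Moreau-decomposition identification $\dot x=\Pi_{T_{\mathcal K}(x)}u$, the chain-rule decrease, and the Bena\"{\i}m--Hofbauer--Sorin endgame) are sound and simply make that reduction explicit at the level of trajectories. The measurable-selection remark is unnecessary, because the argument is pointwise in $t$.
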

\end{tcolorbox}

Under the empty-interior condition, the preceding theorem also implies
that $V(z_t)$ converges. Indeed, the image under $V$ of the limit set
of $(z_t)$ is an interval contained in $V(\mathcal S)$. Since
$V(\mathcal S)$ has empty interior, this interval must be a singleton,
say $\{V_\infty\}$. Compactness of $\mathcal K$ and continuity of $V$
then imply that $V(z_t)\to V_\infty$.\\

\noindent\textbf{Proof outline.}
To apply Theorem~\ref{thm:majewski-et-al4.2}, we proceed in four steps:
\begin{enumerate}
    
\item \textbf{Completion of the limiting normalized field.}
We complete the normalized potential field at zero-gradient blocks and
show that the resulting set-valued field is upper hemicontinuous,
locally bounded, and has nonempty compact convex values
(definition in~\eqref{def:completed-potential-field} and
Lemma~\ref{lem:completed-field-regularity}).

\item \textbf{Graph consistency and stochastic-approximation representation.}
We show that the finite-resolution comparison fields approach the
completed limiting field in the sense of graphs
(Lemma~\ref{lem:graph-consistency-finite-resolution}). This allows us
to represent our \eqref{eq:cngp} algorithm as a projected perturbed approximate discretization
of the projected differential inclusion
\[
\dot{x}(t)
\in
\Pi_{T_{\mathcal X}(x(t))}F(x(t))\,,
\]
see Lemma~\ref{lem:cngp-ppad}. 

\item \textbf{Lyapunov analysis and potential stationarity.}
We show that the ordinal potential $\Phi$ is a Lyapunov function for
the projected limiting inclusion and identify the zero set of its
dissipation with the constrained stationary points of the potential
(Lemma~\ref{lem:lyapunov-ineq}).

\item \textbf{From potential stationarity to Nash equilibrium.}
We establish first-order alignment between the players' costs and the
ordinal potential under Assumption~\ref{as:non-degeneracy-ord-pot} and, using blockwise pseudoconvexity, show that the
potential stationary set coincides with the Nash equilibrium set
(Lemmas~\ref{lem:ordinal-potential-gradient-alignment}
and~\ref{lem:potential-stationarity-ne}). Theorem~\ref{thm:majewski-et-al4.2}
then yields the desired almost-sure convergence result.
\end{enumerate}

Let $\mathcal U$ be the open neighborhood of $\mathcal X$ on which $\Phi$ is $C^1$. 
For every $i\in[N]$, let
$g_i(x):=\nabla_i\Phi(x),$
and define
$\mathcal D_i:\mathcal U\rightrightarrows\mathbb R^{d_i}$ by
\begin{equation}
\label{def:completed-potential-field}
\mathcal D_i(x)
:=
\begin{cases}
\left\{
\dfrac{g_i(x)}{\|g_i(x)\|}
\right\},
&
g_i(x)\neq0,
\\[2ex]
\overline{\mathbb B}_{d_i},
&
g_i(x)=0.
\end{cases}
\end{equation}
Set $
\mathcal D(x)
:= \prod_{i=1}^N\mathcal D_i(x)\,,
F(x):=-\mathcal D(x).$\\

At zero-gradient blocks, the unit ball provides a closed convex
completion of the normalized field, while away from such blocks the
completion reduces to the unique normalized gradient direction. The
graph-consistency result below (Lemma~\ref{lem:graph-consistency-finite-resolution}) shows that this completion captures all
asymptotic cluster behavior of the finite-resolution comparison fields
as the comparison radius vanishes.

We start by checking Assumption~\ref{as:set-val-mapping} regarding the regularity of the set-valued mapping $F$.

\begin{tcolorbox}[colframe=white!, top=2pt,left=2pt,right=2pt,bottom=2pt] 
\begin{lemma}[Regularity of the completed field]
\label{lem:completed-field-regularity}
The map
$F=-\mathcal D:\mathcal U\rightrightarrows\mathbb R^d$ defined above in \eqref{def:completed-potential-field}  
is upper hemicontinuous, locally bounded, and has nonempty compact
convex values. Moreover,
$\sup_{v\in\mathcal D(x)}\|v\|
\le\sqrt N.$
\end{lemma}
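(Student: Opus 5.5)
The plan is to reduce everything to the blockwise maps $\mathcal D_i$. A product of finitely many set-valued maps with nonempty compact convex values again has nonempty compact convex values. It is upper hemicontinuous if each factor is, since the graph of the product is closed whenever each factor's graph is closed and values stay in a fixed compact set. Negation preserves all of these properties. So it suffices to treat a single $\mathcal D_i$.

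\textbf{Values and boundedness.} On $\{g_i\neq 0\}$ the value $\mathcal D_i(x)$ is a singleton unit vector. On $\{g_i=0\}$ it is the closed unit ball $\overline{\mathbb B}_{d_i}$. In both cases the value is nonempty, compact and convex, and contained in $\overline{\mathbb B}_{d_i}$. Hence every $v\in\mathcal D(x)$ satisfies $\|v\|^2=\sum_i\|v_i\|^2\le N$, which gives $\sup_{v\in\mathcal D(x)}\|v\|\le\sqrt N$. Local (indeed global) boundedness on $\mathcal U$ follows immediately.

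\textbf{Upper hemicontinuity.} Since the values lie in the fixed compact set $\overline{\mathbb B}_{d_i}$, upper hemicontinuity is equivalent to closedness of the graph. So I take $x^n\to x$ in $\mathcal U$ and $v^n\in\mathcal D_i(x^n)$ with $v^n\to v$, and show $v\in\mathcal D_i(x)$. There are two cases.
\begin{itemize}
\item If $g_i(x)=0$, then $\mathcal D_i(x)=\overline{\mathbb B}_{d_i}$. Because $\|v^n\|\le 1$ for all $n$, the limit satisfies $\|v\|\le1$, so $v\in\mathcal D_i(x)$.
\item If $g_i(x)\neq0$, continuity of $g_i=\nabla_i\Phi$ on $\mathcal U$ (as $\Phi\in C^1$) gives $g_i(x^n)\neq0$ for all large $n$. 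Then $v^n=g_i(x^n)/\|g_i(x^n)\|$, which converges to $g_i(x)/\|g_i(x)\|$ because normalization is continuous away from $0$. Hence $v=g_i(x)/\|g_i(x)\|\in\mathcal D_i(x)$.
\end{itemize}

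Equivalently, one can check upper hemicontinuity with open sets. At points with $g_i(x)\ne0$, $\mathcal D_i$ is locally single-valued and continuous. At points with $g_i(x)=0$, the value $\mathcal D_i(x)$ is the whole ball, and every nearby value is contained in it.

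The only mildly delicate point is the second case: we must ensure that nearby points cannot carry the full ball as their value. This is exactly where continuity of $\nabla_i\Phi$ enters, since it forces $g_i(x^n)\neq 0$ for large $n$. Beyond that, the argument is routine. Assembling the blocks and applying the negation gives the claim for $F=-\mathcal D$.
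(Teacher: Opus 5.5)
Your proof is correct and follows the paper's argument essentially verbatim. You read the values and the $\sqrt N$ bound off the definition, and you reduce upper hemicontinuity to closedness of the graph, checked blockwise in the two cases $g_i(x)\neq0$ (continuity of $\nabla_i\Phi$ keeps $g_i(x^n)\neq0$, so the normalized directions converge) and $g_i(x)=0$ (the limit stays in the unit ball), with the only cosmetic difference being that you treat each $\mathcal D_i$ separately before assembling the product, whereas the paper checks the blocks directly inside the product graph.
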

\end{tcolorbox}

\begin{proof}
Nonemptiness, compactness, convexity, and uniform boundedness are
immediate from the definition.

It remains to establish upper hemicontinuity. Since the values are
compact and the map is locally bounded, it suffices in finite
dimension to prove that its graph is closed.

Let $x_n\to x$ and $v_n\to v$ with
$v_n\in\mathcal D(x_n)$. Let $i \in [N]$. 
If $g_i(x)\neq0$, continuity of $g_i$ gives $g_i(x_n)\neq0$ for sufficiently large $n$, and hence  
\begin{equation*}
    v_{i,n}
=
\frac{g_i(x_n)}{\|g_i(x_n)\|}
\longrightarrow
\frac{g_i(x)}{\|g_i(x)\|}.
\end{equation*}
Thus $v_i\in\mathcal D_i(x)$.
If $g_i(x)=0$, then $\|v_{i,n}\|\le1$, so
$\|v_i\|\le1$ and $v_i\in\overline{\mathbb B}_{d_i}
=\mathcal D_i(x)$.
Thus $v\in\mathcal D(x)$ and the graph is closed.
\end{proof}

\begin{tcolorbox}[colframe=white!, top=2pt,left=2pt,right=2pt,bottom=2pt] 
\begin{lemma}[Graph consistency]
\label{lem:graph-consistency-finite-resolution}
As $\delta\downarrow0$,
\[
\sup_{x\in\mathcal X^\delta}
\operatorname{dist}
\left(
(x,D^\delta_\Phi(x)),
\operatorname{gph}(\mathcal D)
\right)
\longrightarrow0\,,
\]
where
$\operatorname{gph}(\mathcal D)
:= \{(x,v)\in \mathcal X\times\mathbb R^d:\ v\in\mathcal D(x)\}$ 
is the graph of the set-valued mapping $\mathcal D$.
\end{lemma}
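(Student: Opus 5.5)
We argue by contradiction, using compactness of $\mathcal X$ together with the closed-graph property of $\mathcal D$ from Lemma~\ref{lem:completed-field-regularity}. If the claim fails, there exist $\varepsilon>0$, radii $\delta_n\downarrow0$ and points $x^n\in\mathcal X^{\delta_n}\subseteq\mathcal X$ with
\[
\operatorname{dist}\bigl((x^n,D^{\delta_n}_\Phi(x^n)),\operatorname{gph}(\mathcal D)\bigr)\ge\varepsilon
\quad\text{for all } n.
\]
By Proposition~\ref{prop:comparison-field}(ii), $\|D^{\delta_n}_{\Phi,i}(x^n)\|\le1$ for each block. Since $\mathcal X$ is compact, we may pass to a subsequence with $x^n\to\bar x\in\mathcal X$ and $D^{\delta_n}_\Phi(x^n)\to\bar v$. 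It then suffices to show $\bar v\in\mathcal D(\bar x)$, since $(x^n,D^{\delta_n}_\Phi(x^n))\to(\bar x,\bar v)$ would contradict the distance bound. We check this block by block.

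\textbf{Stationary blocks.} If $g_i(\bar x)=0$, then $\mathcal D_i(\bar x)=\overline{\mathbb B}_{d_i}$. Since $\|\bar v_i\|\le1$ as a limit of vectors of norm at most one, nothing more is needed. This is exactly where the completion by the unit ball absorbs the nonuniform behavior illustrated in Proposition~\ref{prop:finite-vs-infinitesimal-regret}.

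\textbf{Nonstationary blocks.} Suppose $g_i(\bar x)\neq0$. We need a locally uniform version of Proposition~\ref{prop:comparison-field}(iii) along the moving centers $x^n$, which may lie on the boundary of $\mathcal X$. The queries $x^n_i\pm\delta_n u$ are feasible, and $\Phi$ is $C^1$ on the open set $\mathcal U\supseteq\mathcal X$. By the mean value theorem,
\[
\Phi(x^n_i+\delta_n u,x^n_{-i})-\Phi(x^n_i-\delta_n u,x^n_{-i})
=2\delta_n\langle g_i(\xi^n_u),u\rangle
\]
for some $\xi^n_u$ on the segment, and $\xi^n_u\to\bar x$ uniformly in $u\in\mathbb S^{d_i-1}$. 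Uniform continuity of $g_i$ near $\bar x$ gives $\sup_u\|g_i(\xi^n_u)-g_i(\bar x)\|\to0$. Hence for every $u$ with $\langle g_i(\bar x),u\rangle\neq0$, we get
\[
S^{\delta_n}_{i,\Phi}(x^n,u)\to\operatorname{sgn}\langle g_i(\bar x),u\rangle .
\]
The exceptional great sphere has measure zero, so dominated convergence applies as in the proof of Proposition~\ref{prop:comparison-field}(iii). The same rotational-symmetry computation then yields
\[
D^{\delta_n}_{\Phi,i}(x^n)\to \frac{g_i(\bar x)}{\|g_i(\bar x)\|},
\]
so $\bar v_i\in\mathcal D_i(\bar x)$.

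Combining the two cases gives $\bar v\in\mathcal D(\bar x)$, which yields the contradiction. The main obstacle is the nonstationary case: we must make the pointwise limit of Proposition~\ref{prop:comparison-field}(iii) uniform over moving centers, including centers at the boundary of $\mathcal X$, where that proposition's interiority hypothesis is unavailable. This is resolved by working on the open neighborhood $\mathcal U$, using the uniform mean-value estimate, and noting that the sign function converges off a null set of directions, where a uniform margin $|\langle g_i(\bar x),u\rangle|>0$ is available pointwise.
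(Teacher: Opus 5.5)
Your proof is correct. The skeleton matches the paper's: contradiction, a compactness extraction with $x^n\to\bar x$ and $D^{\delta_n}_\Phi(x^n)\to\bar v$, and the unit-ball completion absorbing the stationary blocks. The difference lies in how you handle the nonstationary blocks.

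\textbf{The paper's route.} It invokes the quantitative finite-resolution alignment lemma (Lemma~\ref{lem:uniform-potential-alignment-as}), which uses the $L$-smoothness of $\Phi$. With $q_n=\nabla_i\Phi(x^n)/\|\nabla_i\Phi(x^n)\|$, this gives $\langle q_n, D^{\delta_n}_{\Phi,i}(x^n)\rangle\ge 1-L\delta_n/(\kappa_{d_i}\|\nabla_i\Phi(x^n)\|)$. Combining this with $\|D^{\delta_n}_{\Phi,i}(x^n)\|\le 1$ forces $\|D^{\delta_n}_{\Phi,i}(x^n)-q_n\|^2\le 2L\delta_n/(\kappa_{d_i}\|\nabla_i\Phi(x^n)\|)\to 0$, so the limit is never computed directly.

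\textbf{Your route.} You instead prove a moving-center version of Proposition~\ref{prop:comparison-field}(iii). You combine the mean value theorem with continuity of $g_i$ at $\bar x$ to get pointwise sign convergence off the great sphere. Dominated convergence then applies, and the rotational-symmetry identity identifies the limit.

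\textbf{What each buys.} Your argument needs only that $\Phi$ be $C^1$ on an open neighborhood of $\mathcal X$, with no Lipschitz gradient. It also handles boundary centers cleanly, since the segments stay in $\mathcal X$ by feasibility and convexity. The paper's argument is quantitative: it yields an explicit $O\bigl(\delta/\|\nabla_i\Phi(x)\|\bigr)$ bound showing exactly how the approximation degrades near stationary blocks. It also reuses a lemma that the paper needs anyway for the descent and Robbins--Siegmund analysis in the unconstrained case.
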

\end{tcolorbox}

\begin{proof}
Suppose, towards a contradiction, that the graph-consistency statement
fails. Then there exist $\varepsilon>0$, a sequence
$\delta_n\downarrow0$, and $x_n\in\mathcal X^{\delta_n}$ such that
\begin{equation}
\label{eq:contradiction-graph-cv}
\operatorname{dist}
\left(
(x_n,D_\Phi^{\delta_n}(x_n)),
\operatorname{gph}(\mathcal D)
\right)
\ge\varepsilon,
\qquad
\forall n.
\end{equation}

Since $\mathcal X$ is compact, after passing to a subsequence we may
assume that
$x_n\to\bar x\in\mathcal X.$ 
Moreover, since
$\|D_\Phi^{\delta_n}(x_n)\|\le\sqrt N,$ 
the sequence $(D_\Phi^{\delta_n}(x_n))$ is bounded. After extracting
a further subsequence, we may therefore assume that
$D_\Phi^{\delta_n}(x_n)\to v$
for some
$v=(v_1,\ldots,v_N)\in\mathbb R^d.$

We claim that
$v\in\mathcal D(\bar x).$
To prove this, let $i\in[N]$. 

If $\nabla_i\Phi(\bar x)=0$, then
$\|D_i^{\delta_n}(x_n)\|\le1$
for all $n$. 
Passing to the limit gives
$\|v_i\|\le 1.$
Hence
$v_i\in\overline{\mathbb B}_{d_i}
=\mathcal D_i(\bar x).$

Suppose now that $\nabla_i\Phi(\bar x)\neq0$.
By continuity of $\nabla_i\Phi$,
$\nabla_i\Phi(x_n)\neq0$ for all sufficiently large $n$. Define
\[
q_n
:=
\frac{\nabla_i\Phi(x_n)}
{\|\nabla_i\Phi(x_n)\|}.
\]
By Lemma~\ref{lem:uniform-potential-alignment-as},
\[
\left\langle
q_n,D_i^{\delta_n}(x_n)
\right\rangle
\ge
1-
\frac{L\delta_n}
{\kappa_{d_i}\|\nabla_i\Phi(x_n)\|}.
\]
Since $\|D_i^{\delta_n}(x_n)\|\le1$, we have
\begin{align*}
\left\|
D_i^{\delta_n}(x_n)-q_n
\right\|^2
&=
\|D_i^{\delta_n}(x_n)\|^2+1
-
2\left\langle
D_i^{\delta_n}(x_n),q_n
\right\rangle
\\
&\le
2\left(
1-
\left\langle
D_i^{\delta_n}(x_n),q_n
\right\rangle
\right)
\\
&\le
\frac{2L\delta_n}
{\kappa_{d_i}\|\nabla_i\Phi(x_n)\|}
\longrightarrow0 \,,
\end{align*}
where the above limit follows from noting that 
$\delta_n\to0$ while
$\|\nabla_i\Phi(x_n)\|
\longrightarrow
\|\nabla_i\Phi(\bar x)\|>0.$ 
Therefore,
$D_i^{\delta_n}(x_n)-q_n\to0.$ 
Since 
$q_n
\longrightarrow
\frac{\nabla_i\Phi(\bar x)}
{\|\nabla_i\Phi(\bar x)\|},$ 
and
$D_i^{\delta_n}(x_n)\to v_i,$ 
we conclude that
\[
v_i
=
\frac{\nabla_i\Phi(\bar x)}
{\|\nabla_i\Phi(\bar x)\|}
\in
\mathcal D_i(\bar x).
\]

Since $i\in[N]$ was arbitrary, we have shown that $v_i\in\mathcal D_i(\bar x)$ for all $i\in [N]$.
Hence $v\in \prod_{i=1}^N\mathcal D_i(\bar x)
=\mathcal D(\bar x),$ 
so
$(\bar x,v)\in\operatorname{gph}(\mathcal D).$ 

Consequently,
\[
\operatorname{dist}\left(
(x_n,D_\Phi^{\delta_n}(x_n)),
\operatorname{gph}(\mathcal D)
\right)
\le
\left\|
(x_n,D_\Phi^{\delta_n}(x_n))-(\bar x,v)
\right\|
\longrightarrow0,
\]
because
$x_n\to\bar x,
D_\Phi^{\delta_n}(x_n)\to v.$ 
This contradicts
\eqref{eq:contradiction-graph-cv}. Therefore, as $\delta\downarrow0$, 
\[
\sup_{x\in\mathcal X^\delta}
\operatorname{dist}
\left(
(x,D_\Phi^\delta(x)),
\operatorname{gph}(\mathcal D)
\right)
\longrightarrow0\,.
\]
\end{proof}

\begin{tcolorbox}[colframe=white!, top=2pt,left=2pt,right=2pt,bottom=2pt] 
\begin{lemma}[CNGP tracks the completed normalized inclusion]
\label{lem:cngp-ppad}
Under the assumptions of
Theorem~\ref{thm:as-comparison-ne}, there exist sequences
$y_t\in\mathcal X$, $v_t\in\mathcal D(y_t)$, $r_t\to0$, and a
bounded martingale-difference sequence $e_{t+1}$ such that, almost
surely,
\[
z_{t+1}
=
\Pi_{\mathcal X}
\left[
z_t+\eta_t(-v_t+e_{t+1}+r_t)
\right],
\qquad
\|y_t-z_t\|\to0,
\]
and
$\sum_{t=1}^{\infty}\eta_t e_{t+1}$
converges. 
Consequently, $(z_t)$ is an $\mathcal X$-projected perturbed
approximate discretization of $F=-\mathcal D$ (in the sense of Theorem~\ref{thm:majewski-et-al4.2}). 
\end{lemma}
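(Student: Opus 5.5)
We will obtain the representation by splitting the single-comparison estimator into its conditional mean plus noise, and then splitting the conditional mean into a graph point plus a vanishing bias. Write CNGP (with common stepsize $\eta_t$, as in Assumption~\ref{as:rm-stepsizes}) in stacked form $z_{t+1}=\Pi_{\mathcal X}[z_t-\eta_t\widehat D_t]$; this is legitimate because $\mathcal X=\prod_i\mathcal X_i$ makes the projection blockwise. Define $e_{t+1}:=D^{\delta_t}(x_t)-\widehat D_t$, i.e.\ the stacked $M_{i,t}$ from the proof of Theorem~\ref{thm:comparison-regret}. By Lemma~\ref{lem:one-bit-estimator}, $\mathbb E[e_{t+1}\mid\mathcal F_t]=0$. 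Moreover $\|\widehat D_{i,t}\|\le G_i$ almost surely and $\|D_i^{\delta_t}\|\le 1$ by Proposition~\ref{prop:comparison-field}(ii), so $\sup_t\|e_{t+1}\|\le \sum_i(G_i+1)<\infty$.

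\textbf{Convergence of the noise series.} The partial sums $S_n=\sum_{t\le n}\eta_t e_{t+1}$ form an $\mathbb R^d$-valued martingale. By orthogonality of increments, $\mathbb E\|S_n\|^2\le C\sum_t\eta_t^2<\infty$. Hence $S_n$ is bounded in $L^2$ and converges almost surely by the martingale convergence theorem.

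\textbf{Bias and graph point.} In an ordinal potential game we have $D^{\delta_t}_{J}(x_t)=D^{\delta_t}_\Phi(x_t)$, as noted before Assumption~\ref{as:non-degeneracy-ord-pot}. Lemma~\ref{lem:graph-consistency-finite-resolution} then gives, since $\delta_t\to0$ and $x_t\in\mathcal X^{\delta_t}$, a point $(y_t,v_t)\in\operatorname{gph}(\mathcal D)$ with
\[
\|y_t-x_t\|+\|v_t-D^{\delta_t}_\Phi(x_t)\|\le \epsilon_t+1/t,
\]
where $\epsilon_t:=\sup_{x\in\mathcal X^{\delta_t}}\operatorname{dist}((x,D^{\delta_t}_\Phi(x)),\operatorname{gph}\mathcal D)\to0$. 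The infimum need not be attained, which is why we allow the slack $1/t$; alternatively, closedness of the graph from Lemma~\ref{lem:completed-field-regularity}, together with its local boundedness, gives attainment. Set $r_t:=v_t-D^{\delta_t}_\Phi(x_t)$, so $r_t\to0$. Then $-\widehat D_t=-v_t+r_t+e_{t+1}$, which is exactly the displayed recursion.

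\textbf{Tracking of the reference iterate.} Finally, $\|y_t-z_t\|\le\|y_t-x_t\|+\|x_t-z_t\|$. The first term tends to $0$ by the previous step. The second is at most $\sum_i (R_i/\rho_i)\delta_t\to0$ by the contraction bound on $\mathsf Q_{i,\delta}$. Also $y_t\in\mathcal X\subseteq\mathcal U$ and $-v_t\in F(y_t)$. Together these verify \ref{as:stepsize}--\ref{as:bounded-error}, the stepsize conditions following from Assumption~\ref{as:rm-stepsizes}. This matches Theorem~\ref{thm:majewski-et-al4.2} with $f_t=-v_t$.

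\textbf{Main obstacle.} The delicate point is the invocation of Lemma~\ref{lem:graph-consistency-finite-resolution}. It relies on the uniform alignment estimate of Lemma~\ref{lem:uniform-potential-alignment-as}, which uses $L$-smoothness of $\Phi$. It also requires the graph distance to be taken uniformly over $\mathcal X^{\delta}$, so that it applies along the random sequence $x_t$ rather than at fixed points. The remaining steps are routine martingale and triangle-inequality arguments.
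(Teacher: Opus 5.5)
Your proposal is correct and follows essentially the same route as the paper. Both use the splitting $-\widehat D_t=-v_t+r_t+e_{t+1}$ with $e_{t+1}=D^{\delta_t}_\Phi(x_t)-\widehat D_t$ and $r_t=v_t-D^{\delta_t}_\Phi(x_t)$; both take the graph point from Lemma~\ref{lem:graph-consistency-finite-resolution}, and the paper gets exact attainment from compactness of $\operatorname{gph}(\mathcal D)$ over $\mathcal X$, as your alternative notes. Both also use a square-summability martingale argument for $\sum_t\eta_t e_{t+1}$ and the triangle inequality through $x_t$ for $\|y_t-z_t\|\to0$. The only slip is cosmetic: a Euclidean graph distance gives $\|y_t-x_t\|+\|v_t-D^{\delta_t}_\Phi(x_t)\|\le\sqrt2(\epsilon_t+1/t)$ rather than $\epsilon_t+1/t$, which changes nothing.
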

\end{tcolorbox}

\begin{proof}
Recall first that \eqref{eq:cngp} can be written as follows: 
\begin{equation}
\label{eq:recall-algo-cngp}
z_{t+1} = \Pi_{\mathcal X}(z_t - \eta_t \hat{D}_t)\,.
\end{equation}
Define $e_{t+1} := - M_{t+1} = - (\widehat D_t
-
D_\Phi^{\delta_t}(x_t))$. We can then write: 
\begin{equation}
\label{eq:decomp1}
- \hat{D}_t = - (\hat{D}_t - D_{\Phi}^{\delta_t}(x_t)) - D_{\Phi}^{\delta_t}(x_t) = - D_{\Phi}^{\delta_t}(x_t) + e_{t+1}\,.
\end{equation}

Our next step is to decompose the term $- D_{\Phi}^{\delta_t}(x_t)$ using the graph consistency result (Lemma~\ref{lem:graph-consistency-finite-resolution}) relating it to our completed set-valued mapping $-\mathcal{D}$ defined in \eqref{def:completed-potential-field}.  

Define
\[
\varepsilon_t
:=
\sup_{x\in\mathcal X^{\delta_t}}
\operatorname{dist}
\left(
(x,D_\Phi^{\delta_t}(x)),
\operatorname{gph}(\mathcal D)
\right).
\]
By Lemma~\ref{lem:graph-consistency-finite-resolution} and
$\delta_t\to0$, we have $\varepsilon_t\to0$. Since
$x_t\in\mathcal X^{\delta_t}$,
\begin{equation}
\label{eq:dist}
\operatorname{dist}
\left(
(x_t,D_\Phi^{\delta_t}(x_t)),
\operatorname{gph}(\mathcal D)
\right)
\le \varepsilon_t.
\end{equation}
Since $\mathcal X$ is compact and
$\sup_{v\in\mathcal D(x)}\|v\|\le\sqrt N$, the closed graph
$\operatorname{gph}(\mathcal D)$ is a compact subset of
$\mathcal X\times\mathbb R^d$. 
Hence, the distance in \eqref{eq:dist}
is attained and there exist $y_t\in\mathcal X$ and
$v_t\in\mathcal D(y_t)$ such that
\[
\left\|
(x_t,D_\Phi^{\delta_t}(x_t))-(y_t,v_t)
\right\|
=
\operatorname{dist}\left(
(x_t,D_\Phi^{\delta_t}(x_t)),
\operatorname{gph}(\mathcal D)
\right)
\le\varepsilon_t.
\]
In particular,
\begin{equation}
\label{eq:limits-r_t-y_t-x_t}
\|y_t-x_t\|\le\varepsilon_t \longrightarrow 0,
\qquad
\|v_t-D_\Phi^{\delta_t}(x_t)\|\le\varepsilon_t \longrightarrow 0.
\end{equation}

We define now $r_t := v_t - D_\Phi^{\delta_t}(x_t)$ and we use \eqref{eq:decomp1} to obtain: 
\begin{equation}
\label{eq:decomp-hatDt}
- \hat{D}_t = - D_{\Phi}^{\delta_t}(x_t) + e_{t+1} 
= - v_t + e_{t+1} + r_t\,, \quad v_t \in \mathcal{D}(y_t)\,.
\end{equation}
Therefore, using \eqref{eq:recall-algo-cngp} and the above decomposition~\eqref{eq:decomp-hatDt}, we obtain: 
\[
z_{t+1}
=
\Pi_{\mathcal X}
\left[
z_t+
\eta_t
\bigl(
f_t + e_{t+1} + r_t
\bigr)
\right]\,,
\]
where $f_t := - v_t \in F(y_t) = - \mathcal{D}(y_t), e_{t+1} = - M_{t+1} = - (\widehat D_t
-
D_\Phi^{\delta_t}(x_t)), r_t = v_t - D_{\Phi}^{\delta_t}(x_t)$. 

We now check that the assumptions \ref{as:set-val-mapping} to \ref{as:bounded-error} in Theorem~\ref{thm:majewski-et-al4.2} are satisfied. 

Assumption~\ref{as:set-val-mapping} is satisfied by Lemma~\ref{lem:completed-field-regularity}. 

Assumption~\ref{as:stepsize} follows from
Assumption~\ref{as:rm-stepsizes}; indeed
$\sum_t\eta_t^2<\infty$ implies $\eta_t\to0$.

Assumption~\ref{as:bias-and-stoch-error}: 
Since the comparison field of the game and that of the potential
coincide exactly at finite resolution,
Lemma~\ref{lem:one-bit-estimator} gives
$\mathbb E[M_{t+1}\mid\mathcal F_t]=0.$

Moreover, the single-comparison estimator is uniformly bounded:
$
\|\widehat D_t\|
\le
G:=
\left(
\sum_{i=1}^N G_i^2
\right)^{1/2}
$ almost surely. 
Consequently,
$\|M_{t+1}\|\le \|\hat{D}_t\| + \|D_{\Phi}^{\delta_t}(x_t)\| \leq G + \sqrt{N}$ almost surely. Therefore, 
\begin{equation*}
\sum_{t=1}^{+\infty} \mathbb{E}[\|\eta_tM_{t+1}\|^2|\mathcal{F}_t] \leq (G + \sqrt{N})^2 \sum_{t=1}^{+\infty} \eta_t^2 < \infty. 
\end{equation*}
Since $\sum_{t=1}^{\infty}\eta_t^2<\infty$ by Assumption~\ref{as:rm-stepsizes}, 
the vector-valued martingale series 
$\sum_{t=1}^T\eta_t M_{t+1}$ 
converges almost surely. Noting that $e_{t+1} = - M_{t+1}$, it follows that
$\sum_{t=1}^{\infty}\eta_t e_{t+1}$ 
converges almost surely, as required by
Assumption~\ref{as:bias-and-stoch-error}. 
We henceforth fix a sample path in the probability-one event on
which the preceding martingale series converges and the uniform
bounds hold. 

Recall from \eqref{eq:limits-r_t-y_t-x_t} that $r_t\to0$.  
Furthermore, recall that by construction,
$x_{i,t}=
\mathsf Q_{i,\delta_t}(z_{i,t}),$ 
and hence
$\|x_{i,t}-z_{i,t}\|
\le
\frac{R_i}{\rho_i}\delta_t.$ 
Therefore, with
$C_{\mathcal X}
:=
\left(
\sum_{i=1}^N
\frac{R_i^2}{\rho_i^2}
\right)^{1/2}$, 
we have
\begin{equation}
\label{eq:x-z-asymptotic2}
\|x_t-z_t\|
\le
C_{\mathcal X}\delta_t
\longrightarrow0.
\end{equation}
Hence, we obtain using \eqref{eq:limits-r_t-y_t-x_t} and \eqref{eq:x-z-asymptotic2}, 
\begin{equation}
\label{eq:state-y-to-zt-below}
\|y_t-z_t\|
\le
\|y_t-x_t\|+\|x_t-z_t\| \le \varepsilon_t + C_{\mathcal{X}} \delta_t
\longrightarrow0\,,
\end{equation}
where the limit follows from using $\varepsilon_t \to 0, \delta_t \to 0$. 
Thus Assumptions~\ref{as:bias-and-stoch-error} and
\ref{as:state-y-to-zt} are satisfied.

Assumption~\ref{as:bounded-error} also holds as we have shown above that 
$\sup_t\|M_{t+1}\|<\infty$ almost surely. 
\end{proof}

\noindent\textbf{Lyapunov analysis of the projected limiting inclusion.} 
We now verify the Lyapunov condition required by
\citet[Theorem~4.2]{majewski-miasojedow-moulines18}.
Recall that $F(x)=-\mathcal D(x)$ where $\mathcal{D}$ is defined in \eqref{def:completed-potential-field}. 

The projected limiting field on $\mathcal X$ is
\[
F_{\mathcal X}(x)
:=
\Pi_{T_{\mathcal X}(x)}F(x).
\]
Since
$\mathcal X=\prod_{i=1}^N\mathcal X_i,$ 
we have
$T_{\mathcal X}(x)
=\prod_{i=1}^N T_{\mathcal X_i}(x_i)$, 
and Euclidean projection onto the tangent cone acts blockwise.

We next show that the ordinal potential $\Phi$ acts as a Lyapunov function for the projected differential inclusion induced by the \eqref{eq:cngp} stochastic approximation algorithm, i.e. we verify the second part in Theorem~\ref{thm:majewski-et-al4.2}, namely \eqref{eq:majewski-specialized-Lyapunov} and the identification of the zero set of the upper-bounding function $U$. 

\begin{tcolorbox}[colframe=white!, top=2pt,left=2pt,right=2pt,bottom=2pt] 
\begin{lemma}[Lyapunov inequality for the projected normalized field]
\label{lem:lyapunov-ineq}
For every $i\in[N]$ define
\begin{equation}
\label{eq:psi_i}
\Psi_i(x)
:=
\begin{cases}
\displaystyle
\|g_i(x)\| \cdot 
\left\|
\Pi_{T_{\mathcal X_i}(x_i)}
\left(
-\frac{g_i(x)}{\|g_i(x)\|}
\right)
\right\|^2,
&
\text{if} \,\,g_i(x)\neq0,
\\[3ex]
0,
&
\text{if} \,\, g_i(x)=0,
\end{cases}
\end{equation}
and let $U(x)
:=
-\sum_{i=1}^N\Psi_i(x).$ 
Then $U:\mathcal X\to\mathbb R$ is upper semicontinuous on $\mathcal X$, $U\le0$, and for
every $x\in\mathcal X$ and every $u\in F(x)$,
\begin{equation}
\label{eq:lyap-ineq}
\left\langle
\nabla\Phi(x),
\Pi_{T_{\mathcal X}(x)}u
\right\rangle
=
U(x).
\end{equation}
Moreover,
\begin{equation}
\label{eq:identification-zero-set}
\{x\in\mathcal X:U(x)=0\}
=
\left\{
x\in\mathcal X:
-\nabla\Phi(x)\in N_{\mathcal X}(x)
\right\}.
\end{equation}
\end{lemma}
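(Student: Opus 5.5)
The plan is to work block by block. Since $\mathcal X=\prod_i\mathcal X_i$, we have $T_{\mathcal X}(x)=\prod_i T_{\mathcal X_i}(x_i)$ and $N_{\mathcal X}(x)=\prod_i N_{\mathcal X_i}(x_i)$, and Euclidean projection onto a product cone acts blockwise. Hence for $u=(u_1,\dots,u_N)\in F(x)$,
\[
\left\langle \nabla\Phi(x),\Pi_{T_{\mathcal X}(x)}u\right\rangle=\sum_{i=1}^N\left\langle g_i(x),\Pi_{T_{\mathcal X_i}(x_i)}u_i\right\rangle ,
\]
and it suffices to show that the $i$-th summand equals $-\Psi_i(x)$. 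The only tool needed is Moreau's decomposition for a closed convex cone $T$ with polar $N$: writing $w=\Pi_T w+\Pi_N w$ with $\langle \Pi_Tw,\Pi_Nw\rangle=0$ gives $\langle w,\Pi_T w\rangle=\|\Pi_Tw\|^2$, and $\Pi_T w=0$ iff $w\in N$.

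\textbf{Identity \eqref{eq:lyap-ineq}.} I split each block into two cases.
If $g_i(x)=0$, then $u_i$ is an arbitrary point of $\overline{\mathbb B}_{d_i}$. The summand is $\langle 0,\cdot\rangle=0=-\Psi_i(x)$, independently of $u_i$. This is exactly why equality, not just inequality, holds for every selection.
If $g_i(x)\neq0$, then $u_i=-g_i/\|g_i\|$ is forced, so $g_i=-\|g_i\|u_i$. Moreau then gives
\[
\langle g_i,\Pi_T u_i\rangle=-\|g_i\|\,\langle u_i,\Pi_Tu_i\rangle=-\|g_i\|\,\|\Pi_Tu_i\|^2=-\Psi_i(x).
\]
Summing over $i$ yields \eqref{eq:lyap-ineq}, and $U\le0$ is immediate since each $\Psi_i\ge0$.

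\textbf{Upper semicontinuity of $U$.} This is the main obstacle, since tangent cones jump at the boundary. I will show that each $\Psi_i$ is lower semicontinuous on $\mathcal X$. Using Moreau, $\|\Pi_{T_{\mathcal X_i}(x_i)}w\|=\operatorname{dist}(w,N_{\mathcal X_i}(x_i))$.

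Take $x^n\to x$. If $g_i(x)=0$, then $\Psi_i(x)=0\le\liminf\Psi_i(x^n)$ trivially. If $g_i(x)\neq0$, then $w^n:=-g_i(x^n)/\|g_i(x^n)\|\to w:=-g_i(x)/\|g_i(x)\|$ by continuity. I will show that $\liminf_n\operatorname{dist}(w^n,N(x_i^n))\ge\operatorname{dist}(w,N(x_i))$ as follows. Pass to a subsequence realizing the liminf and pick nearest points $p^n\in N(x^n_i)$. Since $\|p^n\|\le 2\|w^n\|$, these are bounded, so a further subsequence has $p^n\to p$. The normal cone map of a closed convex set has closed graph (pass to the limit in $\langle p^n,y-x_i^n\rangle\le0$), so $p\in N(x_i)$. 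Hence the liminf equals $\|w-p\|\ge\operatorname{dist}(w,N(x_i))$. Multiplying by the continuous positive factor $\|g_i(x^n)\|$ gives lower semicontinuity of $\Psi_i$ at $x$, so $U=-\sum_i\Psi_i$ is upper semicontinuous.

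\textbf{Zero set \eqref{eq:identification-zero-set}.} Since $\Psi_i\ge0$, we have $U(x)=0$ iff $\Psi_i(x)=0$ for all $i$. For blocks with $g_i(x)=0$, the condition $-g_i(x)\in N_{\mathcal X_i}(x_i)$ holds trivially. For blocks with $g_i(x)\neq0$, $\Psi_i(x)=0$ iff $\Pi_T(-g_i/\|g_i\|)=0$, iff $-g_i/\|g_i\|\in N_{\mathcal X_i}(x_i)$ by Moreau, iff $-g_i\in N_{\mathcal X_i}(x_i)$ since normal cones are cones. Taking the product over blocks gives $-\nabla\Phi(x)\in N_{\mathcal X}(x)$, which concludes.
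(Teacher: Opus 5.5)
Your proposal is correct and follows essentially the same route as the paper. Both arguments use blockwise Moreau decomposition for the identity, lower semicontinuity of each $\Psi_i$ via the closed graph of the normal-cone map applied to the distance to the moving cone $N_{\mathcal X_i}(x_i^n)$, and $\Pi_C(v)=0 \iff v\in C^\circ$ together with the cone property for the zero set. The only cosmetic difference is that you normalize first and multiply by $\|g_i(x^n)\|$, whereas the paper rewrites $\Psi_i(x)=\operatorname{dist}^2(-g_i(x),N_{\mathcal X_i}(x_i))/\|g_i(x)\|$ and divides.
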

\end{tcolorbox}

\begin{proof}

\noindent\textbf{Proof of \eqref{eq:lyap-ineq}.}
Let $x\in\mathcal X$ and $w\in F_{\mathcal X}(x).$ 
By definition of the projected field, there exists
$u\in F(x)=-\mathcal D(x)$ such that
$w=\Pi_{T_{\mathcal X}(x)}(u).$ 
Since both the action set and tangent cone are Cartesian products,
this means blockwise that
$w_i =\Pi_{T_{\mathcal X_i}(x_i)}(u_i),
i\in[N].$ 

Observe first that 
\begin{equation*}
\langle\nabla\Phi(x),w\rangle
= \sum_{i=1}^N \langle g_i(x), w_i\rangle\,.
\end{equation*}
To prove \eqref{eq:lyap-ineq}, we will show in what follows that for every $i \in [N]$, 
\begin{equation}
 \langle g_i(x), w_i\rangle = - \Psi_i(x)\,.
\end{equation}

Consider first a block $i \in [N]$ such that $g_i(x)\neq0$.
Then $F_i(x)$ is a singleton and necessarily
$u_i=-\frac{g_i(x)}{\|g_i(x)\|}.$ 

For a cone $C\subseteq\mathbb{R}^{d_i}$, let
$C^\circ := \{v\in\mathbb{R}^{d_i} :
\langle v,w\rangle \leq 0,\ \forall w\in C\}$ 
denote its polar cone. For a closed convex cone $C$, Moreau's decomposition gives
\[
u_i=\Pi_C(u_i)+\Pi_{C^\circ}(u_i),
\qquad
\langle \Pi_C(u_i),\Pi_{C^\circ}(u_i)\rangle=0.
\]
Taking
$C=T_{\mathcal X_i}(x_i), C^\circ=N_{\mathcal X_i}(x_i)$, 
and recalling that $w_i=\Pi_C(u_i)$, we obtain
\begin{equation*}
\langle u_i,w_i\rangle = \langle \Pi_C(u_i)+\Pi_{C^\circ}(u_i),\Pi_C(u_i)\rangle  
= \|\Pi_C(u_i)\|^2 = \|w_i\|^2\,.
\end{equation*}
Since $g_i(x)=-\|g_i(x)\|u_i$, 
it follows that
\begin{align}
\langle g_i(x),w_i\rangle
&=
-\|g_i(x)\|
\langle u_i,w_i\rangle
\nonumber\\
&=
-\|g_i(x)\|\cdot\|w_i\|^2
\nonumber\\
&=
-\|g_i(x)\| \cdot 
\left\|
\Pi_{T_{\mathcal X_i}(x_i)}
\left(
-\frac{g_i(x)}{\|g_i(x)\|}
\right)
\right\|^2
\nonumber\\
&=
-\Psi_i(x).
\label{eq:block-lie-dissipation}
\end{align}

If instead $g_i(x)=0$, the block $u_i$ may be any element of
$-\overline{\mathbb B}_{d_i}$, but independently of this choice,
\[
\langle g_i(x),w_i\rangle=0=-\Psi_i(x).
\]
Summing over the blocks gives
\[
\langle\nabla\Phi(x),w\rangle
=
-\sum_{i=1}^N\Psi_i(x)
=
U(x).
\]
The right-hand side is independent of the particular selection
$u\in F(x)$ at zero-gradient blocks. Consequently, 
$\left\langle
\nabla\Phi(x),
\Pi_{T_{\mathcal X}(x)}u
\right\rangle
=
U(x).$ Since $\Psi_i(x)\ge0$ for every $i$, we also have $U(x)\le0$.\\

\noindent\textbf{Proof of upper semicontinuity of $U$.}
Since $U=-\sum_{i=1}^N\Psi_i$, it suffices to show that each $\Psi_i$ is lower semicontinuous to conclude that their negative sum is upper semicontinuous, i.e., for every $x^n \to x$, 
\begin{equation}
\label{eq:psi-i-lower-semicont}
\Psi_i(x) \leq \liminf_n \Psi_i(x^n)\,.
\end{equation}

First, by definition, $\Psi_i(y) \geq 0$ for every $y \in \mathcal{X}$. If $g_i(x) = 0$, then $\Psi_i(x) = 0$ by definition (see \eqref{eq:psi_i}). Therefore for any such $x$ for which $g_i(x) = 0$, \eqref{eq:psi-i-lower-semicont} holds since $0 = \Psi_i(x) \leq \liminf_n \Psi_i(x^n)\,.$ 

It remains to show lower semicontinuity of $\Psi_i$ whenever $g_i(x)\neq0$. Positive homogeneity of the projection onto a cone yields: 
\begin{equation}
\label{eq:alternative-expression-psi-i}
\Psi_i(x)
= \|g_i(x)\| \cdot 
\left\|
\Pi_{T_{\mathcal X_i}(x_i)}
\left(
-\frac{g_i(x)}{\|g_i(x)\|}
\right)
\right\|^2
= \frac{\left\|
\Pi_{T_{\mathcal X_i}(x_i)}
\left(
-g_i(x)
\right)
\right\|^2}{\|g_i(x)\|}\,.
\end{equation}
Using Moreau's decomposition gives: 
\begin{equation}
\label{eq:dist-to-normal-cone}
\operatorname{dist}
\left(
-g_i(x),N_{\mathcal X_i}(x_i)
\right) = \left\| - g_i(x) - \Pi_{\mathcal{N}_{\mathcal X_i}(x_i)}(- g_i(x)) \right\|
= \left\|\Pi_{T_{\mathcal X_i}(x_i)}
\left(
-g_i(x)
\right)\right\|\,.
\end{equation}
Combining \eqref{eq:alternative-expression-psi-i} and \eqref{eq:dist-to-normal-cone} yields: 
\begin{equation}
\label{eq:Psi-normal-distance}
\Psi_i(x)
=
\frac{
\operatorname{dist}^2
\left(
-g_i(x),N_{\mathcal X_i}(x_i)
\right)
}{
\|g_i(x)\|
}.
\end{equation}
We now use this expression to show that $\Psi_i$ is lower semicontinuous in the case where $g_i(x) \neq 0$. Noting that $g_i$ is continuous, our main task is to show that the function $d_i: \mathcal{X} \to \mathbb{R}_+$ defined for any $x \in \mathcal{X}$ by:  
\[
d_i(x)
:=
\operatorname{dist}
\left(
-g_i(x),N_{\mathcal X_i}(x_i)
\right)
\]
is lower semicontinuous. Note here, that this is not a simple distance function to a fixed convex set which is known to be continuous. Here the set $\mathcal{N}_{\mathcal{X}_i}(x_i)$ also depends on the variable $x_i.$ We prove the lower semicontinuity claim below. 

Let $x^n\to x$ and, after passing to
a subsequence realizing $\liminf_n d_i(x^n)$, let 
\[
v_i^n
:=
\Pi_{N_{\mathcal X_i}(x_i^n)}
\bigl(-g_i(x^n)\bigr).
\]
Since $0\in N_{\mathcal X_i}(x_i^n)$,
$\|v_i^n\|\le\|g_i(x^n)\|$ by nonexpansiveness of the projection operator. As a consequence, since $(g(x^n))$ converges by continuity of $g$, it follows that it is bounded and hence that $(v_i^n)$ is also bounded. 
We can then pass to a further subsequence and let $v_i^n\to v_i$. 
The normal-cone mapping of a closed convex set has closed graph, and since $v_i^n \in N_{\mathcal X_i}(x_i^n)$, we obtain  
$v_i\in N_{\mathcal X_i}(x_i)$. Therefore,
\[
d_i(x)
= \inf_{y_i \in N_{\mathcal X_i}(x_i)} \|- g_i(x) - y_i\|
\le
\|-g_i(x)-v_i\|
= \lim_{n \to \infty} \|-g_i(x^n) - v_i^n\| 
=
\liminf_{n\to\infty} d_i(x^n)\,,
\]
where the penultimate identity follows from continuity of $g_i$ and $x^n \to x, v_i^n \to v_i$, and the last identity follows from recalling that $x^n$ is a subsequence realizing $\liminf_n d_i(x^n)$. 
We have proved that $d_i$ is lower semicontinuous. 

We now conclude to show that $\Psi_i$ as expressed in \eqref{eq:Psi-normal-distance} is lower semicontinuous. 
We use the following elementary fact: 

\begin{lemma}
\label{lem:liminf}
If $a_n\geq 0$ and $b_n\to b>0$, then
$\liminf_{n\to\infty}\frac{a_n}{b_n}
=\frac{\liminf_{n\to\infty}a_n}{b}.$ 
\end{lemma}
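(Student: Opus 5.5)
The plan is to prove the two inequalities $\liminf_n a_n/b_n \ge (\liminf_n a_n)/b$ and $\liminf_n a_n/b_n \le (\liminf_n a_n)/b$ separately. Both will follow from sandwiching $1/b_n$ between constants close to $1/b$ for all large $n$, combined with nonnegativity of the $a_n$. Set $\ell:=\liminf_{n}a_n\in[0,+\infty]$.

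\emph{Sandwiching.} Fix $\varepsilon\in(0,b)$. Since $b_n\to b>0$, there is $n_0$ such that $b-\varepsilon\le b_n\le b+\varepsilon$ for all $n\ge n_0$. In particular $b_n>0$ for such $n$, so the ratio $a_n/b_n$ is well defined eventually; the finitely many early indices do not affect a $\liminf$. Because $a_n\ge0$, multiplying the reciprocal bounds by $a_n$ preserves the inequalities:
\[
\frac{a_n}{b+\varepsilon}\le\frac{a_n}{b_n}\le\frac{a_n}{b-\varepsilon},\qquad n\ge n_0 .
\]

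\emph{Passing to the limit inferior.} Take $\liminf$ in each side. Multiplication by a fixed positive constant $c$ commutes with $\liminf$, i.e.\ $\liminf_n(c\,a_n)=c\,\ell$, and this remains valid when $\ell=+\infty$. Hence
\[
\frac{\ell}{b+\varepsilon}\le\liminf_{n}\frac{a_n}{b_n}\le\frac{\ell}{b-\varepsilon}.
\]
If $\ell=+\infty$, the left inequality already gives $\liminf_n a_n/b_n=+\infty=\ell/b$. If $\ell<\infty$, letting $\varepsilon\downarrow0$ squeezes the middle term to $\ell/b$.

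There is no real obstacle in this argument. The only points needing care are, first, that nonnegativity of $a_n$ is what allows the reciprocal bounds on $b_n$ to transfer in the right direction, and second, the conventions for $\ell=+\infty$ together with the fact that $b_n$ may be nonpositive for finitely many $n$. This lemma is then applied with $a_n=d_i(x^n)^2$ and $b_n=\|g_i(x^n)\|\to\|g_i(x)\|>0$. Combined with the lower semicontinuity of $d_i$ (and hence of $d_i^2$, since squaring is increasing and continuous on $[0,\infty)$) and with \eqref{eq:Psi-normal-distance}, it yields $\liminf_n\Psi_i(x^n)\ge\Psi_i(x)$, as required in \eqref{eq:psi-i-lower-semicont}.
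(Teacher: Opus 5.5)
Your proof is correct, but it takes a different route from the paper's. The paper first disposes of the case $\liminf_n a_n=+\infty$ separately and then proves the two inequalities by extracting subsequences. A subsequence with $a_{n_k}\to\liminf_n a_n$ gives $a_{n_k}/b_{n_k}\to(\liminf_n a_n)/b$, hence the upper bound. A subsequence realizing $\liminf_n a_n/b_n$, combined with the identity $a_{n_k}=b_{n_k}\cdot(a_{n_k}/b_{n_k})$, gives the lower bound.

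You instead squeeze $a_n/b_n$ between $a_n/(b+\varepsilon)$ and $a_n/(b-\varepsilon)$, and use only monotonicity and positive homogeneity of $\liminf$. This treats $\ell=+\infty$ inside the same chain of inequalities and avoids any extraction.

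The trade-off concerns the hypothesis $a_n\ge0$. Your sandwich uses it essentially: for negative $a_n$ the two bounds reverse. The paper's subsequence argument does not rely on the sign of $a_n$, and with routine bookkeeping of infinite values it extends to arbitrary real sequences. For the application with $a_n=d_i(x^n)^2$ that extra generality is irrelevant, and your version is arguably the cleaner one.
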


\begin{proof}
If $\liminf_n a_n=+\infty$, the claim follows immediately from
$b_n\to b>0$. Hence assume that $\liminf_n a_n<+\infty$. 
We prove the two inequalities separately.

For the first inequality, let $(a_{n_k})$ be a subsequence such that
$a_{n_k}\to \liminf_{n\to\infty}a_n.$ 
Since $b_{n_k}\to b>0$, we have
$\frac{a_{n_k}}{b_{n_k}}
\longrightarrow
\frac{\liminf_{n\to\infty}a_n}{b}.$ 
Since the liminf of a sequence is no larger than any of its
subsequential limits,
\[
\liminf_{n\to\infty}\frac{a_n}{b_n}
\le
\frac{\liminf_{n\to\infty}a_n}{b}.
\]

For the reverse inequality, let
$\bigl(a_{n_k}/b_{n_k}\bigr)$ be a subsequence such that
$\frac{a_{n_k}}{b_{n_k}}
\longrightarrow
\liminf_{n\to\infty}\frac{a_n}{b_n}.$
Since $b_{n_k}\to b>0$,
\[
a_{n_k}
=
b_{n_k}\frac{a_{n_k}}{b_{n_k}}
\longrightarrow
b\liminf_{n\to\infty}\frac{a_n}{b_n}.
\]
Again using that the liminf is no larger than any subsequential limit,
\[
\liminf_{n\to\infty}a_n
\le
b\liminf_{n\to\infty}\frac{a_n}{b_n}.
\]
Since $b>0$,
$\frac{\liminf_{n\to\infty}a_n}{b}
\le
\liminf_{n\to\infty}\frac{a_n}{b_n}.$ 
Combining the two inequalities proves the claim.
\end{proof}

Take now $a_n = d_i(x^n)^2, b_n = \|g_i(x^n)\|$. Since $g_i(x) \neq 0$ and $g_i$ is continuous, it follows that $b_n \to \|g_i(x)\| > 0$. By lower semicontinuity of $d_i$, $d_i(x)^2 \leq (\liminf_{n \to \infty} d_i(x^n))^2 = \liminf_{n \to \infty} d_i(x^n)^2\,.$ Therefore, we can conclude using Lemma~\ref{lem:liminf}: 
\begin{equation}
\Psi_i(x) = \frac{d_i(x)^2}{\|g_i(x)\|} \leq \frac{\liminf_{n \to \infty} d_i(x^n)^2}{\|g_i(x)\|} 
= \liminf_{n \to \infty} \frac{d_i(x^n)^2}{\|g_i(x^n)\|} =  \liminf_{n \to \infty} \Psi_i(x^n)\,,
\end{equation}
which shows that $\Psi_i$ is lower semicontinuous.
Consequently, $U=-\sum_{i=1}^N\Psi_i$ is upper semicontinuous.\\ 

\noindent\textbf{Proof of \eqref{eq:identification-zero-set}: zeros of the Lyapunov upper bound.} Since each $\Psi_i(x)\ge0$,
$U(x)=0$ if and only if $\Psi_i(x)=0$ for all $i \in [N].$

If $g_i(x)=0$, then $-g_i(x)=0\in N_{\mathcal X_i}(x_i)$ 
and $\Psi_i(x)=0$ by definition.

Suppose now that $g_i(x)\neq0$. Then
$\Psi_i(x)=0$ 
if and only if
$
\Pi_{T_{\mathcal X_i}(x_i)}
\left(
-\frac{g_i(x)}{\|g_i(x)\|}
\right)
=0.$

For a closed convex cone $C$,
$\Pi_C(v)=0$ if and only if $v\in C^\circ$.
Since $T_{\mathcal X_i}(x_i)^\circ
= N_{\mathcal X_i}(x_i)$, 
we obtain
$\Psi_i(x)=0$ if and only if $-\frac{g_i(x)}{\|g_i(x)\|}
\in N_{\mathcal X_i}(x_i).$ 
Since the normal cone is a cone, this is equivalent to
$-g_i(x)\in N_{\mathcal X_i}(x_i).$ 
This proves \eqref{eq:identification-zero-set}. 
The product-space characterization follows from noting that $N_{\mathcal X}(x)
=\prod_{i=1}^N N_{\mathcal X_i}(x_i).$ 
\end{proof}

For compatibility with Theorem~\ref{thm:majewski-et-al4.2}, we
extend $U$ from $\mathcal X$ to the ambient open set $\mathcal U$.
Define
\begin{equation}
\label{eq:extension-U}
\widetilde U(x)
:=
U\bigl(\Pi_{\mathcal X}(x)\bigr),
\qquad x\in\mathcal U.
\end{equation}
Since $\mathcal X$ is nonempty, closed, and convex,
$\Pi_{\mathcal X}$ is continuous. Hence $\widetilde U$ is upper
semicontinuous on $\mathcal U$ as the composition of the upper
semicontinuous function $U$ with the continuous map
$\Pi_{\mathcal X}$. Moreover,
$\widetilde U(x)=U(x), 
\forall x\in\mathcal X,$ 
and $\widetilde U\le0$ on $\mathcal U$.
With a slight abuse of notation, we henceforth denote this extension
again by $U$.

\begin{tcolorbox}[colframe=white!, top=2pt,left=2pt,right=2pt,bottom=2pt] 
\begin{lemma}[Potential stationarity is equivalent to Nash equilibrium]
\label{lem:potential-stationarity-ne}
Under Assumption~\ref{ass:block-pseudoconvex} and
Assumption~\ref{as:non-degeneracy-ord-pot},
$\mathcal S = \operatorname{NE}(\mathcal G),$ 
where $\mathcal S=\{x\in\mathcal X:U(x)=0\}$.
\end{lemma}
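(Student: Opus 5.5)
The proof chains three earlier results. First, \eqref{eq:identification-zero-set} in Lemma~\ref{lem:lyapunov-ineq} gives $\mathcal S=\{x\in\mathcal X:-\nabla\Phi(x)\in N_{\mathcal X}(x)\}$. Because $N_{\mathcal X}(x)=\prod_i N_{\mathcal X_i}(x_i)$, this splits blockwise: $x\in\mathcal S$ iff $-g_i(x)\in N_{\mathcal X_i}(x_i)$ for every $i\in[N]$. Equivalently,
\[
\langle \nabla_i\Phi(x),y_i-x_i\rangle\ge0\quad\text{for all }y_i\in\mathcal X_i,\ i\in[N].
\]

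Second, we transfer this condition from $\Phi$ to the costs using Lemma~\ref{lem:ordinal-potential-gradient-alignment}, whose hypotheses hold here: $C^1$ costs and potential, nonempty interiors, and Assumption~\ref{as:non-degeneracy-ord-pot}. Fix a block $i$. Either $\nabla_iJ_i(x)=\nabla_i\Phi(x)=0$, and then both variational inequalities hold trivially. Or $\nabla_iJ_i(x)=\lambda_i(x)\nabla_i\Phi(x)$ with $\lambda_i(x)>0$, and then multiplying by the positive scalar $\lambda_i(x)$ shows the $\Phi$-inequality holds iff
\[
\langle\nabla_iJ_i(x),y_i-x_i\rangle\ge0\quad\text{for all }y_i\in\mathcal X_i.
\]
Dividing by $\|\nabla_iJ_i(x)\|$ when it is nonzero, this is the condition \eqref{eq:variational-charac}, i.e.\ $\langle D_i(x),y_i-x_i\rangle\ge0$ for all $y_i$. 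Hence $x\in\mathcal S$ iff $\Gap_D(x)=0$.

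Third, Proposition~\ref{prop:ordinal-gap-ne}, which uses Assumption~\ref{ass:block-pseudoconvex} and convexity of the action sets, gives $\Gap_D(x)=0$ iff $x\in\NE(\mathcal G)$. This closes both inclusions.

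There is no serious obstacle. The only point needing care is the boundary case $x_i\in\partial\mathcal X_i$: positive collinearity must hold there, not just in the interior. Lemma~\ref{lem:ordinal-potential-gradient-alignment} already treats this case by approximating from the interior along $x_i^n=(1-\varepsilon_n)x_i+\varepsilon_nc_i$, so we can cite it directly. The necessity direction (a Nash equilibrium satisfies the first-order variational inequality) also needs no extra convexity, since it is part of Proposition~\ref{prop:ordinal-gap-ne}.
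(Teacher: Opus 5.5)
Your proposal is correct and follows essentially the same route as the paper: you identify $\mathcal S$ via \eqref{eq:identification-zero-set} as blockwise stationarity of $\Phi$, transfer this to the costs using Lemma~\ref{lem:ordinal-potential-gradient-alignment} together with Assumption~\ref{as:non-degeneracy-ord-pot}, and conclude with pseudoconvexity for sufficiency and the first-order necessary condition for necessity. The only difference is cosmetic: you package the last step by citing Proposition~\ref{prop:ordinal-gap-ne} through $\Gap_D$, whereas the paper re-runs the pseudoconvexity and first-order-necessity arguments inline.
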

\end{tcolorbox}

\begin{proof}
By \eqref{eq:identification-zero-set} in Lemma~\ref{lem:lyapunov-ineq} ,
$x\in\mathcal S$ if and only if 
$-\nabla_i\Phi(x) \in N_{\mathcal X_i}(x_i), 
\forall i\in[N].$

We first prove that $\mathcal S\subseteq\operatorname{NE}(\mathcal G)$. Let $x\in\mathcal S$, $i \in [N]$. 

If $\nabla_i\Phi(x)=0,$
Assumption~\ref{as:non-degeneracy-ord-pot} gives
$\nabla_iJ_i(x)=0.$ 
Suppose for contradiction that $x_i$ is not a best response to
$x_{-i}$. Then there exists $y_i\in\mathcal X_i$ such that
$J_i(y_i,x_{-i}) < J_i(x_i,x_{-i}).$ 
By blockwise pseudoconvexity,
$\left\langle
\nabla_iJ_i(x),
y_i-x_i
\right\rangle
<0,$ contradicting $\nabla_iJ_i(x)=0$.
Thus $x_i$ is a best response.

Suppose now that
$\nabla_i\Phi(x)\neq0.$ 
By Lemma~\ref{lem:ordinal-potential-gradient-alignment},
$\nabla_iJ_i(x) =
\lambda_i(x)\nabla_i\Phi(x)$ 
for some $\lambda_i(x)>0$.
Since
$-\nabla_i\Phi(x) \in N_{\mathcal X_i}(x_i)$ 
and the normal cone is a cone,
\[
-\nabla_iJ_i(x)
=
-\lambda_i(x)\nabla_i\Phi(x)
\in
N_{\mathcal X_i}(x_i).
\]
Equivalently,
$\left\langle
\nabla_iJ_i(x),
y_i-x_i
\right\rangle
\ge0, 
\forall y_i\in\mathcal X_i.$ 
If there were $y_i\in\mathcal X_i$ satisfying
\[
J_i(y_i,x_{-i})
<
J_i(x_i,x_{-i}),
\]
blockwise pseudoconvexity would imply
$\left\langle
\nabla_iJ_i(x),
y_i-x_i
\right\rangle
<0,$ 
a contradiction. Hence $x_i$ is a best response.
Since this holds for every $i \in [N]$,
$x\in\operatorname{NE}(\mathcal G).$

Conversely, suppose
$x\in\operatorname{NE}(\mathcal G).$ 
For every $i \in [N]$, $x_i$ minimizes the differentiable
function
$y_i\mapsto J_i(y_i,x_{-i})$ 
over the convex set $\mathcal X_i$. The first-order necessary
optimality condition therefore gives
$\left\langle
\nabla_iJ_i(x),
y_i-x_i
\right\rangle
\ge0, 
\forall y_i\in\mathcal X_i,$
or equivalently, 
$-\nabla_iJ_i(x)
\in
N_{\mathcal X_i}(x_i).$ 

If $\nabla_iJ_i(x)=0$, nondegeneracy gives
$\nabla_i\Phi(x)=0$, and consequently
$-\nabla_i\Phi(x)=0 \in N_{\mathcal X_i}(x_i).$ 

If $\nabla_iJ_i(x)\neq0$, Lemma
~\ref{lem:ordinal-potential-gradient-alignment} gives
$\nabla_iJ_i(x) = \lambda_i(x)\nabla_i\Phi(x),
\lambda_i(x)>0.$ 
Hence
\[
-\nabla_i\Phi(x)
=
\frac1{\lambda_i(x)}
\bigl(-\nabla_iJ_i(x)\bigr)
\in
N_{\mathcal X_i}(x_i).
\]
Thus the potential stationarity condition holds for every player,
so
$x\in\mathcal S.$
This proves 
$\mathcal S
=
\operatorname{NE}(\mathcal G).$
\end{proof}

\noindent\textbf{End of the proof of Theorem~\ref{thm:as-comparison-ne}.} 
Applying Theorem~\ref{thm:majewski-et-al4.2} with
$\mathcal K=\mathcal X,
F=-\mathcal D,
V=\Phi$, 
and with the upper semicontinuous extension of $U$ defined in
\eqref{eq:extension-U}, the Lyapunov condition
\eqref{eq:majewski-specialized-Lyapunov} follows from
Lemma~\ref{lem:lyapunov-ineq}, since this extension agrees with the
original $U$ on $\mathcal X$.
Using $\mathcal S=\operatorname{NE}(\mathcal G)$ (see Lemma~\ref{lem:potential-stationarity-ne}) 
and $\operatorname{int}_{\mathbb R}\Phi(\mathcal S)
=\operatorname{int}_{\mathbb R}\Phi(\NE(\mathcal G))=\varnothing$ by assumption, 
we obtain, almost surely,
\begin{equation}
\label{eq:as-cv-NE}
\operatorname{dist}
\bigl(
z_t,\operatorname{NE}(\mathcal G)
\bigr)
\longrightarrow0\,, \quad 
\Phi(z_t)\longrightarrow\Phi_\infty\,, 
\end{equation}
for some finite random variable $\Phi_\infty$.

We now transfer this conclusion to the sequence $(x_t)$. By construction,
$x_{i,t}=
\mathsf Q_{i,\delta_t}(z_{i,t}),$ 
and hence
$\|x_{i,t}-z_{i,t}\|
\le
\frac{R_i}{\rho_i}\delta_t.$ 
Therefore, with
$C_{\mathcal X}
:=
\left(
\sum_{i=1}^N
\frac{R_i^2}{\rho_i^2}
\right)^{1/2}$, 
we have
\begin{equation}
\label{eq:x-z-asymptotic}
\|x_t-z_t\|
\le
C_{\mathcal X}\delta_t
\longrightarrow0.
\end{equation}
Combining
\eqref{eq:as-cv-NE} and
\eqref{eq:x-z-asymptotic},
\begin{align*}
\operatorname{dist}
\bigl(
x_t,
\operatorname{NE}(\mathcal G)
\bigr)
&\le
\operatorname{dist}
\bigl(
z_t,
\operatorname{NE}(\mathcal G)
\bigr)
+
\|x_t-z_t\| \underset{t \to \infty}{\longrightarrow} 0\,. 
\end{align*}
Moreover, uniform continuity of $\Phi$ on the compact set
$\mathcal X$ yields
$|\Phi(x_t)-\Phi(z_t)|\longrightarrow0,$ 
and hence 
$\Phi(x_t)\longrightarrow\Phi_\infty.$

Suppose
$\operatorname{NE}(\mathcal G) =\{x_\star^1,\ldots,x_\star^m\}$ 
is finite. The rest of the proof is standard. For $m\ge2$ let
$\Delta
:=
\min_{j\neq k}\|x_\star^j-x_\star^k\|>0.$ 
Since
$\operatorname{dist}(z_t,\operatorname{NE}(\mathcal G))\to0,$
eventually
\[
z_t\in
\bigcup_{j=1}^m
B(x_\star^j,\Delta/4).
\]
On the other hand, nonexpansiveness of projection and the uniform bound
$\|\widehat D_t\|\le G$ give
\[
\|z_{t+1}-z_t\|
\le
G\eta_t
\longrightarrow0.
\]
Hence, eventually,
$\|z_{t+1}-z_t\|<\Delta/2.$ 
Since distinct balls
$B(x_\star^j,\Delta/4)$ are separated by distance at least
$\Delta/2$, while eventually
$\|z_{t+1}-z_t\|<\Delta/2$, the sequence cannot move between
two distinct such balls after some finite time, and remains in only one such
ball.  Together with
$\operatorname{dist}(z_t,\operatorname{NE}(\mathcal G))\to0$, this
implies
\[
z_t\to X_\infty
\]
for some random variable $X_\infty\in\operatorname{NE}(\mathcal G)$. Finally,
$\|x_t-z_t\|\to0$ gives $x_t\to X_\infty.$ 

If $\operatorname{NE}(\mathcal G)$ is a singleton, the same
conclusion follows immediately from
$\operatorname{dist}
\bigl(
x_t,\operatorname{NE}(\mathcal G)
\bigr)
\longrightarrow0.$ 
This completes the proof.

\subsubsection{Almost sure convergence to Nash in ordinal potential games in the unconstrained case}

We slightly strengthen Assumption~\ref{as:rm-stepsizes} with the condition $\sum_{t=1}^\infty\eta_t\delta_t<\infty$ for our alternative simpler proof based on the Robbins-Monro theorem in the unconstrained case ($\mathcal{X}_i = \mathbb{R}^{d_i}$). The stepsize and radii choices made in the main part for Theorem~\ref{thm:as-comparison-ne}, $\eta_t=\frac{\eta}{t^p}, 
\delta_t=\frac{\delta}{t^q},$
with $1/2<p\leq1$, $q>1-p$ and $\delta \in (0,\rho)$, remains valid and satisfies the following assumption. 

\begin{tcolorbox}[colframe=white!, top=2pt,left=2pt,right=2pt,bottom=2pt] 
\begin{assumption}[Robbins-Monro stepsizes and vanishing resolution]
\label{as:rm-stepsizes2} The stepsizes $(\eta_t)$ and the resolution $(\delta_t)$ satisfy: 
$\sum_{t=1}^\infty\eta_t=\infty, \sum_{t=1}^\infty\eta_t^2<\infty$ 
and  $\sum_{t=1}^\infty\eta_t\delta_t<\infty\,,$ 
with $\eta_t>0, \delta_t \in (0,\rho), \forall t \in \mathbb{N}$ and $\delta_t\to0.$
\end{assumption}
\end{tcolorbox}

In the unconstrained setting $X_i=\mathbb R^{d_i}$, set \(x_t=z_t\), query \(x_{i,t}\pm\delta_tU_{i,t}\), and update \(x_{t+1}=x_t-\eta_t\widehat D_t\) instead of \eqref{eq:cngp}. 

\begin{tcolorbox}[colframe=white!, top=2pt,left=2pt,right=2pt,bottom=2pt] 
\begin{theorem}[Almost-sure convergence to Nash equilibrium]
\label{thm:as-comparison-ne-apdx}
Let $\mathcal G$ be an ordinal potential game with potential $\Phi$. Let Assumptions~\ref{ass:block-pseudoconvex} and~\ref{as:non-degeneracy-ord-pot} hold. 
Suppose that $\Phi$ is $L$-smooth and coercive\footnote{i.e., $\Phi(x)\to+\infty$ as $\|x\|\to\infty\,.$}. 
Let $(x_t)$ be generated by \eqref{eq:cngp} with the projection inactive, where $\widehat D_{i,t}$ is defined by
\eqref{eq:cngp} under the gap-independent noisy
comparison model with $\nu_i\in(0,1/2]$.
Suppose that the stepsizes and comparison radii satisfy
Assumption~\ref{as:rm-stepsizes2}.
Then, with probability one,
\[
\Phi(x_t)\ \text{converges},\qquad 
\|\nabla\Phi(x_t)\|_{2,1}\to0\,, \quad \operatorname{dist}(x_t,\operatorname{NE}(\mathcal G))\to0\,.
\]
If $\operatorname{NE}(\mathcal G)$ is finite, then there exists an
$\operatorname{NE}(\mathcal G)$-valued random variable $X_\infty$ such that
$x_t\to X_\infty$ almost surely. In particular, if $\operatorname{NE}(\mathcal G)$ is a singleton, the iterates converge to it almost surely.
\end{theorem}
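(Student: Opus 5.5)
The plan is a Robbins--Monro / supermartingale argument on $\Phi(x_t)$, using $L$-smoothness and the exact identity $D^\delta_{i,J_i}=D^\delta_{i,\Phi}$ implied by the ordinal-potential property.

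\textbf{Descent inequality.} By the descent lemma,
\[
\Phi(x_{t+1})\le\Phi(x_t)-\eta_t\langle\nabla\Phi(x_t),\widehat D_t\rangle+\tfrac{L}{2}\eta_t^2\|\widehat D_t\|^2 .
\]
Taking conditional expectations and using Lemma~\ref{lem:one-bit-estimator} (unbiasedness for $D^{\delta_t}_\Phi(x_t)$ and $\|\widehat D_t\|\le G$ almost surely) gives
\[
\mathbb E[\Phi(x_{t+1})\mid\mathcal F_t]\le\Phi(x_t)-\eta_t\sum_i\langle g_i(x_t),D^{\delta_t}_{i,\Phi}(x_t)\rangle+\tfrac L2 G^2\eta_t^2 .
\]

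\textbf{Bias control.} I need a blockwise lower bound of the form
\[
\langle g_i(x),D^{\delta}_{i,\Phi}(x)\rangle\ge\|g_i(x)\|-\frac{C L\delta}{\kappa_{d_i}} .
\]
This is essentially the quantitative alignment lemma already invoked in the graph-consistency proof (Lemma~\ref{lem:uniform-potential-alignment-as}), multiplied by $\|g_i\|$. To reprove it, write $\Phi(x_i+\delta u)-\Phi(x_i-\delta u)=2\delta\langle g_i,u\rangle+r$ with $|r|\le L\delta^2$. Then the sign agrees with $\operatorname{sgn}\langle g_i,u\rangle$ unless $|\langle g_i,u\rangle|\le L\delta/2$. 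On that bad set, the loss in $\mathbb E[S\langle g_i,U\rangle]$ is at most $2\cdot L\delta/2$, since on the bad set $|\langle g_i,U\rangle|\le L\delta/2$. Dividing by $\kappa_{d_i}$ and using $\mathbb E|\langle g_i,U\rangle|=\kappa_{d_i}\|g_i\|$ yields the bound with $C$ an absolute constant. Summing over blocks gives
\[
\mathbb E[\Phi(x_{t+1})\mid\mathcal F_t]\le\Phi(x_t)-\eta_t\|\nabla\Phi(x_t)\|_{2,1}+c\,\eta_t\delta_t+c'\eta_t^2 .
\]

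\textbf{Robbins--Siegmund.} Coercivity and continuity make $\Phi$ bounded below, so I may shift $\Phi$ to be nonnegative. The perturbation terms are summable by Assumption~\ref{as:rm-stepsizes2}, so Robbins--Siegmund gives, almost surely, that $\Phi(x_t)$ converges and $\sum_t\eta_t\|\nabla\Phi(x_t)\|_{2,1}<\infty$. Convergence of $\Phi(x_t)$ together with coercivity confines the trajectory to a (random) compact sublevel set $K$. Since $\sum\eta_t=\infty$, this yields $\liminf_t\|\nabla\Phi(x_t)\|_{2,1}=0$.

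\textbf{Upgrading to a full limit (main obstacle).} The standard step-size-weighted oscillation argument should handle this. Suppose $\|\nabla\Phi(x_t)\|_{2,1}>2\varepsilon$ infinitely often while also $<\varepsilon$ infinitely often. Consider the upcrossing intervals $[s_k,t_k]$. Because $\nabla\Phi$ is $L$-Lipschitz, the $\ell_{2,1}$ norm is Lipschitz with constant $\sqrt N L$, so $\varepsilon\le \sqrt N L\|x_{t_k}-x_{s_k}\|$.

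Next decompose
\[
x_{t_k}-x_{s_k}=-\sum\eta_t D^{\delta_t}_\Phi(x_t)+\sum\eta_t e_{t+1} .
\]
The martingale tail $\sum\eta_t e_{t+1}$ converges almost surely, since $\sum\eta_t^2<\infty$ and the noise is bounded, so its contribution over these windows vanishes. The drift part is at most $\sqrt N\sum_{t=s_k}^{t_k-1}\eta_t$. Hence $\sum_{s_k}^{t_k}\eta_t\ge c\varepsilon$ for large $k$. On these windows $\|\nabla\Phi\|_{2,1}\ge\varepsilon$, so summing contradicts $\sum\eta_t\|\nabla\Phi(x_t)\|_{2,1}<\infty$. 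Therefore $\|\nabla\Phi(x_t)\|_{2,1}\to0$.

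\textbf{Nash conclusion.} Take any limit point $\bar x$ in $K$. It has $\nabla\Phi(\bar x)=0$, hence $\nabla_iJ_i(\bar x)=0$ by Assumption~\ref{as:non-degeneracy-ord-pot}. Blockwise pseudoconvexity then makes each $\bar x_i$ a best response (as in Lemma~\ref{lem:potential-stationarity-ne}), so $\bar x\in\NE(\mathcal G)$. The usual compactness contradiction then gives $\operatorname{dist}(x_t,\NE)\to0$.

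\textbf{Finite Nash set.} If $\NE(\mathcal G)$ is finite, I repeat the separated-balls argument from the proof of Theorem~\ref{thm:as-comparison-ne}, using $\|x_{t+1}-x_t\|\le G\eta_t\to0$.
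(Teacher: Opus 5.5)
Your proposal is correct and follows essentially the paper's proof: the descent lemma combined with the finite-resolution alignment bound (Lemma~\ref{lem:uniform-potential-alignment-as}), Robbins--Siegmund, an upcrossing argument to upgrade $\liminf$ to $\lim$, and identification of limit points with Nash equilibria via nondegeneracy and pseudoconvexity. The only difference is cosmetic and lies in the upcrossing step: you split the increments into drift plus a convergent martingale tail, whereas the paper simply uses the almost-sure bound $\|x_{t+1}-x_t\|\le\sqrt{C_{2,\nu}}\,\eta_t$, which you yourself invoke later for the finite-Nash-set step.
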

\end{tcolorbox}

A convenient admissible choice is
$\eta_t=\frac{\eta}{t^p}, 
\delta_t=\frac{\delta}{t^q},$
with $1/2<p\leq1$ and $q>1-p$. 

The next estimate is useful because it controls the quantity that enters
potential descent directly. In contrast, a uniform bound on
$\|D_i^\delta-D_i\|$ necessarily becomes ill-conditioned as the gradient
vanishes.

\begin{tcolorbox}[colframe=white!, top=2pt,left=2pt,right=2pt,bottom=2pt] 
\begin{lemma}[Finite-resolution potential alignment]
\label{lem:uniform-potential-alignment-as}
Under the assumptions of Theorem~\ref{thm:as-comparison-ne}, for every player $i \in [N]$,
every $\delta>0$ and every $x \in \mathcal{X}^{\delta}$,
\begin{equation}
\label{eq:first-ineq-to-prove}
\left\langle
\nabla_i\Phi(x),
D_i^\delta(x)
\right\rangle
\ge
\|\nabla_i\Phi(x)\|
-
\frac{L\delta}{\kappa_{d_i}}.
\end{equation}
If in addition $\nabla_i\Phi(x)\neq0$, then $D_i(x)
=
\frac{\nabla_i\Phi(x)}
{\|\nabla_i\Phi(x)\|}$ and 
\begin{equation}
\label{eq:second-ineq-to-prove}
\left\langle
\nabla_i\Phi(x),
D_i^\delta(x)-D_i(x)
\right\rangle
\ge
-\frac{L\delta}{\kappa_{d_i}}.
\end{equation}
\end{lemma}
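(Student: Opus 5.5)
Fix a player $i$, a radius $\delta>0$ and a profile $x\in\mathcal X^\delta$, and write $g:=\nabla_i\Phi(x)$. The plan is to pass from the players' comparisons to the potential's, and then compare the symmetric finite difference of $\Phi$ with its linearization along each direction $U_i$. By the ordinal-potential identity recalled before Assumption~\ref{as:non-degeneracy-ord-pot}, $S_i^\delta(x,u)$ computed from $J_i$ equals the sign of $\Delta(u):=\Phi(x_i+\delta u,x_{-i})-\Phi(x_i-\delta u,x_{-i})$. Hence $D_i^\delta(x)=D^\delta_{i,\Phi}(x)$, and
\[
\kappa_{d_i}\langle g,D_i^\delta(x)\rangle=\mathbb E\bigl[\operatorname{sgn}(\Delta(U_i))\,\langle g,U_i\rangle\bigr].
\]
The comparison points $x_i\pm\delta u$ lie in $\mathcal X_i$ because $x\in\mathcal X^\delta$, and the whole segment between them is in the convex set $\mathcal X$. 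So $L$-smoothness of $\Phi$ gives the Taylor bound $|\Delta(u)-2\delta\langle g,u\rangle|\le L\delta^2\|u\|^2=L\delta^2$, since the quadratic terms at $\pm\delta u$ each contribute at most $L\delta^2/2$.

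Next compare $\operatorname{sgn}(\Delta(u))\langle g,u\rangle$ with $|\langle g,u\rangle|$ pointwise in $u$. The two agree whenever $\operatorname{sgn}(\Delta(u))=\operatorname{sgn}(\langle g,u\rangle)$. If the signs disagree, or $\Delta(u)=0$, then by the Taylor bound $2\delta|\langle g,u\rangle|\le L\delta^2$, i.e.\ $|\langle g,u\rangle|\le L\delta/2$. On that set the product is at least $-|\langle g,u\rangle|$, so the loss relative to $|\langle g,u\rangle|$ is at most $2|\langle g,u\rangle|\le L\delta$. This gives the pointwise inequality
\[
\operatorname{sgn}(\Delta(u))\langle g,u\rangle\ \ge\ |\langle g,u\rangle|-L\delta,\qquad \forall u\in\mathbb S^{d_i-1}.
\]
Taking expectations and using rotational invariance, $\mathbb E|\langle g,U_i\rangle|=\kappa_{d_i}\|g\|$ (the same fact used in the proof of Proposition~\ref{prop:comparison-field}(iii)). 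Dividing by $\kappa_{d_i}$ then yields \eqref{eq:first-ineq-to-prove}. The case $g=0$ is trivial, since the left side is $0$ and the right side is negative.

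For \eqref{eq:second-ineq-to-prove}, assume $g\neq0$. Then Assumption~\ref{as:non-degeneracy-ord-pot} gives $\nabla_iJ_i(x)\neq0$, and Lemma~\ref{lem:ordinal-potential-gradient-alignment} gives $D_i(x)=g/\|g\|$. Hence $\langle g,D_i(x)\rangle=\|g\|$, and subtracting this from \eqref{eq:first-ineq-to-prove} gives the claim.

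The main obstacle is the careful handling of the sign-mismatch event, including ties $\Delta(u)=0$ where $\operatorname{sgn}=0$; the case analysis above covers it. A second point to check is the validity of the Taylor bound, which needs the segment $x_i\pm\delta u$ to lie inside the open neighborhood of $\mathcal X$ on which $\Phi$ is $L$-smooth.
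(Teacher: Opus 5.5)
Your proposal is correct and follows essentially the same route as the paper. Both arguments transfer the comparison signs to $\Phi$ via the ordinal-potential property and apply the $L$-smooth Taylor bound to the symmetric difference. Both then prove the pointwise inequality $\operatorname{sgn}(\Delta(u))\langle g,u\rangle\ge|\langle g,u\rangle|-L\delta$ by the sign-agreement/mismatch case split, take expectations using $\mathbb E|\langle g,U_i\rangle|=\kappa_{d_i}\|g\|$, and obtain the second inequality from Lemma~\ref{lem:ordinal-potential-gradient-alignment}.
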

\end{tcolorbox}

\begin{proof}
Let $x \in \mathcal{X}, i \in [N].$ Recalling the definition of the finite resolution vector field~$D_i^{\delta}$, we have 
\begin{equation}
\label{eq:inner-prod-rewriting}
\left\langle
\nabla_i\Phi(x),
D_i^\delta(x)
\right\rangle 
= \frac{1}{\kappa_{d_i}} \langle \nabla_i \Phi(x), \mathbb{E}_{U_i}[S_i^{\delta}(x,U_i) U_i] \rangle 
= \frac{1}{\kappa_{d_i}} 
\mathbb{E}_{U_i}[S_i^{\delta}(x,U_i) \langle \nabla_i \Phi(x), U_i \rangle]\,. 
\end{equation}
We will prove first that for any $u \in \mathbb{R}^{d_i}$ such that $\|u\|_2 = 1$, 
\begin{equation}
\label{eq:to-prove-bias-control}
S_i^{\delta}(x,u) \langle \nabla_i \Phi(x), u \rangle \geq |\langle \nabla_i \Phi(x), u \rangle| - L \delta\,.
\end{equation}
Define for any $\delta > 0$ and any $u$ s.t. $x_i \pm \delta u \in \mathcal{X}_i$ and $\|u\|_2 = 1$ the central difference quotient: 
\begin{equation}
\Delta_{\Phi, i}^\delta(u) := 
\frac{
\Phi(x_i+\delta u,x_{-i})
-
\Phi(x_i-\delta u,x_{-i})
}{2\delta}\,.
\end{equation}
By $L$-smoothness of $\Phi$, we have: 
\begin{align}
|\Phi(x_i+\delta u,x_{-i}) - 
(\Phi(x) + \delta \langle \nabla_i \Phi(x), u \rangle)| &\leq \frac{L \delta^2}{2}\,,\\
|\Phi(x_i -\delta u,x_{-i}) - 
(\Phi(x) - \delta \langle \nabla_i \Phi(x), u \rangle)| &\leq \frac{L \delta^2}{2}\,.
\end{align}
Using both the above inequalities, it follows that: 
\begin{align} 
\label{eq:diff-delta-inner-prod} 
&|\Delta_{\Phi,i}^{\delta}(u) - \langle \nabla_i \Phi(x), u \rangle| \nonumber\\
&= \frac{1}{2 \delta}\left| \Phi(x_i+\delta u,x_{-i})-\Phi(x_i-\delta u,x_{-i}) - 2\delta \langle \nabla_i \Phi(x), u \rangle \right| \nonumber\\
&= \frac{1}{2 \delta} \left|\Phi(x_i+\delta u,x_{-i}) - (\Phi(x) + \delta \langle \nabla_i \Phi(x), u \rangle ) - (\Phi(x_i -\delta u,x_{-i}) - 
(\Phi(x) - \delta \langle \nabla_i \Phi(x), u \rangle )) \right|\nonumber\\
&\leq \frac{1}{2 \delta} \left(\frac{L \delta^2}{2} +  \frac{L \delta^2}{2} \right) = \frac{\delta L}{2}\,.
\end{align}
By the ordinal-potential property,
$S_i^\delta(x,u) = \operatorname{sgn}(\Delta_{\Phi, i}^{\delta}(u)).$
If $\operatorname{sgn}(\Delta_{\Phi,i}^{\delta}(u)) =\operatorname{sgn}(\langle \nabla_i \Phi(x), u \rangle)$,
then
$S_i^\delta(x,u)\langle \nabla_i \Phi(x), u \rangle=|\langle \nabla_i \Phi(x), u \rangle|.$
Otherwise, the signs are opposite and \eqref{eq:diff-delta-inner-prod} implies that 
$|\langle \nabla_i \Phi(x), u \rangle|
\le
\frac{\delta L}{2},$
and hence
\begin{equation}
S_i^\delta(x,u) \langle \nabla_i \Phi(x), u \rangle
=
-|\langle \nabla_i \Phi(x), u \rangle|
\ge
|\langle \nabla_i \Phi(x), u \rangle|-L\delta.
\end{equation}
Therefore, the above inequality holds for any $u$, in both cases and even if one of the two quantities is zero, which proves \eqref{eq:to-prove-bias-control}. 
Using now \eqref{eq:inner-prod-rewriting} and taking expectation in \eqref{eq:to-prove-bias-control} yields: 
\begin{equation}
\left\langle
\nabla_i\Phi(x),
D_i^\delta(x)
\right\rangle  
= \frac{1}{\kappa_{d_i}} 
\mathbb{E}_{U_i}[S_i^{\delta}(x,U_i) \langle \nabla_i \Phi(x), U_i \rangle] 
\geq \frac1{\kappa_{d_i}}
\mathbb E
[|\langle \nabla_i \Phi(x) ,U_i\rangle|]
-
\frac{L\delta}{\kappa_{d_i}}\,. 
\end{equation}
Rotational invariance of the uniformly distributed random variable $U_i$ on the sphere gives
$\mathbb E[|\langle \nabla_i \Phi(x),U_i\rangle|]
= \kappa_{d_i} \cdot\|\nabla_i \Phi(x)\|$ (see Lemma 3.1 in~\citet{fu-wang-yan23minimax-spherical})
and this concludes the proof of \eqref{eq:first-ineq-to-prove}. Equation \eqref{eq:second-ineq-to-prove} follows from recalling that $D_i(x)=
\frac{\nabla_i\Phi(x)}
{\|\nabla_i\Phi(x)\|}$ when $\nabla_i\Phi(x)\neq0$ by Lemma~\ref{lem:ordinal-potential-gradient-alignment}, which implies that $\langle \nabla_i \Phi(x),D_i(x)\rangle=\|\nabla_i \Phi(x)\|\,.$
\end{proof}

We are now ready to prove Theorem~\ref{thm:as-comparison-ne-apdx}. 
\begin{proof}
We divide the proof into four steps.\\

\noindent\textbf{Step 1: Conditional Lyapunov drift.}
Since $\nabla\Phi$ is $L$-Lipschitz, we have 
\[
\Phi(x_{t+1})
\le
\Phi(x_t)
+
\left\langle
\nabla\Phi(x_t),x_{t+1} - x_t
\right\rangle
+
\frac{L}{2}
\|x_{t+1}-x_t\|^2.
\]
Using the update rule of the algorithm, it follows that: 
\begin{equation}
\Phi(x_{t+1})
\le
\Phi(x_{t})
-
\eta_t
\left\langle
\nabla\Phi(x_{t}),\widehat D_t
\right\rangle
+
\frac{L\eta_t^2}{2}
\|\widehat D_t\|^2.
\end{equation}
Taking conditional expectation with respect to $\mathcal F_t$ and using Lemma~\ref{lem:one-bit-estimator} and
the definition of $D_i^{\delta}$,
\begin{equation}
\label{eq:master}
\mathbb E[
\Phi(x_{t+1})\mid\mathcal F_t
]
\leq 
\Phi(x_{t})
-
\eta_t
\sum_{i=1}^N
\left\langle
\nabla_i\Phi(x_t),
D_i^{\delta_t}(x_t)
\right\rangle 
+ \frac{LC_{2,\nu}}{2}\eta_t^2\,,
\end{equation}
where $C_{2,\nu} = \sum_{i=1}^N\frac1{(2\nu_i\kappa_{d_i})^2}.$

Lemma~\ref{lem:uniform-potential-alignment-as} yields
\begin{equation}
\label{eq:bias-control}
\sum_{i=1}^N
\left\langle
\nabla_i\Phi(x_t),
D_i^{\delta_t}(x_t)
\right\rangle
\ge
\|\nabla\Phi(x_t)\|_{2,1}
-
LC_1\delta_t\,, 
\end{equation}
where $C_1 = \sum_{i=1}^N\frac1{\kappa_{d_i}}.$
Hence, combining \eqref{eq:master} with \eqref{eq:bias-control}, we obtain: 
\begin{equation}
\label{eq:master-lyap-drift}
\mathbb E[
\Phi(x_{t+1})\mid\mathcal F_t
]
\leq 
\Phi(x_{t})
-
\eta_t
\|\nabla\Phi(x_{t})\|_{2,1}
+
LC_1\eta_t\delta_t
+
\frac{LC_{2,\nu}}{2}\eta_t^2.
\end{equation}

\noindent\textbf{Step 2: Application of Robbins-Siegmund's theorem.}
Since $\Phi$ is continuous and coercive, it is bounded below by $\Phi_\star=\inf_{x\in\mathcal X}\Phi(x)>-\infty.$
Set
\begin{equation}
V_t :=\Phi(x_t)-\Phi_\star\ge0\,, \quad a_t :=LC_1\eta_t\delta_t
+\frac{LC_{2,\nu}}{2}\eta_t^2\,,\quad  \forall t \geq 0\,. 
\end{equation}
We can then rewrite \eqref{eq:master-lyap-drift} as follows: 
\begin{equation}
\mathbb E[V_{t+1}\mid\mathcal F_t] \le V_t -
\eta_t \|\nabla\Phi(x_t)\|_{2,1} + a_t.
\end{equation}

By Assumption~\ref{as:rm-stepsizes2}, $\sum_{t=1}^\infty a_t<\infty$
and we can apply the Robbins-Siegmund almost supermartingale theorem \citep{robbins-siegmund71} to obtain that
$V_t$ converges almost surely to a finite random variable and that
\begin{equation}
\label{eq:summability-eta-grad}
\sum_{t=1}^\infty
\eta_t
\|\nabla\Phi(x_t)\|_{2,1}
<
\infty
\qquad\text{a.s.}
\end{equation}

Since $\Phi(x_t)$ converges (as $V_t$ does) and $\Phi$ is coercive, the sequence
of play $(x_t)$ is almost surely bounded.\\

\noindent\textbf{Step 3: First-order stationarity guarantee.}
Define for any $x \in \mathcal{X}, t \geq 1$, 
\[
r(x)
:=
\|\nabla\Phi(x)\|_{2,1}
=
\sum_{i=1}^N
\|\nabla_i\Phi(x)\|,
\qquad
r_t :=r(x_t).
\]
Since 
$\sum_t\eta_t=\infty$ by Assumption~\ref{as:rm-stepsizes2} and $\sum_{t=1}^\infty \eta_t\|\nabla\Phi(x_t)\|_{2,1}<\infty$ as shown in \eqref{eq:summability-eta-grad}, it follows that 
$\liminf_{t\to\infty}r_t=0.$

We show that in fact $r_t\to0$. The proof is similar to the proof of $\lim_{t\to\infty}\|\nabla_i\Phi(x(t))\|=0$ in continuous time in the proof of Theorem~\ref{thm:cv-BNPG-ordinal-pot}. However, here, we deal with the discrete time setting and we provide a complete proof using again standard arguments. 

First, by the reverse triangle
inequality and the Lipschitz continuity of $\nabla\Phi$, for any $x, y \in \mathcal{X},$
\begin{equation}
\label{eq:lip}
|r(x)-r(y)|
\le \sum_{i=1}^N \|\nabla_i\Phi(x)-\nabla_i\Phi(y)\|
\le \sqrt N\,
\|\nabla\Phi(x)-\nabla\Phi(y)\|
\le \sqrt N L\|x-y\|.
\end{equation}
Furthermore, using again the update rule of the algorithm, we have: 
\begin{equation}
\label{eq:increments-xt-bound}
\|x_{t+1}-x_t\| = \eta_t \|\hat{D}_t\| = \eta_t \left\|\left(\frac{Y_{i,t}}{2 \nu_i \kappa_{d_i}} U_{i,t}\right)_{i \in [N]} \right\| \leq 
\sqrt{C_{2,\nu}}\,\eta_t.
\end{equation}
Combining the above inequality with \eqref{eq:lip} yields for any $t \geq 1$,  
\begin{equation}
\label{eq:stability}
|r_{t+1}-r_t|
\le
K\eta_t,
\qquad
K:=L\sqrt{NC_{2,\nu}}.
\end{equation}

Suppose, towards a contradiction, that
$\limsup_{n\to\infty}r_n>0.$
Together with $\liminf_{t\to\infty}r_t=0$, this implies that there exists
$\varepsilon>0$ and infinitely many disjoint upcrossings from below
$\varepsilon$ to above $2\varepsilon$. 
Consider one such sufficiently late upcrossing. Let $p<q$ be chosen so
that $r_p<\varepsilon, r_q\ge2\varepsilon,$
and $r_n\ge\varepsilon$ for $p<n\le q$.
Using \eqref{eq:stability}, we have 
\begin{equation*}
\varepsilon
< r_q-r_p
\le K\sum_{t=p}^{q-1}\eta_t.
\end{equation*}
Since $\eta_t\to0$, for all sufficiently large $p$, $K \eta_p\le\frac{\varepsilon}{2}$ which implies, using the above inequality, that 
\begin{equation}
\sum_{t=p+1}^{q-1}\eta_t
\ge
\frac{\varepsilon}{2K}\,,
\end{equation}
up to adjustments of the endpoints in the sum. On this interval,
$r_t\ge\varepsilon$, and hence
\[
\sum_{t=p+1}^{q-1}\eta_t r_t
\ge
\frac{\varepsilon^2}{2K}.
\]
Infinitely many disjoint upcrossings would therefore force
$\sum_t\eta_t r_t=\infty,$ contradicting \eqref{eq:summability-eta-grad}. Thus
$r_t\to0$ almost surely.\\ 

\noindent\textbf{Step 4: Identification of the limit set with Nash equilibrium.} The proof follows similar lines to the proof of Lemma~\ref{lem:potential-stationarity-ne} and the end of the proof of Theorem~\ref{thm:as-comparison-ne}. 

Since the sequence $(\Phi(x_t))$ converges and $\Phi$ is coercive, $(x_t)$ is bounded. If $\operatorname{dist}\bigl(x_t,\operatorname{NE}(\mathcal G)
\bigr) \not\to 0$, there exists a converging subsequence $x_{t_k} \to \bar{x}$ remaining a positive distance from $\operatorname{NE}(\mathcal G)$. Since $\|\nabla\Phi(x_{t_k})\|_{2,1} \to 0$, $\nabla_i \Phi(\bar{x}) = 0$ for every $i \in [N]$ by continuity of $\nabla_i \Phi$. Assumption~\ref{as:non-degeneracy-ord-pot} then gives $\nabla_i J_i(\bar{x}) = 0$ for all $i \in [N]$, and Assumption~\ref{ass:block-pseudoconvex} implies that $\bar{x} \in \NE(\mathcal{G}),$ a contradiction. Hence  $\operatorname{dist}\bigl(x_t,\operatorname{NE}(\mathcal G)
\bigr) \rightarrow 0$. 

Moreover, $\|x_{t+1} - x_t\| \leq 
\sqrt{C_{2,\nu}}\,\eta_t$ by \eqref{eq:increments-xt-bound}. 
If $\NE(\mathcal{G})$ is finite, the same separation argument used at the end of the proof of Theorem~\ref{thm:as-comparison-ne} prevents the iterates from switching infinitely often between neighborhoods of distinct equilibria. Hence $x_t \rightarrow X_{\infty} \in \NE(\mathcal{G})$ almost surely. 
\end{proof}

\subsection{Finite-time first-order stationarity}
\label{apdx:finite-time-fos}

We conclude the unconstrained analysis with a finite-time first-order stationarity guarantee to complement the asymptotic convergence result of Theorem~\ref{thm:as-comparison-ne-apdx} above. 

\begin{tcolorbox}[colframe=white!, top=2pt,left=2pt,right=2pt,bottom=2pt] 
\begin{theorem}[Finite-time stationarity]
\label{thm:comparison-potential-descent}
Suppose $\Phi$ is $L$-smooth and bounded below by $\Phi_{\inf}$. Then for any time horizon $T > 0$, 
with constant stepsize $\eta_t$ and comparison radius $\delta_t$ of order $T^{-1/2}$, 
a uniformly sampled iterate $x_{\tau}$ from $\{x_1, \ldots, x_T\}$ satisfies
$\mathbb E\|\nabla\Phi(x_\tau)\|_{2,1}
=\mathcal O(T^{-1/2}).$
\end{theorem}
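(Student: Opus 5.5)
The plan is to reuse the conditional Lyapunov drift inequality \eqref{eq:master-lyap-drift} from Step~1 of the proof of Theorem~\ref{thm:as-comparison-ne-apdx}, now with constant parameters $\eta_t=\eta$ and $\delta_t=\delta$. That inequality needs only $L$-smoothness of $\Phi$, unbiasedness and the second-moment bound from Lemma~\ref{lem:one-bit-estimator}, and the finite-resolution alignment bound of Lemma~\ref{lem:uniform-potential-alignment-as}. I will state the setting as the unconstrained one with the projection inactive, as in Theorem~\ref{thm:as-comparison-ne-apdx}, and keep the ordinal-potential structure, so that $D^{\delta}_{i,J_i}=D^{\delta}_{i,\Phi}$. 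Coercivity is not needed here: a lower bound $\Phi_{\inf}$ suffices, because we never need the iterates to stay bounded.

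First I take total expectations in
\[
\mathbb E[\Phi(x_{t+1})\mid\mathcal F_t]\le\Phi(x_t)-\eta\|\nabla\Phi(x_t)\|_{2,1}+LC_1\eta\delta+\tfrac{LC_{2,\nu}}{2}\eta^2 ,
\]
and sum over $t=1,\dots,T$. The potential values telescope, and using $\mathbb E\Phi(x_{T+1})\ge\Phi_{\inf}$ gives
\[
\frac1T\sum_{t=1}^T\mathbb E\|\nabla\Phi(x_t)\|_{2,1}\le\frac{\Phi(x_1)-\Phi_{\inf}}{\eta T}+LC_1\delta+\frac{LC_{2,\nu}}{2}\eta .
\]
If $\tau$ is drawn uniformly from $[T]$, independently of everything else, the left-hand side equals $\mathbb E\|\nabla\Phi(x_\tau)\|_{2,1}$.

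Next I choose $\eta=\eta_0/\sqrt T$ and $\delta=\delta_0/\sqrt T$ with $\delta_0<\rho$; in the unconstrained case any $\delta_0>0$ works. All three terms on the right are then $\mathcal O(T^{-1/2})$. Optimizing $\eta_0=\sqrt{2(\Phi(x_1)-\Phi_{\inf})/(LC_{2,\nu})}$ gives the explicit bound
\[
\sqrt{\frac{2LC_{2,\nu}(\Phi(x_1)-\Phi_{\inf})}{T}}+\frac{LC_1\delta_0}{\sqrt T}.
\]
Since $C_{2,\nu}=\Theta(\sum_i d_i/\nu_i^2)$ and $C_1=\Theta(\sum_i\sqrt{d_i})$, this also makes the dimension and noise dependence visible.

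The main obstacle is checking that \eqref{eq:master-lyap-drift} really holds here. Lemma~\ref{lem:uniform-potential-alignment-as} was stated under the hypotheses of Theorem~\ref{thm:as-comparison-ne}, but its proof uses only $L$-smoothness and the ordinal-potential sign identity $S_i^\delta=\operatorname{sgn}(\Delta^\delta_{\Phi,i})$. Those hold globally in the unconstrained setting, and the feasibility requirement $x\in\mathcal X^\delta$ becomes vacuous. I will note this explicitly. If a constrained version were intended instead, I would replace the descent lemma with a projected analysis, and I expect that the $\ell_{2,1}$ gradient norm would have to be replaced by a projected-field measure such as $\Psi_i$ from Lemma~\ref{lem:lyapunov-ineq}. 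I would not pursue that unless the statement requires it.
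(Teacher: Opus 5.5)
Your proposal is correct and follows the paper's own argument. You take total expectations in the drift inequality \eqref{eq:master-lyap-drift}, telescope, use $\Phi_{\inf}$, and set $\eta_t,\delta_t\propto T^{-1/2}$; the paper states the bound for general $(\eta_t,\delta_t)$ with $\tau$ drawn with probability proportional to $\eta_t$, which reduces to your uniform sampling for constant steps. Your remark that Lemma~\ref{lem:uniform-potential-alignment-as} needs only $L$-smoothness and the ordinal-potential sign identity, so that it applies in the unconstrained setting, correctly makes explicit a point the paper leaves implicit.
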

\end{tcolorbox}
We provide a more detailed statement of this result in Theorem~\ref{thm:comparison-potential-descent-apdx} with the dependence on all constants of the problem (potential gap $\Phi(x_1)-\Phi_{\inf}$, dimension $d_i$, number $N$ of players).

We now give a more detailed version of Theorem~\ref{thm:comparison-potential-descent} with explicit dependence on the problem constants.  

Let
$C_1 := \sum_{i=1}^N \frac1{\kappa_{d_i}} = \Theta\left(\sum_{i=1}^N\sqrt{d_i}\right),
C_{2,\nu} := \sum_{i=1}^N \frac1{(2 \nu_i\kappa_{d_i})^2} =
\Theta\left(\sum_{i=1}^N \frac{d_i}{\nu_i^2} \right).$

\begin{tcolorbox}[colframe=white!, top=2pt,left=2pt,right=2pt,bottom=2pt] 
\begin{theorem}[Finite-time stationarity]
\label{thm:comparison-potential-descent-apdx}
Suppose $\Phi$ is $L$-smooth and bounded below by $\Phi_{\inf}$ along
the iterates of the algorithm. Then for any time horizon $T > 0$, 
\begin{equation}
\mathbb E
\|\nabla\Phi(x_\tau)\|_{2,1}
\le
\frac{
\Delta_\Phi
+
LC_1\sum_{t=1}^T\eta_t\delta_t
+
\frac{LC_{2,\nu}}{2}\sum_{t=1}^T\eta_t^2
}{
\sum_{t=1}^T\eta_t
}\,,
\end{equation}
where $\Delta_\Phi =\Phi(x_1)-\Phi_{\inf}.$
and $\tau$ is a random variable sampled according to
$\mathbb P(\tau=t) =
\frac{\eta_t}{\sum_{s=1}^T\eta_s}\,.$
Setting $
\eta_t=\frac{a}{\sqrt T},
\delta_t=\frac{b}{\sqrt T}$
for fixed constants $a,b>0$ yields 
$\mathbb E\|\nabla\Phi(x_\tau)\|_{2,1} = \mathcal{O}(T^{-1/2})$
where $\tau$ is uniform on
$\{1,\ldots,T\}$. 
\end{theorem}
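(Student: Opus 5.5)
The plan is to take the one-step descent inequality already derived in Step~1 of the proof of Theorem~\ref{thm:as-comparison-ne-apdx}, iterate it in full expectation, and telescope. That step used only $L$-smoothness of $\Phi$, the unbiasedness and second-moment bound of Lemma~\ref{lem:one-bit-estimator}, and the alignment bound of Lemma~\ref{lem:uniform-potential-alignment-as}. Coercivity and projection were not needed there. Concretely, for every $t$,
\[
\mathbb E[\Phi(x_{t+1})\mid\mathcal F_t]\le \Phi(x_t)-\eta_t\|\nabla\Phi(x_t)\|_{2,1}+LC_1\eta_t\delta_t+\tfrac{LC_{2,\nu}}{2}\eta_t^2 .
\]
Two ingredients go into this. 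First, the smoothness expansion combined with $\mathbb E[\widehat D_t\mid\mathcal F_t]=D^{\delta_t}(x_t)$ and $\mathbb E[\|\widehat D_t\|^2\mid\mathcal F_t]\le C_{2,\nu}$. Second, the blockwise bound $\langle\nabla_i\Phi(x_t),D_i^{\delta_t}(x_t)\rangle\ge\|\nabla_i\Phi(x_t)\|-L\delta_t/\kappa_{d_i}$, summed over $i$.

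Next I take total expectations and sum over $t=1,\dots,T$. The $\Phi$ terms telescope, leaving
\[
\sum_{t=1}^T\eta_t\,\mathbb E\|\nabla\Phi(x_t)\|_{2,1}\le \Phi(x_1)-\mathbb E\Phi(x_{T+1})+LC_1\sum_t\eta_t\delta_t+\tfrac{LC_{2,\nu}}{2}\sum_t\eta_t^2 .
\]
Because $\Phi$ is bounded below by $\Phi_{\inf}$ along the iterates, $\Phi(x_1)-\mathbb E\Phi(x_{T+1})\le\Delta_\Phi$. I then divide by $\sum_t\eta_t$. Let $\tau$ be independent of the algorithm's randomness with $\mathbb P(\tau=t)\propto\eta_t$. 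Then the left-hand side is exactly $\mathbb E\|\nabla\Phi(x_\tau)\|_{2,1}$, which gives the displayed bound.

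For the rate, set $\eta_t=a/\sqrt T$ and $\delta_t=b/\sqrt T$. The weights become uniform, so $\tau$ is uniform on $\{1,\dots,T\}$. We have $\sum_t\eta_t=a\sqrt T$, $\sum_t\eta_t\delta_t=ab$, and $\sum_t\eta_t^2=a^2$. The bound is therefore $\bigl(\Delta_\Phi+LC_1ab+LC_{2,\nu}a^2/2\bigr)/(a\sqrt T)=\mathcal O(T^{-1/2})$. One caveat: feasibility of $\delta_t<\rho$ is automatic for large $T$, and in the unconstrained setting it is vacuous.

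The main obstacle is making sure the first step is legitimate in this generality. Lemma~\ref{lem:uniform-potential-alignment-as} is stated under the assumptions of Theorem~\ref{thm:as-comparison-ne}, so I would check that its proof only uses two things: the ordinal-potential sign identity $S_i^\delta=\operatorname{sgn}(\Delta^\delta_{\Phi,i})$ and $L$-smoothness of $\Phi$ on the segment $x_i\pm\delta u$. Both hold in the unconstrained setting. The other technical point is integrability: $\Phi(x_t)$ must have finite expectation so that the telescoping is valid. This follows from the almost-sure increment bound $\|x_{t+1}-x_t\|\le\sqrt{C_{2,\nu}}\eta_t$, because the iterates then stay in a deterministic bounded set for fixed $T$, and $\Phi$ is continuous.
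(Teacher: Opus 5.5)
Your proposal is correct and matches the paper's proof: you take the conditional drift inequality from Step~1 of Theorem~\ref{thm:as-comparison-ne-apdx}, take total expectations, telescope, use the lower bound $\Phi_{\inf}$, and read the $\eta_t$-weighted average as $\mathbb E\|\nabla\Phi(x_\tau)\|_{2,1}$. Your other additions are sound and are left implicit in the paper: the integrability check via the deterministic increment bound, the observation that only the first inequality of Lemma~\ref{lem:uniform-potential-alignment-as} is needed (and it uses only the sign identity and smoothness), and the explicit constant in the $\mathcal O(T^{-1/2})$ rate.
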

\end{tcolorbox}

\begin{proof}
Recall from \eqref{eq:master-lyap-drift} that 
\begin{equation}
\mathbb E[
\Phi(x_{t+1})\mid\mathcal F_t
]
\leq 
\Phi(x_{t})
-
\eta_t
\|\nabla\Phi(x_{t})\|_{2,1}
+
LC_1\eta_t\delta_t
+
\frac{LC_{2,\nu}}{2}\eta_t^2.
\end{equation}
Taking total expectation in the above inequality, summing over $t = 1, \dots, T$, using the lower bound
$\Phi_{\inf}$, rearranging and dividing by $\sum_{t=1}^T\eta_t$ yields: 
\begin{equation}
\frac{\sum_{t=1}^T
\eta_t
\mathbb E
\left[
\|\nabla\Phi(x_t)\|_{2,1}
\right]}{\sum_{t=1}^T \eta_t}
\le
\frac{
\Delta_\Phi
+
LC_1
\sum_{t=1}^T
\eta_t\delta_t
+
\frac{LC_{2,\nu}}{2}
\sum_{t=1}^T
\eta_t^2}{\sum_{t=1}^T \eta_t},
\end{equation}
where
$\Delta_\Phi =\Phi(x_1)-\Phi_{\inf}.$ 
The above can be rewritten equivalently using 
a random variable $\tau$ sampled according to
$\mathbb P(\tau=t) =
\frac{\eta_t}{\sum_{s=1}^T\eta_s},$ as follows: 
\begin{equation}
\mathbb E
\|\nabla\Phi(x_\tau)\|_{2,1}
\le
\frac{
\Delta_\Phi
+
LC_1\sum_{t=1}^T\eta_t\delta_t
+
\frac{LC_{2,\nu}}{2}\sum_{t=1}^T\eta_t^2
}{
\sum_{t=1}^T\eta_t
}.
\end{equation}

\end{proof}

\end{document}